\newcommand{\cl}{C \kern -0.1em \ell}     %Clifford algebra
\newcommand{\Pin}{\mathbf{Pin}}
\newcommand{\Mat}{{\rm Mat}}
  \newcommand{\be}{{\bf e}}
  \newcommand{\bv}{{\bf v}}
\newcommand{\BK}{\mathbb{K}}
\newcommand{\BC}{\mathbb{C}}
\newcommand{\BR}{\mathbb{R}}
\newcommand{\BH}{\mathbb{H}}
\newcommand{\BZ}{\mathbb{Z}}
\newcommand{\spn}{\mbox{\rm span}}
\def\dim{\hbox{\rm dim\,}}
\newcommand{\beq}{\begin{equation}}
\newcommand{\eeq}{\end{equation}}
\def\dim{\hbox{\rm dim\,}}
\def\CLIFFORD{\mbox{\bf \tt CLIFFORD}}
 \def\dated#1{\def\thedate{#1}}%
\newdimen\high%
\newdimen\ul%
\newdimen\wdth%
\def\ratchet#1#2{\ifnum#1<#2\global #1=#2\fi}%
\long\def\ifnextchar#1#2#3{%
  \let\reserved@d=#1%
  \def\reserved@a{#2}%
  \def\reserved@b{#3}%
  \futurelet\@let@token\@ifnch}
\let\kernel@ifnextchar\@ifnextchar
\def\@ifnch{%
  \ifx\@let@token\@sptoken
    \let\reserved@c\@xifnch
  \else
    \ifx\@let@token\reserved@d
      \let\reserved@c\reserved@a
    \else
      \let\reserved@c\reserved@b
    \fi
  \fi
  \reserved@c}
\def\:{\let\@sptoken= } \:  % this makes \@sptoken a space token
\def\:{\@xifnch} \expandafter\def\: {\futurelet\@let@token\@ifnch}
\def\axis{\fontdimen22\textfont2}
\def\scalefactor#1{\ul=#1\ul \X@xbase=#1\X@xbase \Y@ybase=#1\Y@ybase}%
\def\fontscale#1{%
\if#1h\relax%
\font\xydashfont=xydash10 scaled \magstephalf%
\font\xyatipfont=xyatip10 scaled \magstephalf%
\font\xybtipfont=xybtip10 scaled \magstephalf%
\font\xybsqlfont=xybsql10 scaled \magstephalf%
\font\xycircfont=xycirc10 scaled \magstephalf%
\else%
\font\xydashfont=xydash10 scaled \magstep#1%
\font\xyatipfont=xyatip10 scaled \magstep#1%
\font\xybtipfont=xybtip10 scaled \magstep#1%
\font\xybsqlfont=xybsql10 scaled \magstep#1%
\font\xycircfont=xycirc10 scaled \magstep#1%
\fi}%
\def\bfig{\vcenter\bgroup\xy}%
\def\efig{\endxy\egroup}%
\def\car#1#2\nil{#1}%
\def\morphism{\ifnextchar({\morphismp}{\morphismp(0,0)}}%
\def\morphismp(#1){\ifnextchar|{\morphismpp(#1)}{\morphismpp(#1)|a|}}%
\def\morphismpp(#1)|#2|{\ifnextchar/{\morphismppp(#1)|#2|}%
    {\morphismppp(#1)|#2|/>/}}%
\def\morphismppp(#1)|#2|/#3/{%
    \ifnextchar<{\morphismpppp(#1)|#2|/#3/}%
    {\morphismpppp(#1)|#2|/#3/<\default,0>}}%
\def\morphismpppp(#1,#2)|#3|/#4/<#5,#6>[#7`#8;#9]{%
\xend#1\advance \xend by #5%
\yend#2\advance \yend by #6%
\domorphism(#1,#2)|#3|/#4/<#5,#6>[{#7}`{#8};{#9}]}%
\def\domorphism(#1,#2)|#3|/#4/<#5,#6>[#7`#8;#9]{%
 % Check if arrow arg has an @; then don't add it.%
\def\next{\car#4.\nil}%
\if@\next\relax%
 \if#3l%
  \ifnum #6>0%
   \POS(#1,#2)*+!!<0ex,\axis>{#7}\ar#4^-{#9} (\xend,\yend)*+!!<0ex,\axis>{#8}%
  \else%
   \POS(#1,#2)*+!!<0ex,\axis>{#7}\ar#4_-{#9} (\xend,\yend)*+!!<0ex,\axis>{#8}%
  \fi%
 \else \if#3m%
    \setbox0\hbox{$#9$}%
   \ifdim \wd0=0pt%
     \POS(#1,#2)*+!!<0ex,\axis>{#7}\ar#4 (\xend,\yend)*+!!<0ex,\axis>{#8}%
   \else%
     \POS(#1,#2)*+!!<0ex,\axis>{#7}\ar#4|-*+<1pt,4pt>{\labelstyle#9}%
       (\xend,\yend)*+!!<0ex,\axis>{#8}%
   \fi%
 \else \if#3r%
  \ifnum #6<0%
   \POS(#1,#2)*+!!<0ex,\axis>{#7}\ar#4^-{#9} (\xend,\yend)*+!!<0ex,\axis>{#8}%
  \else%
   \POS(#1,#2)*+!!<0ex,\axis>{#7}\ar#4_-{#9} (\xend,\yend)*+!!<0ex,\axis>{#8}%
  \fi%
 \else \if#3a%
  \ifnum #5>0%
   \POS(#1,#2)*+!!<0ex,\axis>{#7}\ar#4^-{#9} (\xend,\yend)*+!!<0ex,\axis>{#8}%
  \else%
   \POS(#1,#2)*+!!<0ex,\axis>{#7}\ar#4_-{#9} (\xend,\yend)*+!!<0ex,\axis>{#8}%
  \fi%
 \else \if#3b%
  \ifnum #5<0%
   \POS(#1,#2)*+!!<0ex,\axis>{#7}\ar#4^-{#9} (\xend,\yend)*+!!<0ex,\axis>{#8}%
  \else%
   \POS(#1,#2)*+!!<0ex,\axis>{#7}\ar#4_-{#9} (\xend,\yend)*+!!<0ex,\axis>{#8}%
  \fi%
 \else%
   \POS(#1,#2)*+!!<0ex,\axis>{#7}\ar#4 (\xend,\yend)*+!!<0ex,\axis>{#8}%
 \fi\fi\fi\fi\fi%
 %% Otherwise, have to add @{..}%
\else%
 \if#3l%
  \ifnum #6>0%
   \POS(#1,#2)*+!!<0ex,\axis>{#7}\ar@{#4}^-{#9} (\xend,\yend)*+!!<0ex,\axis>{#8}%
  \else%
   \POS(#1,#2)*+!!<0ex,\axis>{#7}\ar@{#4}_-{#9} (\xend,\yend)*+!!<0ex,\axis>{#8}%
  \fi%
 \else \if#3m%
    \setbox0\hbox{$#9$}%
   \ifdim \wd0=0pt%
     \POS(#1,#2)*+!!<0ex,\axis>{#7}\ar@{#4} (\xend,\yend)*+!!<0ex,\axis>{#8}%
   \else%
     \POS(#1,#2)*+!!<0ex,\axis>{#7}\ar@{#4}|-*+<1pt,4pt>{\labelstyle#9}%
         (\xend,\yend)*+!!<0ex,\axis>{#8}%
   \fi%
 \else \if#3r%
  \ifnum #6<0%
   \POS(#1,#2)*+!!<0ex,\axis>{#7}\ar@{#4}^-{#9} (\xend,\yend)*+!!<0ex,\axis>{#8}%
  \else%
   \POS(#1,#2)*+!!<0ex,\axis>{#7}\ar@{#4}_-{#9} (\xend,\yend)*+!!<0ex,\axis>{#8}%
  \fi%
 \else \if#3a%
  \ifnum #5>0%
   \POS(#1,#2)*+!!<0ex,\axis>{#7}\ar@{#4}^-{#9} (\xend,\yend)*+!!<0ex,\axis>{#8}%
  \else%
   \POS(#1,#2)*+!!<0ex,\axis>{#7}\ar@{#4}_-{#9} (\xend,\yend)*+!!<0ex,\axis>{#8}%
  \fi%
 \else \if#3b%
  \ifnum #5<0%
   \POS(#1,#2)*+!!<0ex,\axis>{#7}\ar@{#4}^-{#9} (\xend,\yend)*+!!<0ex,\axis>{#8}%
  \else%
   \POS(#1,#2)*+!!<0ex,\axis>{#7}\ar@{#4}_-{#9} (\xend,\yend)*+!!<0ex,\axis>{#8}%
  \fi%
 \else%
   \POS(#1,#2)*+!!<0ex,\axis>{#7}\ar@{#4} (\xend,\yend)*+!!<0ex,\axis>{#8}%
 \fi\fi\fi\fi\fi%
\fi\ignorespaces}%
\def\vect(#1,#2)/#3/<#4,#5>{%
 \xend#1 \yend#2 \advance\xend by #4 \advance\yend by #5%
     \POS(#1,#2)\ar#3 (\xend,\yend)}%
\def\squarepppp(#1,#2)|#3|/#4`#5`#6`#7/<#8>[#9]{%
\xpos#1\ypos#2%
\def\next|##1##2##3##4|{%
 \def\xa{##1}\def\xb{##2}\def\xc{##3}\def\xd{##4}\ignorespaces}%
\next|#3|%
\def\next<##1,##2>{\deltax=##1\deltay=##2\ignorespaces}%
\next<#8>%
\def\next[##1`##2`##3`##4;##5`##6`##7`##8]{%
    \def\nodea{##1}\def\nodeb{##2}\def\nodec{##3}\def\noded{##4}%
    \def\labela{##5}\def\labelb{##6}\def\labelc{##7}\def\labeld{##8}\ignorespaces}%
\next[#9]%
\morphism(\xpos,\ypos)|\xd|/{#7}/<\deltax,0>[\nodec`\noded;\labeld]%
\advance \ypos by \deltay%
\morphism(\xpos,\ypos)|\xb|/{#5}/<0,-\deltay>[\nodea`\nodec;\labelb]%
\morphism(\xpos,\ypos)|\xa|/{#4}/<\deltax,0>[\nodea`\nodeb;\labela]%
 \advance \xpos by \deltax%
\morphism(\xpos,\ypos)|\xc|/{#6}/<0,-\deltay>[\nodeb`\noded;\labelc]%
\ignorespaces}%
\def\square{\ifnextchar({\squarep}{\squarep(0,0)}}%
\def\squarep(#1){\ifnextchar|{\squarepp(#1)}{\squarepp(#1)|alrb|}}%
\def\squarepp(#1)|#2|{\ifnextchar/{\squareppp(#1)|#2|}%
    {\squareppp(#1)|#2|/>`>`>`>/}}%
\def\squareppp(#1)|#2|/#3`#4`#5`#6/{%
    \ifnextchar<{\squarepppp(#1)|#2|/#3`#4`#5`#6/}%
    {\squarepppp(#1)|#2|/#3`#4`#5`#6/<\default,\default>}}%
\def\ptrianglepppp(#1,#2)|#3|/#4`#5`#6/<#7>[#8]{%
\xpos#1\ypos#2%
\def\next|##1##2##3|{\def\xa{##1}\def\xb{##2}\def\xc{##3}}%
\next|#3|%
\def\next<##1,##2>{\deltax=##1\deltay=##2\ignorespaces}%
\next<#7>%
\def\next[##1`##2`##3;##4`##5`##6]{%
    \def\nodea{##1}\def\nodeb{##2}\def\nodec{##3}%
    \def\labela{##4}\def\labelb{##5}\def\labelc{##6}}%
\next[#8]%
\advance\ypos by \deltay%
\morphism(\xpos,\ypos)|\xa|/{#4}/<\deltax,0>[\nodea`\nodeb;\labela]%
\morphism(\xpos,\ypos)|\xb|/{#5}/<0,-\deltay>[\nodea`\nodec;\labelb]%
\advance\xpos by \deltax%
\morphism(\xpos,\ypos)|\xc|/{#6}/<-\deltax,-\deltay>[\nodeb`\nodec;\labelc]%
\ignorespaces}%
\def\qtrianglepppp(#1,#2)|#3|/#4`#5`#6/<#7>[#8]{%
\xpos#1\ypos#2%
\def\next|##1##2##3|{\def\xa{##1}\def\xb{##2}\def\xc{##3}}%
\next|#3|%
\def\next<##1,##2>{\deltax=##1\deltay=##2\ignorespaces}%
\next<#7>%
\def\next[##1`##2`##3;##4`##5`##6]{%
    \def\nodea{##1}\def\nodeb{##2}\def\nodec{##3}%
    \def\labela{##4}\def\labelb{##5}\def\labelc{##6}}%
\next[#8]%
\advance\ypos by \deltay%
\morphism(\xpos,\ypos)|\xa|/{#4}/<\deltax,0>[\nodea`\nodeb;\labela]%
\morphism(\xpos,\ypos)|\xb|/{#5}/<\deltax,-\deltay>[\nodea`\nodec;\labelb]%
\advance\xpos by \deltax%
\morphism(\xpos,\ypos)|\xc|/{#6}/<0,-\deltay>[\nodeb`\nodec;\labelc]%
\ignorespaces}%
\def\dtrianglepppp(#1,#2)|#3|/#4`#5`#6/<#7>[#8]{%
\xpos#1\ypos#2%
\def\next|##1##2##3|{\def\xa{##1}\def\xb{##2}\def\xc{##3}}%
\next|#3|%
\def\next<##1,##2>{\deltax=##1\deltay=##2\ignorespaces}%
\next<#7>%
\def\next[##1`##2`##3;##4`##5`##6]{%
    \def\nodea{##1}\def\nodeb{##2}\def\nodec{##3}%
    \def\labela{##4}\def\labelb{##5}\def\labelc{##6}}%
\next[#8]%
\morphism(\xpos,\ypos)|\xc|/{#6}/<\deltax,0>[\nodeb`\nodec;\labelc]%
\advance\ypos by \deltay\advance \xpos by \deltax%
\morphism(\xpos,\ypos)|\xa|/{#4}/<-\deltax,-\deltay>[\nodea`\nodeb;\labela]%
\morphism(\xpos,\ypos)|\xb|/{#5}/<0,-\deltay>[\nodea`\nodec;\labelb]%
\ignorespaces}%
\def\btrianglepppp(#1,#2)|#3|/#4`#5`#6/<#7>[#8]{%
\xpos#1\ypos#2%
\def\next|##1##2##3|{\def\xa{##1}\def\xb{##2}\def\xc{##3}}%
\next|#3|%
\def\next<##1,##2>{\deltax=##1\deltay=##2\ignorespaces}%
\next<#7>%
\def\next[##1`##2`##3;##4`##5`##6]{%
    \def\nodea{##1}\def\nodeb{##2}\def\nodec{##3}%
    \def\labela{##4}\def\labelb{##5}\def\labelc{##6}}%
\next[#8]%
\morphism(\xpos,\ypos)|\xc|/{#6}/<\deltax,0>[\nodeb`\nodec;\labelc]%
\advance\ypos by \deltay%
\morphism(\xpos,\ypos)|\xa|/{#4}/<0,-\deltay>[\nodea`\nodeb;\labela]%
\morphism(\xpos,\ypos)|\xb|/{#5}/<\deltax,-\deltay>[\nodea`\nodec;\labelb]%
\ignorespaces}%
\def\Atrianglepppp(#1,#2)|#3|/#4`#5`#6/<#7>[#8]{%
\xpos#1\ypos#2%
\def\next|##1##2##3|{\def\xa{##1}\def\xb{##2}\def\xc{##3}}%
\next|#3|%
\def\next<##1,##2>{\deltax=##1\deltay=##2\ignorespaces}%
\next<#7>%
\def\next[##1`##2`##3;##4`##5`##6]{%
    \def\nodea{##1}\def\nodeb{##2}\def\nodec{##3}%
    \def\labela{##4}\def\labelb{##5}\def\labelc{##6}}%
\next[#8]%
\multiply\deltax by 2%
\morphism(\xpos,\ypos)|\xc|/{#6}/<\deltax,0>[\nodeb`\nodec;\labelc]%
\divide\deltax by 2%
\advance\ypos by \deltay\advance\xpos by \deltax%
\morphism(\xpos,\ypos)|\xa|/{#4}/<-\deltax,-\deltay>[\nodea`\nodeb;\labela]%
\morphism(\xpos,\ypos)|\xb|/{#5}/<\deltax,-\deltay>[\nodea`\nodec;\labelb]%
\ignorespaces}%
\def\Vtrianglepppp(#1,#2)|#3|/#4`#5`#6/<#7>[#8]{%
\xpos#1\ypos#2%
\def\next|##1##2##3|{\def\xa{##1}\def\xb{##2}\def\xc{##3}}%
\next|#3|%
\def\next<##1,##2>{\deltax=##1\deltay=##2\ignorespaces}%
\next<#7>%
\def\next[##1`##2`##3;##4`##5`##6]{%
    \def\nodea{##1}\def\nodeb{##2}\def\nodec{##3}%
    \def\labela{##4}\def\labelb{##5}\def\labelc{##6}}%
\next[#8]%
\advance\ypos by \deltay%
\morphism(\xpos,\ypos)|\xb|/{#5}/<\deltax,-\deltay>[\nodea`\nodec;\labelb]%
\multiply\deltax by 2%
\morphism(\xpos,\ypos)|\xa|/{#4}/<\deltax,0>[\nodea`\nodeb;\labela]%
\advance\xpos by \deltax \divide \deltax by 2%
\morphism(\xpos,\ypos)|\xc|/{#6}/<-\deltax,-\deltay>[\nodeb`\nodec;\labelc]%
\ignorespaces}%
\def\Ctrianglepppp(#1,#2)|#3|/#4`#5`#6/<#7>[#8]{%
\xpos#1\ypos#2%
\def\next|##1##2##3|{\def\xa{##1}\def\xb{##2}\def\xc{##3}}%
\next|#3|%
\def\next<##1,##2>{\deltax=##1\deltay=##2\ignorespaces}%
\next<#7>%
\def\next[##1`##2`##3;##4`##5`##6]{%
    \def\nodea{##1}\def\nodeb{##2}\def\nodec{##3}%
    \def\labela{##4}\def\labelb{##5}\def\labelc{##6}}%
\next[#8]%
\advance \ypos by \deltay%
\morphism(\xpos,\ypos)|\xc|/{#6}/<\deltax,-\deltay>[\nodeb`\nodec;\labelc]%
\advance\ypos by \deltay \advance \xpos by \deltax%
\morphism(\xpos,\ypos)|\xa|/{#4}/<-\deltax,-\deltay>[\nodea`\nodeb;\labela]%
\multiply\deltay by 2%
\morphism(\xpos,\ypos)|\xb|/{#5}/<0,-\deltay>[\nodea`\nodec;\labelb]%
\ignorespaces}%
\def\Dtrianglepppp(#1,#2)|#3|/#4`#5`#6/<#7>[#8]{%
\xpos#1\ypos#2%
\def\next|##1##2##3|{\def\xa{##1}\def\xb{##2}\def\xc{##3}}%
\next|#3|%
\def\next<##1,##2>{\deltax=##1\deltay=##2\ignorespaces}%
\next<#7>%
\def\next[##1`##2`##3;##4`##5`##6]{%
    \def\nodea{##1}\def\nodeb{##2}\def\nodec{##3}%
    \def\labela{##4}\def\labelb{##5}\def\labelc{##6}}%
\next[#8]%
\advance\xpos by \deltax \advance\ypos by \deltay%
\morphism(\xpos,\ypos)|\xc|/{#6}/<-\deltax,-\deltay>[\nodeb`\nodec;\labelc]%
\advance\xpos by -\deltax \advance\ypos by \deltay%
\morphism(\xpos,\ypos)|\xb|/{#5}/<\deltax,-\deltay>[\nodea`\nodeb;\labelb]%
\multiply \deltay by 2%
\morphism(\xpos,\ypos)|\xa|/{#4}/<0,-\deltay>[\nodea`\nodec;\labela]%
\ignorespaces}%
\def\ptrianglep(#1){\ifnextchar|{\ptrianglepp(#1)}{\ptrianglepp(#1)|alr|}}%
\def\ptrianglepp(#1)|#2|{\ifnextchar/{\ptriangleppp(#1)|#2|}%
    {\ptriangleppp(#1)|#2|/>`>`>/}}%
\def\ptriangleppp(#1)|#2|/#3`#4`#5/{%
    \ifnextchar<{\ptrianglepppp(#1)|#2|/#3`#4`#5/}%
    {\ptrianglepppp(#1)|#2|/#3`#4`#5/<\default,\default>}}%
\def\qtrianglep(#1){\ifnextchar|{\qtrianglepp(#1)}{\qtrianglepp(#1)|alr|}}%
\def\qtrianglepp(#1)|#2|{\ifnextchar/{\qtriangleppp(#1)|#2|}%
    {\qtriangleppp(#1)|#2|/>`>`>/}}%
\def\qtriangleppp(#1)|#2|/#3`#4`#5/{%
    \ifnextchar<{\qtrianglepppp(#1)|#2|/#3`#4`#5/}%
    {\qtrianglepppp(#1)|#2|/#3`#4`#5/<\default,\default>}}%
\def\dtrianglep(#1){\ifnextchar|{\dtrianglepp(#1)}{\dtrianglepp(#1)|lrb|}}%
\def\dtrianglepp(#1)|#2|{\ifnextchar/{\dtriangleppp(#1)|#2|}%
    {\dtriangleppp(#1)|#2|/>`>`>/}}%
\def\dtriangleppp(#1)|#2|/#3`#4`#5/{%
    \ifnextchar<{\dtrianglepppp(#1)|#2|/#3`#4`#5/}%
    {\dtrianglepppp(#1)|#2|/#3`#4`#5/<\default,\default>}}%
\def\btrianglep(#1){\ifnextchar|{\btrianglepp(#1)}{\btrianglepp(#1)|lrb|}}%
\def\btrianglepp(#1)|#2|{\ifnextchar/{\btriangleppp(#1)|#2|}%
    {\btriangleppp(#1)|#2|/>`>`>/}}%
\def\btriangleppp(#1)|#2|/#3`#4`#5/{%
    \ifnextchar<{\btrianglepppp(#1)|#2|/#3`#4`#5/}%
    {\btrianglepppp(#1)|#2|/#3`#4`#5/<\default,\default>}}%
\def\Atrianglep(#1){\ifnextchar|{\Atrianglepp(#1)}{\Atrianglepp(#1)|lrb|}}%
\def\Atrianglepp(#1)|#2|{\ifnextchar/{\Atriangleppp(#1)|#2|}%
    {\Atriangleppp(#1)|#2|/>`>`>/}}%
\def\Atriangleppp(#1)|#2|/#3`#4`#5/{%
    \ifnextchar<{\Atrianglepppp(#1)|#2|/#3`#4`#5/}%
    {\Atrianglepppp(#1)|#2|/#3`#4`#5/<\default,\default>}}%
\def\Vtrianglep(#1){\ifnextchar|{\Vtrianglepp(#1)}{\Vtrianglepp(#1)|alb|}}%
\def\Vtrianglepp(#1)|#2|{\ifnextchar/{\Vtriangleppp(#1)|#2|}%
    {\Vtriangleppp(#1)|#2|/>`>`>/}}%
\def\Vtriangleppp(#1)|#2|/#3`#4`#5/{%
    \ifnextchar<{\Vtrianglepppp(#1)|#2|/#3`#4`#5/}%
    {\Vtrianglepppp(#1)|#2|/#3`#4`#5/<\default,\default>}}%
\def\Ctrianglep(#1){\ifnextchar|{\Ctrianglepp(#1)}{\Ctrianglepp(#1)|arb|}}%
\def\Ctrianglepp(#1)|#2|{\ifnextchar/{\Ctriangleppp(#1)|#2|}%
    {\Ctriangleppp(#1)|#2|/>`>`>/}}%
\def\Ctriangleppp(#1)|#2|/#3`#4`#5/{%
    \ifnextchar<{\Ctrianglepppp(#1)|#2|/#3`#4`#5/}%
    {\Ctrianglepppp(#1)|#2|/#3`#4`#5/<\default,\default>}}%
\def\Dtrianglep(#1){\ifnextchar|{\Dtrianglepp(#1)}{\Dtrianglepp(#1)|alb|}}%
\def\Dtrianglepp(#1)|#2|{\ifnextchar/{\Dtriangleppp(#1)|#2|}%
    {\Dtriangleppp(#1)|#2|/>`>`>/}}%
\def\Dtriangleppp(#1)|#2|/#3`#4`#5/{%
    \ifnextchar<{\Dtrianglepppp(#1)|#2|/#3`#4`#5/}%
    {\Dtrianglepppp(#1)|#2|/#3`#4`#5/<\default,\default>}}%
\def\Atrianglepairpppp(#1)|#2|/#3`#4`#5`#6`#7/<#8>[#9]{%
\def\next(##1,##2){\xpos##1\ypos##2}%
\next(#1)%
\def\next|##1##2##3##4##5|{\def\xa{##1}\def\xb{##2}%
\def\xc{##3}\def\xd{##4}\def\xe{##5}}%
\next|#2|%
\def\next<##1,##2>{\deltax=##1\deltay=##2\ignorespaces}%
\next<#8>%
\def\next[##1`##2`##3`##4;##5`##6`##7`##8`##9]{%
 \def\nodea{##1}\def\nodeb{##2}\def\nodec{##3}\def\noded{##4}%
 \def\labela{##5}\def\labelb{##6}\def\labelc{##7}\def\labeld{##8}\def\labele{##9}}%
\next[#9]%
\morphism(\xpos,\ypos)|\xd|/{#6}/<\deltax,0>[\nodeb`\nodec;\labeld]%
\advance\xpos by \deltax%
\morphism(\xpos,\ypos)|\xe|/{#7}/<\deltax,0>[\nodec`\noded;\labele]%
\advance\ypos by \deltay%
\morphism(\xpos,\ypos)|\xa|/{#3}/<-\deltax,-\deltay>[\nodea`\nodeb;\labela]%
\morphism(\xpos,\ypos)|\xb|/{#4}/<0,-\deltay>[\nodea`\nodec;\labelb]%
\morphism(\xpos,\ypos)|\xc|/{#5}/<\deltax,-\deltay>[\nodea`\noded;\labelc]%
\ignorespaces}%
\def\Vtrianglepairpppp(#1)|#2|/#3`#4`#5`#6`#7/<#8>[#9]{%
\def\next(##1,##2){\xpos##1\ypos##2}%
\next(#1)%
\def\next|##1##2##3##4##5|{\def\xa{##1}\def\xb{##2}%
\def\xc{##3}\def\xd{##4}\def\xe{##5}}%
\next|#2|%
\def\next<##1,##2>{\deltax=##1\deltay=##2\ignorespaces}%
\next<#8>%
\def\next[##1`##2`##3`##4;##5`##6`##7`##8`##9]{%
 \def\nodea{##1}\def\nodeb{##2}\def\nodec{##3}\def\noded{##4}%
 \def\labela{##5}\def\labelb{##6}\def\labelc{##7}\def\labeld{##8}\def\labele{##9}}%
\next[#9]%
\advance\ypos by \deltay%
\morphism(\xpos,\ypos)|\xa|/{#3}/<\deltax,0>[\nodea`\nodeb;\labela]%
\morphism(\xpos,\ypos)|\xc|/{#5}/<\deltax,-\deltay>[\nodea`\noded;\labelc]%
\advance\xpos by \deltax%
\morphism(\xpos,\ypos)|\xb|/{#4}/<\deltax,0>[\nodeb`\nodec;\labelb]%
\morphism(\xpos,\ypos)|\xd|/{#6}/<0,-\deltay>[\nodeb`\noded;\labeld]%
\advance\xpos by \deltax%
\morphism(\xpos,\ypos)|\xe|/{#7}/<-\deltax,-\deltay>[\nodec`\noded;\labele]%
\ignorespaces}%
\def\Ctrianglepairpppp(#1)|#2|/#3`#4`#5`#6`#7/<#8>[#9]{%
\def\next(##1,##2){\xpos##1\ypos##2}%
\next(#1)%
\def\next|##1##2##3##4##5|{\def\xa{##1}\def\xb{##2}%
\def\xc{##3}\def\xd{##4}\def\xe{##5}}%
\next|#2|%
\def\next<##1,##2>{\deltax=##1\deltay=##2\ignorespaces}%
\next<#8>%
\def\next[##1`##2`##3`##4;##5`##6`##7`##8`##9]{%
 \def\nodea{##1}\def\nodeb{##2}\def\nodec{##3}\def\noded{##4}%
 \def\labela{##5}\def\labelb{##6}\def\labelc{##7}\def\labeld{##8}\def\labele{##9}}%
\next[#9]%
\advance\ypos by \deltay%
\morphism(\xpos,\ypos)|\xe|/{#7}/<0,-\deltay>[\nodec`\noded;\labele]%
\advance\xpos by -\deltax%
\morphism(\xpos,\ypos)|\xc|/{#5}/<\deltax,0>[\nodeb`\nodec;\labelc]%
\morphism(\xpos,\ypos)|\xd|/{#6}/<\deltax,-\deltay>[\nodeb`\noded;\labeld]%
\advance\ypos by \deltay%
\advance\xpos by \deltax%
\morphism(\xpos,\ypos)|\xa|/{#3}/<-\deltax,-\deltay>[\nodea`\nodeb;\labela]%
\morphism(\xpos,\ypos)|\xb|/{#4}/<0,-\deltay>[\nodea`\nodec;\labelb]%
\ignorespaces}%
\def\Dtrianglepairpppp(#1)|#2|/#3`#4`#5`#6`#7/<#8>[#9]{%
\def\next(##1,##2){\xpos##1\ypos##2}%
\next(#1)%
\def\next|##1##2##3##4##5|{\def\xa{##1}\def\xb{##2}%
\def\xc{##3}\def\xd{##4}\def\xe{##5}}%
\next|#2|%
\def\next<##1,##2>{\deltax=##1\deltay=##2\ignorespaces}%
\next<#8>%
\def\next[##1`##2`##3`##4;##5`##6`##7`##8`##9]{%
 \def\nodea{##1}\def\nodeb{##2}\def\nodec{##3}\def\noded{##4}%
 \def\labela{##5}\def\labelb{##6}\def\labelc{##7}\def\labeld{##8}\def\labele{##9}}%
\next[#9]%
\advance\ypos by \deltay%
\morphism(\xpos,\ypos)|\xc|/{#5}/<\deltax,0>[\nodeb`\nodec;\labelc]%
\morphism(\xpos,\ypos)|\xd|/{#6}/<0,-\deltay>[\nodeb`\noded;\labeld]%
\advance\ypos by \deltay%
\morphism(\xpos,\ypos)|\xa|/{#3}/<0,-\deltay>[\nodea`\nodeb;\labela]%
\morphism(\xpos,\ypos)|\xb|/{#4}/<\deltax,-\deltay>[\nodea`\nodec;\labelb]%
\advance\ypos by -\deltay%
\advance\xpos by \deltax%
\morphism(\xpos,\ypos)|\xe|/{#7}/<-\deltax,-\deltay>[\nodec`\noded;\labele]%
\ignorespaces}%
\def\Atrianglepairp(#1){\ifnextchar|{\Atrianglepairpp(#1)}%
{\Atrianglepairpp(#1)|lmrbb|}}%
\def\Atrianglepairpp(#1)|#2|{\ifnextchar/{\Atrianglepairppp(#1)|#2|}%
    {\Atrianglepairppp(#1)|#2|/>`>`>`>`>/}}%
\def\Atrianglepairppp(#1)|#2|/#3`#4`#5`#6`#7/{%
    \ifnextchar<{\Atrianglepairpppp(#1)|#2|/#3`#4`#5`#6`#7/}%
    {\Atrianglepairpppp(#1)|#2|/#3`#4`#5`#6`#7/<\default,\default>}}%
\def\Vtrianglepairp(#1){\ifnextchar|{\Vtrianglepairpp(#1)}%
{\Vtrianglepairpp(#1)|aalmr|}}%
\def\Vtrianglepairpp(#1)|#2|{\ifnextchar/{\Vtrianglepairppp(#1)|#2|}%
    {\Vtrianglepairppp(#1)|#2|/>`>`>`>`>/}}%
\def\Vtrianglepairppp(#1)|#2|/#3`#4`#5`#6`#7/{%
    \ifnextchar<{\Vtrianglepairpppp(#1)|#2|/#3`#4`#5`#6`#7/}%
    {\Vtrianglepairpppp(#1)|#2|/#3`#4`#5`#6`#7/<\default,\default>}}%
\def\Ctrianglepairp(#1){\ifnextchar|{\Ctrianglepairpp(#1)}%
{\Ctrianglepairpp(#1)|lrmlr|}}%
\def\Ctrianglepairpp(#1)|#2|{\ifnextchar/{\Ctrianglepairppp(#1)|#2|}%
    {\Ctrianglepairppp(#1)|#2|/>`>`>`>`>/}}%
\def\Ctrianglepairppp(#1)|#2|/#3`#4`#5`#6`#7/{%
    \ifnextchar<{\Ctrianglepairpppp(#1)|#2|/#3`#4`#5`#6`#7/}%
    {\Ctrianglepairpppp(#1)|#2|/#3`#4`#5`#6`#7/<\default,\default>}}%
\def\Dtrianglepairp(#1){\ifnextchar|{\Dtrianglepairpp(#1)}%
{\Dtrianglepairpp(#1)|lrmlr|}}%
\def\Dtrianglepairpp(#1)|#2|{\ifnextchar/{\Dtrianglepairppp(#1)|#2|}%
    {\Dtrianglepairppp(#1)|#2|/>`>`>`>`>/}}%
\def\Dtrianglepairppp(#1)|#2|/#3`#4`#5`#6`#7/{%
    \ifnextchar<{\Dtrianglepairpppp(#1)|#2|/#3`#4`#5`#6`#7/}%
    {\Dtrianglepairpppp(#1)|#2|/#3`#4`#5`#6`#7/<\default,\default>}}%
\def\pplace[#1](#2,#3)[#4]{\POS(#2,#3)*+!!<0ex,\axis>!#1{#4}\ignorespaces}%
\def\cplace(#1,#2)[#3]{\POS(#1,#2)*+!!<0ex,\axis>{#3}\ignorespaces}%
\def\pullback#1]#2]{\square#1]\trident#2]\ignorespaces}%
\def\tridentppp|#1#2#3|/#4`#5`#6/<#7,#8>[#9]{%
\def\next[##1;##2`##3`##4]{\def\nodee{##1}\def\labele{##2}%
   \def\labelf{##3}\def\labelg{##4}}%
\next[#9]%
\advance \xpos by -\deltax%
\advance \xpos by -#7\advance \ypos by #8%
\advance\deltax by #7%
\morphism(\xpos,\ypos)|#1|/{#4}/<\deltax,-#8>[\nodee`\nodeb;\labele]%
\advance\deltax by -#7%
\morphism(\xpos,\ypos)|#2|/{#5}/<#7,-#8>[\nodee`\nodea;\labelf]%
\advance\deltay by #8%
\morphism(\xpos,\ypos)|#3|/{#6}/<#7,-\deltay>[\nodee`\nodec;\labelg]%
\ignorespaces}%
\def\trident{\ifnextchar|{\tridentp}{\tridentp|amb|}}%
\def\tridentp|#1|{\ifnextchar/{\tridentpp|#1|}{\tridentpp|#1|/{>}`{>}`{>}/}}%
\def\tridentpp|#1|/#2/{\ifnextchar<{\tridentppp|#1|/#2/}%
  {\tridentppp|#1|/#2/<500,500>}}%
\def\setmorphismwidth#1#2#3#4{%
 \setbox0=\hbox{$#1{\labelstyle#3#3}#2$}#4=\wd0%
 \divide #4 by 2 \divide #4 by \ul%
 \advance #4 by 350 \ratchet{#4}{500}}%
\def\setSquarewidth[#1`#2`#3`#4;#5`#6`#7`#8]{%
 \setmorphismwidth{#1}{#2}{#5}{\topw}%
 \setmorphismwidth{#3}{#4}{#8}{\botw}%
\ratchet{\topw}{\botw}}%
\def\Squarepppp(#1)|#2|/#3/<#4>[#5]{%
 \setSquarewidth[#5]%
 \squarepppp(#1)|#2|/#3/<\topw,#4>[#5]%
\ignorespaces}%
\def\Squarep(#1){\ifnextchar|{\Squarepp(#1)}{\Squarepp(#1)|alrb|}}%
\def\Squarepp(#1)|#2|{\ifnextchar/{\Squareppp(#1)|#2|}%
    {\Squareppp(#1)|#2|/>`>`>`>/}}%
\def\Squareppp(#1)|#2|/#3`#4`#5`#6/{%
    \ifnextchar<{\Squarepppp(#1)|#2|/#3`#4`#5`#6/}%
    {\Squarepppp(#1)|#2|/#3`#4`#5`#6/<\default>}}%
\def\hsquarespppp(#1,#2)|#3|/#4/<#5>[#6;#7]{%
\Xpos=#1\Ypos=#2%
\def\next|##1##2##3##4##5##6##7|{%
 \def\Xa{##1}\def\Xb{##2}\def\Xc{##3}\def\Xd{##4}%
 \def\Xe{##5}\def\Xf{##6}\def\Xg{##7}}%
\next|#3|%
\def\next<##1,##2,##3>{\deltaX=##1\deltaXprime=##2\deltaY=##3}%
\next<#5>%
\def\next[##1`##2`##3`##4`##5`##6]{%
 \def\Nodea{##1}\def\Nodeb{##2}\def\Nodec{##3}%
 \def\Noded{##4}\def\Nodee{##5}\def\Nodef{##6}}%
\next[#6]%
\def\next[##1`##2`##3`##4`##5`##6`##7]{%
 \def\Labela{##1}\def\Labelb{##2}\def\Labelc{##3}\def\Labeld{##4}%
 \def\Labele{##5}\def\Labelf{##6}\def\Labelg{##7}}%
\next[#7]%
\dohsquares/#4/}%
\def\dohsquares/#1`#2`#3`#4`#5`#6`#7/{%
\squarepppp(\Xpos,\Ypos)|\Xa\Xc\Xd\Xf|/#1`#3`#4`#6/<\deltaX,\deltaY>%
 [\Nodea`\Nodeb`\Noded`\Nodee;\Labela`\Labelc`\Labeld`\Labelf]%
 \advance \Xpos by \deltaX%
\squarepppp(\Xpos,\Ypos)|\Xb\Xd\Xe\Xg|/#2``#5`#7/<\deltaXprime,\deltaY>%
[\Nodeb`\Nodec`\Nodee`\Nodef;\Labelb``\Labele`\Labelg]%
\ignorespaces}%
\def\hsquaresp(#1){\ifnextchar|{\hsquarespp(#1)}{\hsquarespp%
(#1)|aalmrbb|}}%
\def\hsquarespp(#1)|#2|{\ifnextchar/{\hsquaresppp(#1)|#2|}%
    {\hsquaresppp(#1)|#2|/>`>`>`>`>`>`>/}}%
\def\hsquaresppp(#1)|#2|/#3/{%
    \ifnextchar<{\hsquarespppp(#1)|#2|/#3/}%
    {\hsquarespppp(#1)|#2|/#3/<\default,\default,\default>}}%
\def\hSquarespppp(#1,#2)|#3|/#4/<#5>[#6;#7]{%
\Xpos=#1\Ypos=#2%
\def\next|##1##2##3##4##5##6##7|{%
 \def\Xa{##1}\def\Xb{##2}\def\Xc{##3}\def\Xd{##4}%
 \def\Xe{##5}\def\Xf{##6}\def\Xg{##7}}%
\next|#3|%
\deltaY=#5%
\def\next[##1`##2`##3`##4`##5`##6]{%
 \def\Nodea{##1}\def\Nodeb{##2}\def\Nodec{##3}%
 \def\Noded{##4}\def\Nodee{##5}\def\Nodef{##6}}%
\next[#6]%
\def\next[##1`##2`##3`##4`##5`##6`##7]{%
 \def\Labela{##1}\def\Labelb{##2}\def\Labelc{##3}\def\Labeld{##4}%
 \def\Labele{##5}\def\Labelf{##6}\def\Labelg{##7}}%
\next[#7]%
\dohSquares/#4/}%
\def\dohSquares/#1`#2`#3`#4`#5`#6`#7/{%
\Squarepppp(\Xpos,\Ypos)|\Xa\Xc\Xd\Xf|/#1`#3`#4`#6/<\deltaY>%
 [\Nodea`\Nodeb`\Noded`\Nodee;\Labela`\Labelc`\Labeld`\Labelf]%
 \advance \Xpos by \topw%
\Squarepppp(\Xpos,\Ypos)|\Xb\Xd\Xe\Xg|/#2``#5`#7/<\deltaY>%
[\Nodeb`\Nodec`\Nodee`\Nodef;\Labelb``\Labele`\Labelg]%
\ignorespaces}%
\def\hSquaresp(#1){\ifnextchar|{\hSquarespp(#1)}{\hSquarespp%
(#1)|aalmrbb|}}%
\def\hSquarespp(#1)|#2|{\ifnextchar/{\hSquaresppp(#1)|#2|}%
    {\hSquaresppp(#1)|#2|/>`>`>`>`>`>`>/}}%
\def\hSquaresppp(#1)|#2|/#3/{%
    \ifnextchar<{\hSquarespppp(#1)|#2|/#3/}%
    {\hSquarespppp(#1)|#2|/#3/<\default>}}%
\def\vsquarespppp(#1,#2)|#3|/#4/<#5>[#6;#7]{%
\Xpos=#1\Ypos=#2%
\def\next|##1##2##3##4##5##6##7|{%
 \def\Xa{##1}\def\Xb{##2}\def\Xc{##3}\def\Xd{##4}%
 \def\Xe{##5}\def\Xf{##6}\def\Xg{##7}}%
\next|#3|%
\def\next<##1,##2,##3>{\deltaX=##1\deltaY=##2\deltaYprime=##3}%
\next<#5>%
\def\next[##1`##2`##3`##4`##5`##6]{%
 \def\Nodea{##1}\def\Nodeb{##2}\def\Nodec{##3}%
 \def\Noded{##4}\def\Nodee{##5}\def\Nodef{##6}}%
\next[#6]%
\def\next[##1`##2`##3`##4`##5`##6`##7]{%
 \def\Labela{##1}\def\Labelb{##2}\def\Labelc{##3}\def\Labeld{##4}%
 \def\Labele{##5}\def\Labelf{##6}\def\Labelg{##7}}%
\next[#7]%
\dovsquares/#4/}%
\def\dovsquares/#1`#2`#3`#4`#5`#6`#7/{%
\squarepppp(\Xpos,\Ypos)|\Xd\Xe\Xf\Xg|/`#5`#6`#7/<\deltaX,\deltaYprime>%
[\Nodec`\Noded`\Nodee`\Nodef;`\Labele`\Labelf`\Labelg]%
 \advance\Ypos by \deltaYprime%
\squarepppp(\Xpos,\Ypos)|\Xa\Xb\Xc\Xd|/#1`#2`#3`#4/<\deltaX,\deltaY>%
 [\Nodea`\Nodeb`\Nodec`\Noded;\Labela`\Labelb`\Labelc`\Labeld]%
\ignorespaces}%
\def\vsquaresp(#1){\ifnextchar|{\vsquarespp(#1)}{\vsquarespp%
(#1)|aalmrbb|}}%
\def\vsquarespp(#1)|#2|{\ifnextchar/{\vsquaresppp(#1)|#2|}%
    {\vsquaresppp(#1)|#2|/>`>`>`>`>`>`>/}}%
\def\vsquaresppp(#1)|#2|/#3/{%
    \ifnextchar<{\vsquarespppp(#1)|#2|/#3/}%
    {\vsquarespppp(#1)|#2|/#3/<\default,\default,\default>}}%
\def\vSquarespppp(#1,#2)|#3|/#4/<#5,#6>[#7;#8]{%
\Xpos=#1\Ypos=#2%
\def\next|##1##2##3##4##5##6##7|{%
 \def\Xa{##1}\def\Xb{##2}\def\Xc{##3}\def\Xd{##4}%
 \def\Xe{##5}\def\Xf{##6}\def\Xg{##7}}%
\next|#3|%
\deltaX=#5%
\deltaY=#6%
\def\next[##1`##2`##3`##4`##5`##6]{%
 \def\Nodea{##1}\def\Nodeb{##2}\def\Nodec{##3}%
 \def\Noded{##4}\def\Nodee{##5}\def\Nodef{##6}}%
\next[#7]%
\def\next[##1`##2`##3`##4`##5`##6`##7]{%
 \def\Labela{##1}\def\Labelb{##2}\def\Labelc{##3}\def\Labeld{##4}%
 \def\Labele{##5}\def\Labelf{##6}\def\Labelg{##7}}%
\next[#8]%
\dovSquares/#4/\ignorespaces}%
\def\dovSquares/#1`#2`#3`#4`#5`#6`#7/{%
\setmorphismwidth{\Nodea}{\Nodeb}{\Labela}{\topw}%
\setmorphismwidth{\Nodec}{\Noded}{\Labeld}{\botw}%
\ratchet{\topw}{\botw}%
\setmorphismwidth{\Nodee}{\Nodef}{\Labelg}{\botw}%
\ratchet{\topw}{\botw}%
\square(\Xpos,\Ypos)|\Xd\Xe\Xf\Xg|/`#5`#6`#7/<\topw,\deltaX>%
 [\Nodec`\Noded`\Nodee`\Nodef;`\Labele`\Labelf`\Labelg]%
\advance \Ypos by \deltaX%
\square(\Xpos,\Ypos)|\Xa\Xb\Xc\Xd|/#1`#2`#3`#4/<\topw,\deltaY>%
 [\Nodea`\Nodeb`\Nodec`\Noded;\Labela`\Labelb`\Labelc`\Labeld]%
}%
\def\vSquaresp(#1){\ifnextchar|{\vSquarespp(#1)}{\vSquarespp%
(#1)|alrmlrb|}}%
\def\vSquarespp(#1)|#2|{\ifnextchar/{\vSquaresppp(#1)|#2|}%
    {\vSquaresppp(#1)|#2|/>`>`>`>`>`>`>/}}%
\def\vSquaresppp(#1)|#2|/#3/{%
    \ifnextchar<{\vSquarespppp(#1)|#2|/#3/}%
    {\vSquarespppp(#1)|#2|/#3/<\default,\default>}}%
\def\osquarepppp(#1)|#2|/#3`#4`#5`#6/<#7>[#8]{\squarepppp%
 (#1)|#2|/#3`#4`#5`#6/<#7>[#8]%
 \let\Nodea\nodea\let\Nodeb\nodeb%
\let\Nodec\nodec\let\Noded\noded\Xpos=\xpos\Ypos=\ypos%
\deltaX=\deltax \deltaY=\deltay \isquare}%
\def\osquarep(#1){\ifnextchar|{\osquarepp(#1)}{\osquarepp(#1)|alrb|}}%
\def\osquarepp(#1)|#2|{\ifnextchar/{\osquareppp(#1)|#2|}%
    {\osquareppp(#1)|#2|/>`>`>`>/}}%
\def\osquareppp(#1)|#2|/#3`#4`#5`#6/{%
    \ifnextchar<{\osquarepppp(#1)|#2|/#3`#4`#5`#6/}%
    {\osquarepppp(#1)|#2|/#3`#4`#5`#6/<1500,1500>}}%
\def\isquarepppp(#1)|#2|/#3`#4`#5`#6/<#7>[#8]{%
 \squarepppp(#1)|#2|/#3`#4`#5`#6/<#7>[#8]%
\ifnextchar|{\cubep}{\cubep|mmmm|}}%
\def\cubep|#1|{\ifnextchar/{\cubepp|#1|}{\cubepp|#1|/>`>`>`>/}}%
\def\isquare{\ifnextchar({\isquarep}{\isquarep(\default,\default)}}%
\def\isquarep(#1){\ifnextchar|{\isquarepp(#1)}{\isquarepp(#1)|alrb|}}%
\def\isquarepp(#1)|#2|{\ifnextchar/{\isquareppp(#1)|#2|}%
    {\isquareppp(#1)|#2|/>`>`>`>/}}%
\def\isquareppp(#1)|#2|/#3`#4`#5`#6/{%
    \ifnextchar<{\isquarepppp(#1)|#2|/#3`#4`#5`#6/}%
    {\isquarepppp(#1)|#2|/#3`#4`#5`#6/<500,500>}}%
\def\cubepp|#1#2#3#4|/#5`#6`#7`#8/[#9]{%
\def\next[##1`##2`##3`##4]{\gdef\Labela{##1}%
\gdef\Labelb{##2}\gdef\Labelc{##3}\gdef\Labeld{##4}}\next[#9]%
\xend\xpos \yend\ypos%
\Xend\xend\advance\Xend by -\Xpos%
\Yend\yend\advance\Yend by -\Ypos%
\domorphism(\Xpos,\Ypos)|#2|/#6/<\Xend,\Yend>[\Nodeb`\nodeb;\Labelb]%
\advance\Xpos by-\deltaX%
\advance\xend by-\deltax%
\Xend\xend\advance\Xend by -\Xpos%
\domorphism(\Xpos,\Ypos)|#1|/#5/<\Xend,\Yend>[\Nodea`\nodea;\Labela]%
\advance\Ypos by-\deltaY%
\advance\yend by-\deltay%
\Yend\yend\advance\Yend by -\Ypos%
\domorphism(\Xpos,\Ypos)|#3|/#7/<\Xend,\Yend>[\Nodec`\nodec;\Labelc]%
\advance\Xpos by\deltaX%
\advance\xend by\deltax%
\Xend\xend\advance\Xend by -\Xpos%
\domorphism(\Xpos,\Ypos)|#4|/#8/<\Xend,\Yend>[\Noded`\noded;\Labeld]%
\ignorespaces}%
\def\setwdth#1#2{\setbox0\hbox{$\labelstyle#1$}\wdth=\wd0%
\setbox0\hbox{$\labelstyle#2$}\ifnum\wdth<\wd0 \wdth=\wd0 \fi}%
\def\topppp/#1/<#2>^#3_#4{\mathrel{
\ifnum#2=0
   \setwdth{#3}{#4}\deltax=\wdth \divide \deltax by \ul
   \advance \deltax by \defaultmargin  \ratchet{\deltax}{200}%
\else \deltax #2
\fi
\xy\ar@{#1}^{#3}_{#4}(\deltax,0) \endxy
 }}%
\def\toppp/#1/<#2>^#3{\ifnextchar_{\topppp/#1/<#2>^{#3}}{\topppp/#1/<#2>^{#3}_{}}}%
\def\topp/#1/<#2>{\ifnextchar^{\toppp/#1/<#2>}{\toppp/#1/<#2>^{}}}%
\def\toop/#1/{\ifnextchar<{\topp/#1/}{\topp/#1/<0>}}%
\def\twopppp/#1`#2/<#3>^#4_#5{\mathrel{
\ifnum0=#3%
  \setwdth{#4}{#5}\deltax=\wdth \divide \deltax by \ul \advance \deltax
  by \defaultmargin \ratchet{\deltax}{200}%
\else \deltax#3 \fi
\xy\ar@{#1}@<2.5pt>^{#4}(\deltax,0)%
\ar@{#2}@<-2.5pt>_{#5}(\deltax,0)\endxy}}%
\def\twoppp/#1`#2/<#3>^#4{\ifnextchar_{\twopppp/#1`#2/<#3>^{#4}}%
  {\twopppp/#1`#2/<#3>^{#4}_{}}}%
\def\twopp/#1`#2/<#3>{\ifnextchar^{\twoppp/#1`#2/<#3>}{\twoppp/#1`#2/<#3>^{}}}%
\def\twop/#1`#2/{\ifnextchar<{\twopp/#1`#2/}{\twopp/#1`#2/<0>}}%
\def\threeppppp/#1`#2`#3/<#4>^#5|#6_#7{\mathrel{
\ifnum0=#4%
\setbox0\hbox{$\labelstyle#5$}\wdth=\wd0
\setbox0\hbox{$\labelstyle#6$}\ifnum\wdth<\wd0 \wdth=\wd0 \fi
\setbox0\hbox{$\labelstyle#7$}\ifnum\wdth<\wd0 \wdth=\wd0 \fi
\deltax=\wdth \divide \deltax by \ul \advance \deltax by
\defaultmargin \ratchet{\deltax}{300}%
\else\deltax#4 \fi
    \xy \ifnum\wd0=0 \ar@{#2}(\deltax,0)%
    \else \ar@{#2}|{#6}(\deltax,0)\fi
\ar@{#1}@<4.5pt>^{#5}(\deltax,0)%
\ar@{#3}@<-4.5pt>_{#7}(\deltax,0)\endxy}}%
\def\threepppp/#1`#2`#3/<#4>^#5|#6{\ifnextchar_{\threeppppp%
  /#1`#2`#3/<#4>^{#5}|{#6}}{\threeppppp/#1`#2`#3/<#4>^{#5}|{#6}_{}}}%
\def\threeppp/#1`#2`#3/<#4>^#5{\ifnextchar|{\threepppp%
  /#1`#2`#3/<#4>^{#5}}{\threepppp/#1`#2`#3/<#4>^{#5}|{}}}%
\def\threepp/#1`#2`#3/<#4>{\ifnextchar^{\threeppp/#1`#2`#3/<#4>}%
  {\threeppp/#1`#2`#3/<#4>^{}}}%
\def\threep/#1`#2`#3/{\ifnextchar<{\threepp/#1`#2`#3/}%
  {\threepp/#1`#2`#3/<0>}}%
\def\twoar(#1,#2){{%
 \scalefactor{0.1}%
 \deltax#1\deltay#2%
 \deltaX=\ifnum\deltax<0-\fi\deltax%
 \deltaY=\ifnum\deltay<0-\fi\deltay%
 \Xend\deltax \multiply \Xend by \deltax%
 \Yend\deltay \multiply \Yend by \deltay%
 \advance\Xend by \Yend \multiply \Xend by 3%
 \ifnum \deltaX > \deltaY%
    \multiply \deltaX by 3 \advance \deltaX by \deltaY%
 \else%
    \multiply \deltaY by 3 \advance \deltaX by \deltaY%
 \fi%
 \multiply\deltax by 500%
 \multiply\deltay by 500%
 \xpos\deltax \multiply \xpos by 3 \divide\xpos by \deltaX%
 \Xpos\deltax \multiply \Xpos by \deltaX \divide \Xpos by \Xend%
 \advance \xpos by \Xpos%
 \ypos\deltay \multiply \ypos by 3 \divide\ypos by \deltaX%
 \Ypos\deltay \multiply \Ypos by \deltaX \divide \Ypos by \Xend%
 \advance \ypos by \Ypos%
 \xy \ar@{=>}(\xpos,\ypos) \endxy%
}\ignorespaces}%
\def\iiixiiipppppp(#1,#2)|#3|/#4/<#5>#6<#7>[#8;#9]{%
 \xpos#1\ypos#2\relax%
 \def\next|##1##2##3##4##5##6##7|{\def\xa{##1}\def\xb{##2}%
 \def\xc{##3}\def\xd{##4}\def\xe{##5}\def\xf{##6}\nextt|##7|}%
 \def\nextt|##1##2##3##4##5##6|{\def\xg{##1}\def\xh{##2}%
 \def\xi{##3}\def\xj{##4}\def\xk{##5}\def\xl{##6}}%
 \next|#3|%
 \def\next<##1,##2>{\deltax##1\deltay##2}%
 \next<#5>%
 \def\next<##1,##2>{\deltaX##1\deltaY##2}%
 \next<#7>%
 \def\next##1{\topw##1\relax%
 \ifodd\topw \def\zl{}\else\def\zl{\relax}\fi \divide\topw by 2
 \ifodd\topw \def\zk{}\else\def\zk{\relax}\fi \divide\topw by 2
 \ifodd\topw \def\zj{}\else\def\zj{\relax}\fi \divide\topw by 2
 \ifodd\topw \def\zi{}\else\def\zi{\relax}\fi \divide\topw by 2
 \ifodd\topw \def\zh{}\else\def\zh{\relax}\fi \divide\topw by 2
 \ifodd\topw \def\zg{}\else\def\zg{\relax}\fi \divide\topw by 2
 \ifodd\topw \def\zf{}\else\def\zf{\relax}\fi \divide\topw by 2
 \ifodd\topw \def\ze{}\else\def\ze{\relax}\fi \divide\topw by 2
 \ifodd\topw \def\zd{}\else\def\zd{\relax}\fi \divide\topw by 2
 \ifodd\topw \def\zc{}\else\def\zc{\relax}\fi \divide\topw by 2
 \ifodd\topw \def\zb{}\else\def\zb{\relax}\fi \divide\topw by 2
 \ifodd\topw \def\za{}\else\def\za{\relax}\fi}%
 \next{#6}%
 \def\next[##1`##2`##3`##4`##5`##6`##7`##8`##9]{%
 \def\nodea{##1}\def\nodeb{##2}\def\nodec{##3}%
 \def\noded{##4}\def\nodee{##5}\def\nodef{##6}%
 \def\nodeg{##7}\def\nodeh{##8}\def\nodei{##9}}%
 \next[#8]%
 \def\next[##1`##2`##3`##4`##5`##6`##7]{%
 \def\labela{##1}\def\labelb{##2}\def\labelc{##3}%
 \def\labeld{##4}\def\labele{##5}\def\labelf{##6}\nextt[##7]}%
 \def\nextt[##1`##2`##3`##4`##5`##6]{%
 \def\labelg{##1}\def\labelh{##2}\def\labeli{##3}%
 \def\labelj{##4}\def\labelk{##5}\def\labell{##6}}%
 \next[#9]%
 \def\next/##1`##2`##3`##4`##5`##6`##7/{%
\morphism(\xpos,\ypos)|\xe|/{##5}/<\deltax,0>[\nodeg`\nodeh;\labele]%
 \ifx\zi\empty\relax \morphism(\xpos,\ypos)||/<-/<-\deltaX,0>[\nodeg`0;]\fi%
 \ifx\zd\empty\relax \morphism(\xpos,\ypos)||<0,-\deltaY>[\nodeg`0;]\fi%
 \advance\xpos by \deltax%
 \morphism(\xpos,\ypos)|\xf|/{##6}/<\deltax,0>[\nodeh`\nodei;\labelf]%
 \ifx\ze\empty\relax \morphism(\xpos,\ypos)||<0,-\deltaY>[\nodeh`0;]\fi%
 \advance\xpos by \deltax%
 \ifx\zf\empty\relax \morphism(\xpos,\ypos)||<0,-\deltaY>[\nodei`0;]\fi%
 \ifx\zl\empty\relax \morphism(\xpos,\ypos)||<\deltaX,0>[\nodei`0;]\fi%
 \advance\ypos by \deltay%
 \ifx\zk\empty\relax \morphism(\xpos,\ypos)||<\deltaX,0>[\nodef`0;]\fi%
 \advance\xpos by -\deltax%
 \morphism(\xpos,\ypos)|\xd|/{##4}/<\deltax,0>[\nodee`\nodef;\labeld]%
 \advance\xpos by -\deltax%
 \morphism(\xpos,\ypos)|\xc|/{##3}/<\deltax,0>[\noded`\nodee;\labelc]%
 \ifx\zh\empty\relax \morphism(\xpos,\ypos)||/<-/<-\deltaX,0>[\noded`0;]\fi%
 \advance\ypos by \deltay%
 \morphism(\xpos,\ypos)|\xa|/{##1}/<\deltax,0>[\nodea`\nodeb;\labela]%
 \ifx\zg\empty\relax \morphism(\xpos,\ypos)||/<-/<-\deltaX,0>[\nodea`0;]\fi%
 \ifx\za\empty\relax \morphism(\xpos,\ypos)||/<-/<0,\deltaY>[\nodea`0;]\fi%
 \advance\xpos by \deltax%
 \morphism(\xpos,\ypos)|\xb|/{##2}/<\deltax,0>[\nodeb`\nodec;\labelb]%
 \ifx\zb\empty\relax \morphism(\xpos,\ypos)||/<-/<0,\deltaY>[\nodeb`0;]\fi%
 \advance\xpos by \deltax%
 \ifx\zc\empty\relax \morphism(\xpos,\ypos)||/<-/<0,\deltaY>[\nodec`0;]\fi%
 \ifx\zj\empty\relax \morphism(\xpos,\ypos)||<\deltaX,0>[\nodec`0;]\fi%
 \nextt/##7/}%
 \def\nextt/##1`##2`##3`##4`##5`##6/{%
 \morphism(\xpos,\ypos)|\xi|/{##3}/<0,-\deltay>[\nodec`\nodef;\labeli]%
 \advance\xpos by -\deltax%
 \morphism(\xpos,\ypos)|\xh|/{##2}/<0,-\deltay>[\nodeb`\nodee;\labelh]%
 \advance\xpos by -\deltax%
 \morphism(\xpos,\ypos)|\xg|/{##1}/<0,-\deltay>[\nodea`\noded;\labelg]%
 \advance\ypos by -\deltay%
 \morphism(\xpos,\ypos)|\xj|/{##4}/<0,-\deltay>[\noded`\nodeg;\labelj]%
 \advance\xpos by \deltax%
 \morphism(\xpos,\ypos)|\xk|/{##5}/<0,-\deltay>[\nodee`\nodeh;\labelk]%
 \advance\xpos by \deltax%
 \morphism(\xpos,\ypos)|\xl|/{##6}/<0,-\deltay>[\nodef`\nodei;\labell]}%
 \next/#4/\ignorespaces}%
\def\iiixiiip(#1){\ifnextchar|{\iiixiiipp(#1)}%
  {\iiixiiipp(#1)|aammbblmrlmr|}}%
\def\iiixiiipp(#1)|#2|{\ifnextchar/{\iiixiiippp(#1)|#2|}%
    {\iiixiiippp(#1)|#2|/>`>`>`>`>`>`>`>`>`>`>`>/}}%
\def\iiixiiippp(#1)|#2|/#3/{%
    \ifnextchar<{\iiixiiipppp(#1)|#2|/#3/}%
    {\iiixiiipppp(#1)|#2|/#3/<\default,\default>}}%
\def\iiixiiipppp(#1)|#2|/#3/<#4>{\ifnextchar[{\iiixiiippppp(#1)|#2|/#3/%
   <#4>0<0,0>}{\iiixiiippppp(#1)|#2|/#3/<#4>}}%
\def\iiixiiippppp(#1)|#2|/#3/<#4>#5{\ifnextchar<%
   {\iiixiiipppppp(#1)|#2|/#3/<#4>{#5}}%
   {\iiixiiipppppp(#1)|#2|/#3/<#4>{#5}<400,400>}}%
\def\iiixiipppppp(#1,#2)|#3|/#4/<#5>#6<#7>[#8;#9]{%
 \xpos#1\ypos#2\relax%
 \def\next|##1##2##3##4##5##6##7|{\def\xa{##1}\def\xb{##2}%
 \def\xc{##3}\def\xd{##4}\def\xe{##5}\def\xf{##6}\def\xg{##7}}%
 \next|#3|%
 \def\next<##1,##2>{\deltax##1\deltay##2}%
 \next<#5>%
 \deltaX#7
 \topw#6
 \def\next{%
 \ifodd\topw \def\za{}\else\def\za{\relax}\fi \divide\topw by 2
 \ifodd\topw \def\zb{}\else\def\zb{\relax}\fi \divide\topw by 2
 \ifodd\topw \def\zc{}\else\def\zc{\relax}\fi \divide\topw by 2
 \ifodd\topw \def\zd{}\else\def\zd{\relax}\fi}%
 \next%
 \def\next[##1`##2`##3`##4`##5`##6]{%
 \def\nodea{##1}\def\nodeb{##2}\def\nodec{##3}%
 \def\noded{##4}\def\nodee{##5}\def\nodef{##6}}%
 \next[#8]%
 \def\next[##1`##2`##3`##4`##5`##6`##7]{%
 \def\labela{##1}\def\labelb{##2}\def\labelc{##3}%
 \def\labeld{##4}\def\labele{##5}\def\labelf{##6}\def\labelg{##7}}%
 \next[#9]%
 \def\next/##1`##2`##3`##4`##5`##6`##7/{%
 \ifx\zc\empty\relax\morphism(\xpos,\ypos)<\deltaX,0>[0`\noded;]\fi%
 \advance\xpos by\deltaX%
 \morphism(\xpos,\ypos)|\xc|/##3/<\deltax,0>[\noded`\nodee;\labelc]%
 \advance\xpos by \deltax%
 \morphism(\xpos,\ypos)|\xd|/##4/<\deltax,0>[\nodee`\nodef;\labeld]%
 \advance\xpos by \deltax%
 \ifx\zd\empty\relax  \morphism(\xpos,\ypos)<\deltaX,0>[\nodef`0;]\fi%
 \advance\xpos by -\deltaX  \advance\xpos by -\deltax
 \advance\xpos by -\deltax  \advance\ypos by \deltay
 \ifx\za\empty\relax\morphism(\xpos,\ypos)<\deltaX,0>[0`\nodea;]\fi%
 \advance\xpos by\deltaX%
 \morphism(\xpos,\ypos)|\xa|/##1/<\deltax,0>[\nodea`\nodeb;\labela]%
 \morphism(\xpos,\ypos)|\xe|/##5/<0,-\deltay>[\nodea`\noded;\labele]%
 \advance\xpos by \deltax%
 \morphism(\xpos,\ypos)|\xb|/##2/<\deltax,0>[\nodeb`\nodec;\labelb]%
 \morphism(\xpos,\ypos)|\xf|/##6/<0,-\deltay>[\nodeb`\nodee;\labelf]%
 \advance\xpos by \deltax%
 \morphism(\xpos,\ypos)|\xg|/##7/<0,-\deltay>[\nodec`\nodef;\labelg]%
 \ifx\zb\empty\relax \morphism(\xpos,\ypos)<\deltaX,0>[\nodec`0;]\fi}%
 \next/#4/\ignorespaces}%
\def\iiixiip(#1){\ifnextchar|{\iiixiipp(#1)}%
  {\iiixiipp(#1)|aabblmr|}}%
\def\iiixiipp(#1)|#2|{\ifnextchar/{\iiixiippp(#1)|#2|}%
    {\iiixiippp(#1)|#2|/>`>`>`>`>`>`>/}}%
\def\iiixiippp(#1)|#2|/#3/{%
    \ifnextchar<{\iiixiipppp(#1)|#2|/#3/}%
    {\iiixiipppp(#1)|#2|/#3/<\default,\default>}}%
\def\iiixiipppp(#1)|#2|/#3/<#4>{\ifnextchar[{\iiixiippppp(#1)|#2|/#3/%
   <#4>{0}<0>}{\iiixiippppp(#1)|#2|/#3/<#4>}}%
\def\iiixiippppp(#1)|#2|/#3/<#4>#5{\ifnextchar<%
   {\iiixiipppppp(#1)|#2|/#3/<#4>{#5}}%
   {\iiixiipppppp(#1)|#2|/#3/<#4>{#5}<400>}}%
\def\node#1(#2,#3)[#4]{%
\expandafter\gdef\csname x@#1\endcsname{#2}%
\expandafter\gdef\csname y@#1\endcsname{#3}%
\expandafter\gdef\csname ob@#1\endcsname{#4}%
\ignorespaces}%
\def\arrowp|#1|{\ifnextchar/{\arrowpp|#1|}{\arrowpp|#1|/>/}}%
\def\arrowpp|#1|/#2/[#3`#4;#5]{%
\xfinish=\csname x@#4\endcsname%
\yfinish=\csname y@#4\endcsname%
\advance\xfinish by -\csname x@#3\endcsname%
\advance\yfinish by -\csname y@#3\endcsname%
\morphism(\csname x@#3\endcsname,\csname y@#3\endcsname)|#1|/#2/%
<\xfinish,\yfinish>[\csname ob@#3\endcsname`\csname ob@#4\endcsname;#5]%
}%
\def\Loop(#1,#2)#3(#4,#5){\POS(#1,#2)*+!!<0ex,\axis>{#3}\ar@(#4,#5)}%
\def\iloop#1(#2,#3){\xy\Loop(0,0)#1(#2,#3)\endxy}%
     \let \PATHafterPOS\PATHafterPOS@default%
     \let \arsavedPATHafterPOS@@\relax%
     \let\afterar@@\relax%
\xydef@\endxyobj{\if\inxy@\else\xyerror@{Unexpected \string\endxy}{}\fi%
>  \relax%
>   \dimen@=\Y@max \advance\dimen@-\Y@min%
>   \ifdim\dimen@<\z@ \dimen@=\z@ \Y@min=\z@ \Y@max=\z@ \fi%
>   \dimen@=\X@max \advance\dimen@-\X@min%
>   \ifdim\dimen@<\z@ \dimen@=\z@ \X@min=\z@ \X@max=\z@ \fi%
>   \edef\tmp@{\egroup%
>     \setboxz@h{\kern-\the\X@min \boxz@}%
>     \ht\z@=\the\Y@max \dp\z@=-\the\Y@min \wdz@=\the\dimen@%
>     \noexpand\maybeunraise@ \raise\dimen@\boxz@%
>     \noexpand\recoverXyStyle@ \egroup \noexpand\xy@end%
>     \U@c=\the\Y@max \advance\U@c-\the\Y@c%
>     \D@c=-\the\Y@min \advance\D@c\the\Y@c%
>     \L@c=-\the\X@min  \advance\L@c\the\X@c%
>     \R@c=\the\X@max  \advance\R@c-\the\X@c%
>    }\tmp@}%
\gdef\xymerge@MinMax{}%
\xydef@\twocell{\hbox\bgroup\xysave@MinMax\@twocell}%
\xydef@\uppertwocell{\hbox\bgroup\xysave@MinMax\@uppertwocell}%
\xydef@\lowertwocell{\hbox\bgroup\xysave@MinMax\@lowertwocell}%
\xydef@\compositemap{\hbox\bgroup\xysave@MinMax\@compositemap}%
\xydef@\xysave@MinMax{\xdef\xymerge@MinMax{%
   \noexpand\ifdim\X@max<\the\X@max \X@max=\the\X@max\noexpand\fi%
   \noexpand\ifdim\X@min>\the\X@min \X@min=\the\X@min\noexpand\fi%
   \noexpand\ifdim\Y@max<\the\Y@max \Y@max=\the\Y@max\noexpand\fi%
   \noexpand\ifdim\Y@min>\the\Y@min \Y@min=\the\Y@min\noexpand\fi%
  }}%
\xydef@\drop@Twocell{\boxz@ \xymerge@MinMax}%
\xydef@\twocell@DONE{%
  \edef\tmp@{\egroup%
   \X@min=\the\X@min \X@max=\the\X@max%
   \Y@min=\the\Y@min \Y@max=\the\Y@max}\tmp@%
  \L@c=\X@c \advance\L@c-\X@min \R@c=\X@max \advance\R@c-\X@c%
  \D@c=\Y@c \advance\D@c-\Y@min \U@c=\Y@max \advance\U@c-\Y@c%
  \ht\z@=\U@c \dp\z@=\D@c \dimen@=\L@c \advance\dimen@\R@c \wdz@=\dimen@%
  \computeLeftUpness@%
  \setboxz@h{\kern-\X@p \raise-\Y@c\boxz@ }%
  \dimen@=\L@c \advance\dimen@\R@c \wdz@=\dimen@ \ht\z@=\U@c \dp\z@=\D@c%
  \Edge@c={\rectangleEdge}\Invisible@false \Hidden@false%
  \edef\Drop@@{\noexpand\drop@Twocell%
   \noexpand\def\noexpand\Leftness@{\Leftness@}%
   \noexpand\def\noexpand\Upness@{\Upness@}}%
  \edef\Connect@@{\noexpand\connect@Twocell%
   \noexpand\ifdim\X@max<\the\X@max \X@max=\the\X@max\noexpand\fi%
   \noexpand\ifdim\X@min>\the\X@min \X@min=\the\X@min\noexpand\fi%
   \noexpand\ifdim\Y@max<\the\Y@max \Y@max=\the\Y@max\noexpand\fi%
   \noexpand\ifdim\Y@min>\the\Y@min \Y@min=\the\Y@min\noexpand\fi }%
  \xymerge@MinMax%
}%
\newcommand{\ed}{\end{document}}
\def\ve{\varepsilon}
\newcommand{\ta}[2]{#1_#2\tilde{\phantom{.}}}
\newcommand{\cb}[1]{\mathcal{#1}}
\newcommand{\iu}{{\underline{i}}}
\newcommand{\ju}{{\underline{j}}}
\newcommand{\ku}{{\underline{k}}}
\newcommand{\End}{\mathrm{End}}
\newcommand{\cml}{{\mathrm{cmul}}}
\newcommand{\tp}{\ta{T}{\ve}}
\newcommand{\clkj}[3]{c_{#1,#2}^{#3}}
\newcommand{\phm}{\phantom{-}}
\theoremstyle{plain}
\newtheorem{theorem}{Theorem}
\newtheorem{corollary}{Corollary}
\newtheorem{lemma}{Lemma}
\newtheorem{proposition}{Proposition}
\theoremstyle{definition}
\newtheorem{definition}{Definition}
\newtheorem{example}{Example}
\begin{document}
\title[On the Transposition Anti-Involution in Real Clifford Algebras II: ...]
{On the Transposition Anti-Involution in\\Real Clifford Algebras II:\\
Stabilizer Groups of Primitive Idempotents}
%----------Author 1
\author[Rafa\l \ Ab\l amowicz]{Rafa\l \ Ab\l amowicz}

\address{%
Department of Mathematics, Box 5054\\
Tennessee Technological University\\
Cookeville, TN 38505, USA}
\email{rablamowicz@tntech.edu}

%\thanks{This work was completed with the support of our \TeX-pert.}
%----------Author 2
\author[Bertfried Fauser]{Bertfried Fauser}
\address{%
School of Computer Science\\
The University of Birmingham\\
Edgbaston-Birmingham, W. Midlands, B15 2TT\\
United Kingdom}
\email{B.Fauser@cs.bham.ac.uk}

\subjclass{11E88, 15A66, 16S34, 20C05, 68W30}

\keywords{correlation, dual space, exterior algebra, grade involution,  group
action, group ring, indecomposable module, involution, left regular
representation, minimal left ideal, monomial order, primitive idempotent,
reversion, simple algebra, semisimple algebra, spinor, stabilizer, transversal,
universal Clifford algebra}

\date{May 13, 2010} %%\today}

%----------additions
%\dedicatory{To my boss}
%%% ----------------------------------------------------------------------

\begin{abstract}
In the first article of this work~\cite{part1} we showed that real Clifford algebras $\cl(V,Q)$  posses a unique transposition anti-involution $\tp$. There it was shown that the map reduces to reversion (resp. conjugation) for any Euclidean (resp.  anti-Euclidean) signature. When applied to a general element of the algebra,  it results in transposition of the associated matrix of that element in the  left regular representation of the algebra. In this paper we show that, depending on the value of $(p-q) \bmod 8$, where $\ve=(p,q)$ is the signature of $Q$, the anti-involution gives rise to transposition, Hermitian complex, and Hermitian quaternionic conjugation of representation matrices in spinor representations. We realize spinors in minimal left ideals $S=\cl_{p,q}f$  generated by a primitive idempotent~$f$. The map $\tp$ allows us to define a dual spinor space $S^\ast$, and a \emph{new spinor norm} on $S$, which is different, in general, from two spinor norms known to exist. We study a transitive action of generalized Salingaros' multiplicative vee groups $G_{p,q}$ on complete sets of mutually annihilating primitive idempotents. Using the normal stabilizer subgroup $G_{p,q}(f)$ we construct left transversals, spinor bases, and maps between spinor spaces for different orthogonal idempotents $f_i$ summing up to $1$. We classify the stabilizer groups according to the signature
in simple and semisimple cases.
\end{abstract}
%%% ----------------------------------------------------------------------
\maketitle
%%% ----------------------------------------------------------------------
%\tableofcontents
\section{Introduction}\label{introduction}
\medskip

This paper is a continuation of~\cite{part1}. Hence, in particular, all notation is the same. Recall that in a universal real Clifford algebra $\cl_n\cong\cl(V,Q)$ of a non-degenerate quadratic real vector space of dimension $n$, we have introduced a certain transposition anti-involution $\tp$ of the algebra as a unique extension of a suitable orthogonal map $t_\varepsilon:V \rightarrow V^\flat$. Using the identification $V^\flat \cong V^\ast,$ the orthogonal map $t_\varepsilon$ can be viewed as a symmetric non-degenerate correlation on~$V$.  For the study of correlations and involutions we refer to \cite{porteous}.

Following \cite{part1}, let $\cl_n$ and $\cl_n^\ast$ denote, respectively, the universal Clifford algebra over $V$ and the universal dual Clifford 
algebra\footnote{As it is explained in \cite{part1}, this is not the linear dual Clifford algebra, which is not employed in this paper, but the Clifford algebra over the space $V^\flat\cong V^*$ for the same quadratic form $Q$.} over $V^\ast$ where $\dim_\BR V = \dim_\BR V^\ast = n.$ Let $\cb{B}$ and $\cb{B}^\ast$ be, respectively, bases in $\cl_n$ and $\cl_n^\ast$ sorted in the monomial order 
$\mathtt{InvLex}$ and let $L_u:\cl_n \rightarrow \cl_n,$ $ u \in \cl_n$, be the left multiplication operator. In \cite[Prop. 2(v-vi)]{part1} it was  shown that if $[L_u]$ (resp. $[L_{\tp(u)}]$) is the matrix of the operator $L_u$ (resp. $L_{\tp(u)}$) relative to the basis $\cb{B}$, then these matrices are related via the matrix transposition. As a consequence, the anti-involution $\tp$ applied to $u$ results in the transposition of the matrix $[L_u]$ in the left-regular representation  $L_u: \cl_n\rightarrow \cl_n$.

In Examples~4 and 5 in~\cite{part1} it was also shown that the matrix transposition induced by the anti-involution $\tp$ resulted also in matrix transposition in real spinor representations of simple Clifford algebras $\cl_n \cong \Mat(\BR,N)$, that is, in signatures $(p,q)$ such $(p-q) \neq 1 \bmod 4$ and $(p-q) = 0,1,2 \bmod 8.$ The aim of this paper is a detailed study of $\tp$ for spinor representations in all signatures.    

In Section~\ref{action1} we study spinor representation of central \emph{simple Clifford algebras} $\cl_{p,q}$ in a minimal left ideal $S=\cl_{p,q}f$ generated by a primitive idempotent $f$ and consider 
generalized Salingaros' multiplicative vee groups $G_{p,q}$ of order $2^{1+p+q}$. It is well-known that $G_{p,q}$ acts via conjugation on $\cl_{p,q}$. We find a stabilizer group $G_{p,q}(f)$ of $f$ and show that:
\begin{itemize}
\item[\textbf{(i)}] $G_{p,q}(f)$ is normal in $G_{p,q}$ and of order $2^{1+p+r_{q-p}}$.
\item[\textbf{(ii)}] The set $\cb{F}$ of all $N=2^k$ primitive mutually annihilating idempotents $f_i$, determined by $f$, constitutes one orbit of $f$ under the (transitive) action of $G_{p,q}$.
\item[\textbf{(iii)}] Monomials $m_i$ in a (non-canonical) left transversal of $G_{p,q}(f)$ together with $f$ determine a spinor basis in the left ideal $\cl_{p,q}f$.
\item[\textbf{(iv)}] The division rings  $\mathbb{K}=f\cl_{p,q}f$ are $G_{p,q}$-invariant.
\item[\textbf{(v)}] $G_{p,q}$ permutes the spinor basis elements modulo the commutator subgroup $G_{p,q}'$ by acting on them via the left multiplication.
\end{itemize}
We classify all stabilizer groups and analyze their structure in dimension up to nine. We recognize that the anti-involution  $\tp$ is similar to a map known in the theory of group rings as 
$\ast: k[G]\rightarrow k[G]$ sending  $\sum_{x \in G} a_x x \mapsto \sum a_x x^{-1}$~\cite{passman}.\footnote{It may interest the curious reader that this map is in fact the antipode on the Hopf algebra over the group algebra $k[G]$.} This map allows one to define a spinor norm on $\cl_{p,q}f$ which is different, in general, from $\beta_{\pm}$  norms on spinor spaces known to exist~\cite{lounesto}. This spinor norm is invariant under a group $G_{p,q}^\varepsilon$ defined in this section. Finally, we show that the real anti-involution  $\tp$ of $\cl_{p,q}$ is responsible for transposition, Hermitian complex, and Hermitian quaternionic conjugation of a matrix $[u]$ in spinor representation, where $u \in \cl_{p,q}$, depending on the value of $(p-q) \bmod 8.$ 

In Section~\ref{action2}, we extend results from simple to \emph{semisimple Clifford algebras} and realize their faithful spinor representation in $S \oplus \hat{S}$. We show that:
\begin{itemize}
\item[\textbf{(i)}] $G_{p,q}(f) = G_{p,q}(\hat{f})$ for any primitive idempotent $f$.
\item[\textbf{(ii)}] $G_{p,q}(f)$ is normal in $G_{p,q}$ and its order is $2^{2+p+r_{q-p}}$. 
\item[\textbf{(iii)}] Since the center of a semisimple Clifford algebra is non-trivial and includes a unit pseudoscalar $\be_{1 \cdots n}$ which squares to $1$, there are two orthogonal central idempotents
$\frac12(1 \pm \be_{1 \cdots n})$ adding to $1$. Then, the set $\cb{F}$ of $2^{k}$ mutually annihilating primitive idempotents adding to the unity partitions into two sets $\cb{F}_1$ and $\cb{F}_2$, each with $N=2^{k-1}$ elements. As expected, the idempotents in each set add up to one of the two central idempotents. 
\item[\textbf{(iv)}] Like in the simple case, $k = q - r_{q-p}.$ We show that the set $\cb{F}_1$ coincides with the orbit $\cb{O}(f)$ while the set $\cb{F}_2$  coincides with the orbit $\cb{O}(\hat{f})$, each under the conjugate action of the vee group $G_{p,q}$.
\item[\textbf{(v)}] No element of $G_{p,q}$ relates the two orbits.
\item[\textbf{(vi)}] The transversal of $G_{p,q}(f)$ in $G_{p,q}$ generates, like in the simple case, a spinor basis in $S$, hence in $\hat{S}$ too, and it permutes the spinor basis elements.
\item[\textbf{(vii)}] Finally, the transversal elements act transitively via conjugation on each orbit.
\end{itemize}
The remaining results from the simple case apply to the semisimple case. In particular, the real map $\tp$ is responsible for transposition or the Hermitian quaternionic conjugation of a matrix $[u]$ for any $u \in \cl_{p,q}$. This time, we need to work with double spinor representation in $S \oplus \hat{S}$ over the double ring $\BK \oplus \hat{\BK}$, depending on the value of $(p-q) \bmod 8.$   

In Section~\ref{conclusions}, we summarize our results. In Appendix \ref{AppendC} we collect information about the stabilizer groups $G_{p,q}(f)$ for simple Clifford algebras in Tables~1, 2, and 3,  and for semisimple algebras  in Tables~4 and~5.   

\section{Action on spinor spaces in simple Clifford algebras}\label{action1}
\medskip

In this section we will show that the anti-involution $\ta{T}{\ve}$ corresponds to Hermitian conjugation of matrices in spinor representation of $\cl_{p,q}.$ In particular, it acts as Hermitian conjugation on the spinor space $S=\cl_{p,q}f$ where $f$ is a primitive idempotent. For the theory of spinor representations of Clifford algebras, we refer to~\cite{hahn,helmmicali,lounesto} while 
in~\cite{ablamowicz1998} one can find many computational examples. We also investigate the conjugate action of the Salingaros vee group $G_{p,q}$ on primitive idempotents in a simple algebra $\cl_{p,q}$ and we classify their stabilizer groups. For the necessary terminology and results of the theory of groups, we refer to~\cite{rotman}.  

We begin with summarizing background information on Clifford algebras. We denote the reals as $\mathbb{R}$, complex numbers as $\mathbb{C}$ and quaternions as $\mathbb{H}$.
\begin{theorem}
Let $\cl_{p,q}$ be the universal Clifford algebra as defined above.
\begin{itemize}
\item[(a)] When $p-q \neq 1 \bmod 4$ then $\cl_{p,q}$ is a simple algebra of dimension $2^n,$ $n=p+q$, isomorphic with a full matrix algebra $\Mat(2^k, \BK)$ over a division ring $\BK$ where 
$k = q - r_{q-p}$ and $r_i$ is the Radon-Hurwitz number.\footnote{The Radon-Hurwitz number is defined by recursion as $r_{i+8}=r_i+4$ and these initial values: $r_0=0,r_1=1,r_2=r_3=2,r_4=r_5=r_6=r_7=3.$ 
See~\cite{hahn,lounesto}.} Here $\BK$ is one of $\BR, \BC$ or $\BH$. 
\item[(b)] When $p-q = 1 \bmod 4$ then $\cl_{p,q}$ is a semisimple algebra of dimension $2^n,$ $n=p+q$ isomorphic to $\Mat(2^{k-1}, \BK) \oplus \Mat(2^{k-1}, \BK),$ $k = q - r_{q-p}$, where $\BK$ is isomorphic to $\BR$ or $\BH$ depending whether $p-q=1 \bmod 8$ or $p-q=5 \bmod 8.$ Each of the two simple direct components of $\cl_{p,q}$ is projected out by one of the two central idempotents 
$\frac12(1\pm \be_{12 \ldots n}).$
\item[(c)] Any polynomial $f$ in $\cl_{p,q}$ expressible as a product
\begin{equation}
f = \frac12(1\pm \be_{\iu_1})\frac12(1\pm \be_{\iu_2})\cdots\frac12(1\pm \be_{\iu_k})
\label{eq:f}
\end{equation}
where $\be_{\iu_i},$ $i=1,\ldots,k,$ are commuting basis monomials in $\cb{B}$ with square $1$ and $k = q - r_{q-p}$ generating a group of order $2^k$, is a primitive idempotent in $\cl_{p,q}.$ Furthermore, $\cl_{p,q}$ has a complete set of $2^k$ such primitive mutually annihilating idempotents which add up to the unit $1$ of $\cl_{p,q}.$\footnote{Two idempotents $f_1$ and $f_2$ are mutually annihilating when $f_1f_2=f_2f_1=0.$ Such idempotents are also called \textit{orthogonal}. A decomposition of the unity $1$ into a sum of mutually annihilating primitive idempotents is called for short a \textit{primitive idempotent decomposition}.}
\item[(d)] When $(p-q) \bmod 8$ is $0,1,2,$ or $3,7$, or $4,5,6$, then the division ring $\BK = f \cl_{p,q}f$ is isomorphic to $\BR$ or $\BC$ or $\BH$, and the map $S \times \BK \rightarrow S,$ $(\psi,\lambda) \mapsto \psi\lambda$ defines a right  $\BK$-linear structure on the minimal left ideal $S=\cl_{p,q}f.$ 
\item[(e)] When $\cl_{p,q}$ is simple, then the map
\begin{equation}
\cl_{p,q} \stackrel{\gamma}{\longrightarrow} \End_\BK(S), \quad u \mapsto \gamma(u), \quad \gamma(u)\psi = u \psi
\label{eq:simple}
\end{equation}
gives an irreducible and faithful representation of $\cl_{p,q}$ in $S.$
\item[(f)] When $\cl_{p,q}$ is semisimple, then the map
\begin{equation}
\cl_{p,q} \stackrel{\gamma}{\longrightarrow} \End_{\BK \oplus \hat{\BK}}(S \oplus \hat{S}), \quad u \mapsto \gamma(u), \quad \gamma(u)\psi = u \psi
\label{eq:semisimple}
\end{equation}
gives a faithful but reducible representation of $\cl_{p,q}$ in the double spinor space $S \oplus \hat{S}$ where $S =\{ u f\, |\, u \in \cl_{p,q}\}$, $\hat{S} =\{ u \hat{f}\, |\, u \in \cl_{p,q}\}$ and 
$\hat{\phantom{m}}$ stands for the grade-involution in $\cl_{p,q}.$ In this case, the ideal $S \oplus \hat{S}$ is a right $\BK \oplus \hat{\BK}$-linear structure,  
$\hat{\BK} = \{\hat{\lambda} \,| \, \lambda \in \BK \}$, and $\BK \oplus \hat{\BK}$ is isomorphic to $\BR \oplus \BR$ when $p-q=1 \bmod 8$ or to $\BH \oplus \hat{\BH} $ when 
$p-q=5 \bmod 8.$\footnote{See \cite[Chapt. 9]{porteous} for a discussion of quaternionic linear spaces, and \cite[Chapt. 12]{porteous} for double fields and their anti-involutions.}
\end{itemize}
\label{th:structure}
\end{theorem}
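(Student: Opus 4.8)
The statement gathers the classical structure theory of real Clifford algebras together with its standard module-theoretic consequences, so the plan is to prove (a)--(b) first and then read off (c)--(f). For (a) and (b) I would proceed in two stages. The split between the simple and semisimple cases is controlled by the pseudoscalar $\omega=\be_{1\cdots n}$, which is central exactly when $n=p+q$ is odd (for even $n$ it anticommutes with every generator); a standard computation gives $\omega^{2}=(-1)^{q+n(n-1)/2}$, and for odd $n$ this is $+1$ precisely when $p-q\equiv1\bmod4$. Hence when $p-q\equiv1\bmod4$ the centre is $\BR[\omega]\cong\BR\oplus\BR$ and $\cl_{p,q}=\cl_{p,q}z_{+}\oplus\cl_{p,q}z_{-}$ splits along the central idempotents $z_{\pm}=\frac12(1\pm\omega)$, while otherwise the centre is $\BR$ (even $n$) or $\BR[\omega]\cong\BC$ (odd $n$, $\omega^{2}=-1$) and $\cl_{p,q}$ is simple. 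To obtain the precise Wedderburn type I would then invoke the periodicity isomorphisms
\[
\cl_{0,m+2}\cong\cl_{m,0}\otimes_{\BR}\cl_{0,2},\qquad\cl_{m+2,0}\cong\cl_{0,m}\otimes_{\BR}\cl_{2,0},\qquad\cl_{p+1,q+1}\cong\cl_{p,q}\otimes_{\BR}\cl_{1,1},
\]
the base identifications $\cl_{0,0}\cong\BR$, $\cl_{0,1}\cong\BC$, $\cl_{1,0}\cong\BR\oplus\BR$, $\cl_{1,1}\cong\cl_{2,0}\cong\Mat(2,\BR)$, $\cl_{0,2}\cong\BH$, and the tensor rules $\BC\otimes_{\BR}\BC\cong\BC\oplus\BC$, $\BC\otimes_{\BR}\BH\cong\Mat(2,\BC)$, $\BH\otimes_{\BR}\BH\cong\Mat(4,\BR)$, $\BK\otimes_{\BR}\Mat(m,\BR)\cong\Mat(m,\BK)$. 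Iterating these yields the eightfold period $\cl_{p+8,q}\cong\cl_{p,q}\otimes\Mat(16,\BR)$, and an induction on $p+q$ verifies that the matrix exponent is exactly $k=q-r_{q-p}$ (using only the Radon--Hurwitz recursion $r_{i+8}=r_{i}+4$ and its listed values) and that $\BK$ is $\BR,\BC,\BH$ according as $(p-q)\bmod8$ lies in $\{0,1,2\}$, $\{3,7\}$, or $\{4,5,6\}$; in the semisimple case the two summands are $\Mat(2^{k-1},\BK)$, cut out by $z_{\pm}$, with $\BK\cong\BR$ when $p-q\equiv1\bmod8$ and $\BK\cong\BH$ when $p-q\equiv5\bmod8$.

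For (c) I would let $\mathcal{A}\subseteq\cl_{p,q}$ denote the subalgebra generated by the commuting blades $\be_{\iu_{1}},\dots,\be_{\iu_{k}}$, each squaring to $1$. The hypothesis that they generate a group of order $2^{k}$ says precisely that the $2^{k}$ products of distinct $\be_{\iu_{i}}$ are, up to sign, pairwise distinct elements of $\cb{B}$, so $\mathcal{A}$ is commutative, of $\BR$-dimension $2^{k}$, and isomorphic to $\BR[(\BZ/2)^{k}]\cong\BR^{2^{k}}$; therefore it contains exactly $2^{k}$ primitive orthogonal idempotents, namely the products $\eqref{eq:f}$ over all sign choices, and these add to $1$ since $\sum\prod_{i}\frac12(1\pm\be_{\iu_{i}})$ telescopes. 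Denote them $f_{1},\dots,f_{2^{k}}$, so that $\cl_{p,q}=\bigoplus_{j}\cl_{p,q}f_{j}$ as a direct sum of left ideals. Since $\cl_{p,q}$ is semisimple and, by (a)--(b), all its minimal left ideals have one common $\BR$-dimension $d$ with $\dim_{\BR}\cl_{p,q}=2^{k}d$, decomposing each $\cl_{p,q}f_{j}$ into $m_{j}\ge1$ minimal left ideals forces $\sum_{j=1}^{2^{k}}m_{j}=2^{k}$, whence every $m_{j}=1$; thus each $f_{j}$ is primitive and each $\cl_{p,q}f_{j}$ is a minimal left ideal, which is (c).

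Parts (d)--(f) are then formal consequences. For (d): for a primitive idempotent $f$ the module $S=\cl_{p,q}f$ is simple, so $\End_{\cl_{p,q}}(S)$ is a division ring by Schur, and the standard identification $\BK=f\cl_{p,q}f\cong\End_{\cl_{p,q}}(S)^{\mathrm{op}}$ makes $\BK$ a finite-dimensional real division ring, hence $\BR,\BC$, or $\BH$ by Frobenius; comparison with $\cl_{p,q}\cong\Mat(2^{k},\BK')$ from (a) identifies $\BK\cong\BK'$ as the function of $(p-q)\bmod8$ stated, and $(\psi,\lambda)\mapsto\psi\lambda$ is well defined and right $\BK$-linear because $\psi f=\psi$ on $S$ and the module axioms descend from multiplication in $\cl_{p,q}$. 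For (e): simplicity of $\cl_{p,q}$ makes $\ker\gamma$ a proper two-sided ideal, hence $0$, so $\gamma$ is faithful, and it is irreducible because $S$ is simple; since $\dim_{\BK}S=2^{k}$, $\gamma$ is a dimension-preserving injection $\cl_{p,q}\hookrightarrow\End_{\BK}(S)$, hence onto. For (f): write $\cl_{p,q}=A_{+}\oplus A_{-}$ with $A_{\pm}=\cl_{p,q}z_{\pm}$ simple and $A_{+}A_{-}=A_{-}A_{+}=0$ (as $z_{+}z_{-}=\frac14(1-\omega^{2})=0$); a primitive idempotent lies in one block, say $f\in A_{+}$, and since the grade involution sends $\omega$ to $(-1)^{n}\omega=-\omega$ for odd $n$, it interchanges $z_{\pm}$ and hence $A_{\pm}$, so $\hat f\in A_{-}$ is again primitive; then $S=\cl_{p,q}f=A_{+}f$ and $\hat S=\cl_{p,q}\hat f=A_{-}\hat f$ are the simple modules over $A_{+}$ and $A_{-}$ respectively, $\gamma$ restricts on $A_{\pm}$ to the isomorphism of (e), so $\gamma$ is an isomorphism $\cl_{p,q}\xrightarrow{\ \sim\ }\End_{\BK\oplus\hat\BK}(S\oplus\hat S)$ --- faithful, but reducible because $S$ and $\hat S$ are proper nonzero invariant submodules --- and $\hat\BK=\{\hat\lambda:\lambda\in\BK\}$ is ring-isomorphic to $\BK$, which by (b) is $\BR$ or $\BH$, so $\BK\oplus\hat\BK$ is $\BR\oplus\BR$ or $\BH\oplus\hat\BH$ as $p-q\equiv1$ or $5\bmod8$.

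The conceptual substance, and essentially the only place where real work is buried, is the periodicity bookkeeping of (a)--(b): the isomorphisms themselves are elementary, but transporting the exponent $k=q-r_{q-p}$ and the division ring $\BK$ correctly around the eightfold period, and matching them with the residues $(p-q)\bmod8$, is delicate and index-prone --- this is where I would expect to spend the effort, unless one simply quotes the classification tables of~\cite{porteous,lounesto}. Everything downstream is Wedderburn--Artin together with Schur's lemma and Frobenius' theorem, the sole substantive step being the count $\sum m_{j}=2^{k}\Rightarrow m_{j}=1$ that promotes an idempotent of the commutative subalgebra $\mathcal{A}$ to a primitive idempotent of the whole algebra $\cl_{p,q}$.
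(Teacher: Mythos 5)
The paper does not prove Theorem~\ref{th:structure} at all: it is stated as a summary of classical background (``We begin with summarizing background information on Clifford algebras'') with the relevant references being~\cite{hahn,helmmicali,lounesto,porteous}, so there is no in-paper argument to compare against. Your reconstruction is the standard one those references give --- pseudoscalar computation $\omega^{2}=(-1)^{q+n(n-1)/2}$ to separate the simple from the semisimple case, the three periodicity isomorphisms plus base cases and the tensor rules for $\BR,\BC,\BH$ to pin down the Wedderburn type and the exponent $k=q-r_{q-p}$, the group-algebra observation $\mathcal{A}\cong\BR[(\BZ/2)^{k}]\cong\BR^{2^{k}}$ together with the count $\sum m_{j}=2^{k}\Rightarrow m_{j}=1$ for primitivity in (c), and Schur/Frobenius/Wedderburn for (d)--(f) --- and it is correct. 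One small point worth keeping in mind: the pivot in (c), that the number of minimal left ideals in a decomposition of $\cl_{p,q}$ is exactly $2^{k}$ in both the simple and semisimple cases, does silently rely on (a)--(b) (in particular that the two semisimple blocks are matrix algebras of the same size over the same $\BK$, so all simple modules share a common $\BR$-dimension), which you invoke but do not reprove; since (a)--(b) are established first this is a legitimate forward reference, but the logical dependence should be flagged if the proof were written out in full.
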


We begin by first observing the following basic properties of $\ta{T}{\ve}$.

\begin{lemma} Let $\be_\iu$ be any basis monomial in $\cb{B}$ where $\cb{B}$ is a basis of $\cl_{p,q}$ consisting of Grassmann basis monomials. Then,
\begin{itemize}
\item[(i)] $\ta{T}{\ve}(\be_\iu) = \be_\iu^{-1}.$
\item[(ii)] If $\be_\iu^2=1$ (resp. $\be_\iu^2=-1$) then $\ta{T}{\ve}(\be_\iu) = \be_\iu$ (resp. $\ta{T}{\ve}(\be_\iu) = -\be_\iu$).
\item[(iii)] Let $\be_{\iu_1},\be_{\iu_2},\ldots,\be_{\iu_k}$ be a set of mutually commuting basis monomials in $\cb{B}$ that square to $1$ where $k = q - r_{q-p}.$ Then, the primitive idempotent~(\ref{eq:f}) is invariant under $\ta{T}{\ve}$, that is, $\ta{T}{\ve}(f)=f.$
\item[(iv)] In Euclidean $(p,0)$ and anti-Euclidean $(0,q)$ signatures, let $\bv$ be a $1$-vector belonging to $(V,Q)$ in $\cl_{p,0}$ or $\cl_{0,q},$ respectively.  That is, let
$\bv = \sum_i^n v_i \be_i$ where $v_i \in \BR$ and $n = \dim_\BR V.$ Then, 
\begin{itemize}
\item[(a)] $\bv \tp(\bv) = \tp(\bv)\bv = \sum_i^n v_i^2$.
\item[(b)] If $\bv \tp(\bv) \neq 0,$ then $\bv^{-1} = \dfrac{\tp(\bv)}{\bv \tp(\bv)}.$ 
\end{itemize}  
\end{itemize}
\label{properties1}
\end{lemma}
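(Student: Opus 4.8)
The plan is to establish the four items by direct computation from the definition of $\tp$ as the unique anti-involution extending the orthogonal correlation $t_\ve\colon V\to V^\flat$, together with the basic fact recalled from~\cite{part1} that on a $1$-vector $\be_i$ one has $\tp(\be_i)=\be_i$ in a Euclidean slot and $\tp(\be_i)=-\be_i$ in an anti-Euclidean slot (equivalently $\tp(\be_i)=\be_i^{-1}$ since $\be_i^2=Q(\be_i)=\pm 1$). For~(i): write a basis monomial as $\be_\iu=\be_{i_1}\be_{i_2}\cdots\be_{i_r}$ with $i_1<\cdots<i_r$. Since $\tp$ is an anti-involution, $\tp(\be_\iu)=\tp(\be_{i_r})\cdots\tp(\be_{i_1})=\be_{i_r}^{-1}\cdots\be_{i_1}^{-1}=(\be_{i_1}\cdots\be_{i_r})^{-1}=\be_\iu^{-1}$, using that each $\be_{i_j}$ is invertible. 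So~(i) is immediate once the $1$-vector case and the anti-homomorphism property are in hand.

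For~(ii): by~(i), $\tp(\be_\iu)=\be_\iu^{-1}$, and $\be_\iu^2=\pm 1$ gives $\be_\iu^{-1}=\pm\be_\iu$ accordingly; this is a one-line consequence. For~(iii): the generators $\be_{\iu_i}$ in the product~(\ref{eq:f}) square to $1$, so by~(ii) each factor $\tfrac12(1\pm\be_{\iu_i})$ is fixed by $\tp$; since $\tp$ is an anti-involution and the factors mutually commute, $\tp(f)=\tp\!\left(\tfrac12(1\pm\be_{\iu_k})\right)\cdots\tp\!\left(\tfrac12(1\pm\be_{\iu_1})\right)=\tfrac12(1\pm\be_{\iu_k})\cdots\tfrac12(1\pm\be_{\iu_1})$, and reversing the order of commuting factors gives back $f$. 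This uses part~(c) of Theorem~\ref{th:structure} only to know such $f$ exists and has the stated form.

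For~(iv): in signature $(p,0)$ every $\be_i$ satisfies $\be_i^2=1$, so $\tp(\be_i)=\be_i$ by~(ii) and hence $\tp(\bv)=\sum_i v_i\be_i=\bv$ (it is $\BR$-linear); then $\bv\tp(\bv)=\bv^2=\sum_i v_i^2$ by the orthogonality of the $\be_i$ and $\be_i^2=1$. In signature $(0,q)$ every $\be_i$ satisfies $\be_i^2=-1$, so $\tp(\be_i)=-\be_i$ and $\tp(\bv)=-\bv$; then $\bv\tp(\bv)=-\bv^2=-\sum_i v_i^2\be_i^2=\sum_i v_i^2$. In either case $\tp(\bv)\bv$ gives the same scalar, and~(b) follows at once: if $\bv\tp(\bv)=\sum v_i^2\neq 0$ then it is a nonzero scalar, so $\bv\cdot\dfrac{\tp(\bv)}{\bv\tp(\bv)}=1=\dfrac{\tp(\bv)}{\bv\tp(\bv)}\cdot\bv$, identifying the inverse.

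I do not expect a genuine obstacle here; the only point requiring care is to cite~\cite{part1} correctly for the action of $\tp$ on $1$-vectors (that it reduces to reversion in Euclidean signature and to conjugation in anti-Euclidean signature), since items~(i)–(iv) are all formal consequences of that fact plus the anti-involution property. The mildest subtlety is in~(iv), keeping track of the sign: in $(0,q)$ one has both $\tp(\bv)=-\bv$ and $\bv^2=-\sum v_i^2$, and the two minus signs cancel to give the stated positive expression $\sum_i v_i^2$.
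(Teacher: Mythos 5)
Your proof is correct, and it follows the only natural route: reduce everything to the fact (from Part~I) that $\tp(\be_i)=\be_i^{-1}$ on each generator, then propagate through the anti-homomorphism property for~(i), read off~(ii) from $\be_\iu^2=\pm1$, use commutativity of the factors for~(iii), and split signatures for~(iv). The paper itself states Lemma~\ref{properties1} without proof, treating these as immediate consequences of the properties of $\tp$ established in~\cite{part1}, so your argument supplies exactly the omitted details; the sign bookkeeping in~(iv), which you flag, is indeed the only place one could slip, and you handle it correctly.
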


Let $f$ be a primitive idempotent of the form (\ref{eq:f}) and let $S = \cl_{p,q}f$ be a minimal left ideal. We make the following observations:
%\begin{enumerate}

1. Let $\lambda \in \BK = f\cl_{p,q}f.$ Then, it is easy to see that $\lambda f = f \lambda$ since $f^2=f.$  However, it should be observed that in general, $(uf)\lambda \neq \lambda (uf)$ where 
$uf \in S$ and $u \in \cl_{p,q}.$

2. The division ring $\BK = f \cl_{p,q}f$ is a real subalgebra of $\cl_{p,q}$ of dimension $1$, $2$, or $4$ depending on the value of $(p-q) \bmod 8.$ In particular,
\begin{equation}
\BK = f \cl_{p,q}f = 
\begin{cases} 
    \spn_\BR \{ f \}                                   & \textit{if $p-q =0,1,2 \bmod 8;$} \\
    \spn_\BR \{ f, \be_\iu f \}                        & \textit{if $p-q =3,7 \bmod 8;$} \\
    \spn_\BR \{ f, \be_\iu f, \be_\ju f, \be_\ku f  \} & \textit{if $p-q =4,5,6 \bmod 8;$}
\end{cases}
\label{eq:threecases}
\end{equation}
where $\be_\iu,\be_\ju,\be_\ku$ are basis monomials in $\cb{B}$ that satisfy the required relations:
\begin{equation}
\be_\iu^2 = \be_\ju^2 = \be_\ku^2 = -1 \quad \mbox{and} \quad 
\be_\iu \be_\ju  = -\be_\ju \be_\iu  = \be_\ku\quad \mbox{(cyclically)}. 
\label{eq:rels}
\end{equation}
Furthermore, due to the first observation, the basis elements $\{ 1, \be_\iu, \be_\ju, \be_\ku  \}$ which span $\BK$ over $\BR$ modulo $f$, commute with $f$. 

3. Thus, $\ta{T}{\ve}$ acts on $\BK = f\cl_{p,q}f$ as an anti-involution. In the following, $\lambda_i \in \BR,$ $b_i$ are the basis elements in $\{ 1, \be_\iu, \be_\ju, \be_\ku \}$, and, to simplify notation, we set $\overline{\lambda} = \ta{T}{\ve}(\lambda)$ whenever $\lambda \in \BK$.
\begin{equation}
\ta{T}{\ve}: \BK \rightarrow \BK, \quad 
\lambda=fuf= \sum \lambda_i b_i\mapsto 
\overline{\lambda} = f \ta{T}{\ve}(u)f = \sum \lambda_i \overline{b_i} 
\label{eq:TactsonK}
\end{equation}
which reduces to the identity map, or complex conjugation, or quaternionic conjugation in $\BK$ depending on the value of $(p-q) \bmod 8$. Note that 
\begin{gather}
\{ 1, \be_\iu, \be_\ju, \be_\ku \} \stackrel{\overline{\phantom{\lambda}}}{\longmapsto} 
\{ 1, -\be_\iu, -\be_\ju, -\be_\ku \}
\label{eq:conjugateaction}
\end{gather}
due to part (ii) from Lemma~\ref{properties1}. The rest of our claim follows then easily from the $\BR$-linearity of $\ta{T}{\ve}$ and the fact that it is an anti-involution. Thus, we get 
$\overline{\lambda \mu} =  \overline{\mu}\overline{\lambda}$ for any $\lambda,\mu \in \BK.$

4. Part (d) of Theorem~\ref{th:structure} implies that in the monomial basis $\cb{B}$ of a simple Clifford algebra $\cl_{p,q}$ one can always find a (non-unique) set $\cb{M}$ of $N=2^k$ monomials 
$\{m_i\}_{i=1}^N$ for the chosen primitive idempotent $f$ such that the polynomials $\{ m_i f\}_{i=1}^N$  are linearly independent over $\BK$ and give a basis in $S=\cl_{p,q}f$ as a right $\BK$-vector space.\footnote{In a semisimple Clifford algebra, we have $N=2^{k-1}$. Semisimple algebras are considered in Section~\ref{action2}.} Thus, 
\begin{equation}
S = \cl_{p,q}f = \spn_\BK \{m_1f,m_2f,\ldots,m_N f\} \;\; \mbox{and} \;\; \psi = \sum_{i=1}^N m_i f \lambda_i,
\label{eq:m}
\end{equation}
where $ \lambda_i \in \BK,$ for any spinor $\psi \in S.$

The monomials $\cb{M}=\{m_i\}_{i=1}^N$ from~(\ref{eq:m}) have, as we will see shortly, a very interesting property: Let $f$ be a chosen primitive idempotent in $\cl_{p,q}$. For example, we will set $f$ to equal the polynomial~(\ref{eq:f}) where all signs are plus. Then,
\begin{equation}
\cb{F} = \{ m_1 f m_1^{-1}, m_2 f m_2^{-1},\ldots,  m_N f m_N^{-1}\}
\label{eq:F}
\end{equation}
provides a complete set of $N=2^k$ mutually annihilating primitive idempotents that add up to the unity $1$ of $\cl_{p,q}.$\footnote{We will see shortly that the monomial set $\cb{M}$ is a \textit{(left) transversal} for $G_{p,q}(f)$ in $G_{p,q}$.} 

Let $f_i =  m_i f m_i^{-1}$ and $\BK_i =f_i\cl_{p,q}f_i,$ $i=1,\ldots,N$.\footnote{From now on, we always sort the monomial set $\cb{M}$ by \texttt{InvLex} order so that $m_1=1$ and $f_1=f$.} For completeness, we recall how the monomials $\cb{M}$ act on the division rings $\BK_i$. Let $m_j \in \cb{M}$. Then, 
\begin{align}
\BK_1 = f_1 \cl_{p,q} f_1 \stackrel{m_j}{\rightarrow} 
m_j \BK_1 m_j^{-1} &= m_j (f_1 \cl_{p,q} f_1) m_j^{-1} \notag \\
                   &=(m_j f_1 m_j^{-1})(m_j \cl_{p,q} m_j^{-1})(m_j f_1 m_j^{-1}) \notag \\
                   &= f_j \cl_{p,q} f_j = \BK_j 
\end{align}
because the conjugate action on $\cl_{p,q}$ is an algebra automorphism. Thus, the division rings $\BK_i$ are all isomorphic. In addition, it can be verified that all rings $\BK_j$ as real subalgebras of 
$\cl_{p,q}$ have the same spanning set: $\{ 1 \}$, $\{1,\be_\iu \},$ or  $\{1,\be_\iu, \be_\ju, \be_\ku \}$ depending on the value of $(p-q) \bmod 8$ as shown  in~(\ref{eq:threecases}). Furthermore, the conjugate action of each such monomial  on $\BK$ modulo $f$ amounts to an algebra automorphism. This automorphism is the identity map when $\BK \cong \BR;$ it is either the identity map or ``complex'' conjugation when $\BK \cong \BC;$ and it is either the identity map or ``complex'' conjugation of two out of three subalgebras: $\spn_\BR \{1,\be_\iu \},$ $\spn_\BR \{1,\be_\ju\},$ and 
$\spn_\BR \{1,\be_\ku \},$ each isomorphic to $\BC,$ when $\BK \cong \BH.$ Hence, it is \emph{not the quaternionic conjugation} of $\BK$ realized by $\tp$ as shown in~(\ref{eq:conjugateaction}).  

Let $G_{p,q}$ be a finite group in any Clifford algebra $\cl_{p,q}$ (simple or semisimple) with a binary operation being just the Clifford product and defined as
\begin{equation}
G_{p,q} = \{ \pm \be_\iu  \; | \; \be_\iu \in \cb{B} \} \quad \mbox{with} \quad 
(\pm \be_\iu) (\pm \be_\ju) = \cml(\pm \be_\iu, \pm \be_\ju).
\label{eq:Gpq}
\end{equation}
This group of order $2 \cdot 2^{p+q} = 2^{n+1}$ is known as \emph{Salingaros vee group} and has been discussed, for example, in~\cite{salingaros1,salingaros2,salingaros3,varlamov}, but similar such groups
where also studied by Helmstetter~\cite{helmstetterCliffordGroups}. In particular,  $G_{p,q}$ is a discrete subgroup of $\Pin(p,q).$\footnote{See \cite[Sect. 17.2]{lounesto} for the definition of 
$\Pin(p,q)$. } We recall properties of this group next.

1. In a simple Clifford algebra, the group $G_{p,q}$ acts transitively via conjugation on any set $\cb{F}$ of primitive and orthogonal  idempotents. That is, let $f$ be any primitive idempotent and 
$g \in G_{p,q}.$ Then,
$$ 
g f g^{-1}  = g f \tp(g) 
$$
is another primitive idempotent. If  $g f g^{-1} \neq f$ then $g f g^{-1}$ annihilates $f$. The number of such idempotents in any complete set $\cb{F}$ is obviously $N = 2^k =\dim_\BK S$ where 
$k = q - r_{q-p}$ as before. Let $G_{p,q}(f)$ denote the stabilizer of $f$ under the conjugate action of $G_{p,q}$ and let $\cb{O}(f)$ be the orbit of $f$, then
$$
\hspace*{5ex} N=[G_{p,q}:G_{p,q}(f)]=|\cb{O}(f)|=|G_{p,q}|/|G_{p,q}(f)|=2\cdot 2^{p+q}/|G_{p,q}(f)|= 2^k
$$
where $[G_{p,q}:G_{p,q}(f)]$ is the index of the stabilizer $G_{p,q}(f)$ in $G_{p,q}$ and the bars $|\cdot|$ indicate the number of elements. This way, we can easily find the order of the stabilizer
$$
|G_{p,q}(f)| = 2 \cdot 2^{p+q}/2^k = 2^{1+p+q-k} = 2^{1+p+q-q+r_{q-p}} = 2^{1+p+r_{q-p}}.
$$
This property, thus, for all practical purposes, allows us to find a complete set of all mutually annihilating primitive idempotents in $\cl_{p,q}$
\begin{gather}
\cb{F} = \{f_1,f_2,\ldots,f_N\}
\label{eq:cbF}
\end{gather}
summing up to $1$ by setting, say, $f=f_1$ and then letting $G_{p,q}$ act on $f$ via the conjugation. Then, $\cb{F} = \cb{O}(f)$ since the action of $G_{p,q}$ is transitive. We summarize properties of the stabilizer in the following proposition and also in Tables~1, 2, and~3 in Appendix~\ref{AppendC}.
\begin{proposition}
Let $\cl_{p,q}$ be a simple Clifford algebra, $p-q \neq 1 \bmod 4$ and $p+q\leq 9$. Let $f$ be any primitive idempotent in the set  $\cb{F}$ and let $G_{p,q}(f)$ be its stabilizer in $G_{p,q}$ under the conjugate action. Then, 
\begin{itemize}
\item[(i)] $G_{p,q}(f) \lhd  G_{p,q}$, that is, $G_{p,q}(f)$ is a normal subgroup of $G_{p,q}$ and $|G_{p,q}(f)| = 2^{1+p+r_{q-p}}.$ 
\item[(ii)] $G_{p,q}(f)$ is a $2$-primary Abelian group when $p-q = 0,1,2 \bmod 8$ (real simple case) or $p-q = 3,7 \bmod 8$ (complex simple case), whereas $G_{p,q}(f)$ is a non Abelian $2$-group when 
$p-q = 4,5,6 \bmod 8$ (quaternionic simple case).
\item[(iii)] Let $k = q-r_{q-p}.$ Then, $G_{p,q}(f)$ is generated multiplicatively by $s$ non-unique elements
\begin{gather}
G_{p,q}(f) = \langle g_1,g_2, \ldots, g_s \rangle
\label{eq:gens1}
\end{gather}
where $s=k+1$ when $p-q = 0,1,2 \bmod 8$ or $p-q = 3,7 \bmod 8$, whereas $s=k+2$ when $p-q = 4,5,6 \bmod 8$. 
\item[(iv)] The orders of the generators $g_1,g_2,\ldots,g_s$ are $2$ or $4$, and the structure of the stabilizer group $G_{p,q}(f)$ is shown in Tables~1, 2, and 3 in Appendix~\ref{AppendC}.
\item[(v)] Let $m_j$ be any element in $G_{p,q}(f)$ and let $f$ have the form~(\ref{eq:f}). Then, the stability of the idempotent $m_j f m_j^{-1}=f$ implies 
$$
m_j \be_{\iu_1} m_j^{-1} = \be_{\iu_1}, \; 
m_j \be_{\iu_2} m_j^{-1} = \be_{\iu_2}, \; 
\ldots, \;
m_j \be_{\iu_k} m_j^{-1} = \be_{\iu_k}. 
$$
That is, the set of commuting monomials $\cb{T}= \{\be_{\iu_1},\ldots,\be_{\iu_k}\}$ in $f$ is point-wise stabilized by $G_{p,q}(f).$
\end{itemize}
\label{prop3} 
\end{proposition}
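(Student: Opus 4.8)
The plan is to establish parts (i)--(v) in a logical order that exploits the structure already laid out in the text. First I would handle \textbf{(i)}. Normality is the most conceptual point: since $G_{p,q}$ acts on the finite set $\cb{F}$ of all mutually annihilating primitive idempotents by conjugation, and since this action is transitive (established in the enumerated remarks preceding the proposition), all point stabilizers $G_{p,q}(f_i)$ are conjugate in $G_{p,q}$; moreover $g G_{p,q}(f) g^{-1} = G_{p,q}(g f g^{-1})$. To get normality one must show that conjugating $f$ by a group element $g$ either fixes $f$ or produces a \emph{different} idempotent of $\cb{F}$, but in either case the stabilizer is unchanged --- this works because $G_{p,q}$ is a $2$-group with center containing $-1$ and $G_{p,q}/\langle \pm 1\rangle$ is elementary abelian, so conjugation by any $g$ sends $\be_\iu \mapsto \pm \be_\iu$; hence $g f g^{-1}$ is obtained from $f$ only by flipping some of the signs in~(\ref{eq:f}), and flipping signs permutes $\cb{F}$ while the condition ``$m$ commutes with each $\be_{\iu_j}$'' that cuts out the stabilizer is insensitive to those signs. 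This simultaneously proves the key half of \textbf{(v)}. The order formula $|G_{p,q}(f)| = 2^{1+p+r_{q-p}}$ is then immediate from the orbit--stabilizer count $|G_{p,q}| = 2^{1+p+q}$, $|\cb{O}(f)| = N = 2^k$, $k = q - r_{q-p}$, exactly as displayed in the text just above the proposition; since $\cb{F} = \cb{O}(f)$ by transitivity, $N = |\cb{F}|$ and the arithmetic gives the exponent $1+p+q-k = 1+p+r_{q-p}$.

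Next I would prove \textbf{(v)} in full. The inclusion ``$\subseteq$'' of the displayed equations is the nontrivial direction: if $m_j f m_j^{-1} = f$ with $f$ as in~(\ref{eq:f}) using all plus signs, I want to conclude $m_j \be_{\iu_\ell} m_j^{-1} = \be_{\iu_\ell}$ for every $\ell$. Because conjugation by $m_j$ is an algebra automorphism sending each commuting square-one monomial $\be_{\iu_\ell}$ to $\pm \be_{\iu_\ell}$, it permutes the $2^k$ idempotents $\tfrac12(1+\be_{\iu_1}^{\pm})\cdots$ obtained by all sign choices; the condition $m_j f m_j^{-1} = f$ forces, upon expanding the product~(\ref{eq:f}) into the sum over subsets $\sum_{A}\be_{\iu_A}/2^k$ and comparing the grade-$\{\iu_\ell\}$ component (equivalently, using that the $2^k$ commuting monomials $\be_{\iu_A}$ are linearly independent), that the sign on each $\be_{\iu_\ell}$ is $+1$. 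Conversely if $m_j$ centralizes each $\be_{\iu_\ell}$ it centralizes their products and hence $f$, so $m_j \in G_{p,q}(f)$. This gives the point-wise stabilization statement and, combined with the sign-flip observation above, completes \textbf{(i)}.

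For \textbf{(ii)} and \textbf{(iii)} I would describe $G_{p,q}(f)$ concretely as the intersection of $G_{p,q}$ with the centralizer in $\cl_{p,q}$ of the commuting set $\cb{T} = \{\be_{\iu_1},\dots,\be_{\iu_k}\}$ --- this is a consequence of \textbf{(v)}. Two elements $\pm\be_\iu$, $\pm\be_\ju$ of $G_{p,q}$ commute or anticommute according to a bilinear ``commutation form'' on $\BF_2^{\,n}$; $G_{p,q}(f)$ consists of those $\be_\iu$ commuting with all $\be_{\iu_\ell}$, i.e. the kernel of a linear map $\BF_2^{\,n}\to\BF_2^{\,k}$, together with the sign $\pm 1$ and with the square ($\be_\iu^2=\pm 1$) of each element recorded. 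Whether $G_{p,q}(f)$ is abelian is then exactly whether the commutation form restricted to this kernel is identically zero; in the real and complex cases ($p-q\equiv 0,1,2,3,7$) one checks it vanishes, giving a $2$-primary abelian group, while in the quaternionic case ($p-q\equiv 4,5,6$) the pair $\be_\iu,\be_\ju$ underlying $\BK\cong\BH$ from~(\ref{eq:rels}) survives into the stabilizer and anticommutes, forcing non-commutativity. Counting generators: the abelian group of order $2^{1+p+r_{q-p}}$ decomposes as a product of cyclic factors of order $2$ or $4$, and one reads off that $s = k+1$ generators suffice in the abelian cases and $s = k+2$ in the quaternionic case (one extra generator being needed to realize the noncommuting quaternionic pair); the explicit generators and the full isomorphism type are then displayed in Tables~1--3, which is the content of \textbf{(iv)}. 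The orders $2$ or $4$ of the $g_i$ come from part (ii) of Lemma~\ref{properties1}: a basis monomial $\be_\iu$ has $\be_\iu^{-1}=\pm\be_\iu$, so $\be_\iu^2 = \pm 1$ and $(\pm\be_\iu)$ has order dividing $4$ in $G_{p,q}$.

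The main obstacle I anticipate is the bookkeeping behind \textbf{(iii)} and \textbf{(iv)}: pinning down the \emph{exact} number of cyclic factors of each order (hence $s$, and hence the entries of Tables~1--3) requires a careful case analysis of the commutation and square data over the eight residue classes of $(p-q)\bmod 8$, using the Radon--Hurwitz recursion to track how $r_{q-p}$ and $k$ grow; the conceptual parts (normality, the order formula, point-wise stabilization) are comparatively short once the sign-flip observation is in hand. I would therefore relegate the residue-by-residue generator count to the explicit tables in Appendix~\ref{AppendC} and verify low-dimensional cases ($p+q\le 9$) directly, since the statement is only claimed in that range.
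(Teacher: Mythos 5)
Your proposal is correct but takes a genuinely different and more theoretical route than the paper: the paper proves this proposition only in the sense of reporting that all five statements ``have been derived with \CLIFFORD\ for $\dim V \leq 9$,'' i.e.\ by software verification, and the authors even remark that property (v) ``came as rather unexpected.'' You supply an actual argument for (v): expand $f$ into $2^{-k}\sum_{A}\be_{\iu_A}$ over subsets $A\subseteq\{1,\dots,k\}$, observe that any monomial $m$ conjugates each $\be_{\iu_\ell}$ to $\pm\be_{\iu_\ell}$, and conclude from the linear independence of the $2^k$ distinct products $\be_{\iu_A}$ (guaranteed by the hypothesis in Theorem~\ref{th:structure}(c) that the $\be_{\iu_\ell}$ generate a group of order $2^k$) that $mfm^{-1}=f$ forces every sign to be $+1$. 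That reveals (v) as a short linear-independence consequence rather than a surprise, and it immediately buys you normality in (i), since $G_{p,q}(gfg^{-1})=gG_{p,q}(f)g^{-1}$ while $gfg^{-1}$ differs from $f$ only by sign flips, which by (v) do not change the stabilizer; the order formula then follows from orbit--stabilizer exactly as displayed in the text preceding the proposition. Your centralizer/$\BF_2$-bilinear-form description of $G_{p,q}(f)$, combined with the observation (recorded in the paper's remark 2) that $\be_\iu,\be_\ju,\be_\ku$ spanning $\BK$ over $\BR$ modulo $f$ commute with $f$ and hence lie in $G_{p,q}(f)$, correctly accounts for (ii). For (iii)--(iv) you still defer, as the paper does, to the case-by-case data in Tables~1--3. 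In short, you prove conceptually what the paper only verifies computationally for (i), (ii), and (v), while matching the paper on the table-driven generator counts.
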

\noindent
All statements in the above proposition have been derived with {\CLIFFORD} for $\dim V \leq 9$. Property (v) came as rather unexpected: Instead of permuting the commuting monomials~$\cb{T}$ in the idempotent $f$ shown in ~(\ref{eq:f}), this set is stabilized point-wise by $G_{p,q}(f).$ Thus, all primitive idempotents in $\cb{F}$ have the same stabilizer group. We also comment that the direct product decomposition of $G_{p,q}(f)$ in the quaternionic case includes normal subgroups $F_3$ and $F_2$ of orders $16$ and~$8$, respectively. These groups contain the \textit{commutator subgroup} 
$G_{p,q}' = \{1,-1\}$ of $G_{p,q}.$\footnote{The \textit{commutator subgroup} $G'$ of a group $G$ is $G'=[G,G] = \langle [x,y] \,|\, x,y \in G \rangle$. That is, it is a subgroup of $G$ generated by
all commutators $[x,y]=xyx^{-1}y^{-1}$ for $x,y \in G$. In general, $G' \lhd G$ and $G/G'$ is Abelian~\cite[Prop. 5.57]{rotman}. We have $G_{p,q}'=\{1,-1\}$ since any two monomials in $G_{p,q}$ either commute or anti-commute.} Furthermore, $G_{p,q}(f)$ is solvable in all three cases since it is a finite $2$-group. As such, it does have a \textit{normal series} with Abelian factor groups. Thus, the \textit{derived series} of $G_{p,q}(f)$ will terminate with $(G_{p,q}(f))^{(n)}=\{1\}$ for some $n$ where $(G_{p,q}(f))^{(n)} = ((G_{p,q}(f))^{(n-1)})'$ is the commutator subgroup of
$(G_{p,q}(f))^{(n-1)}$~\cite{rotman}.      

2. There is a bijection $\theta$ (basic from the group theory) from the coset space $G_{p,q}/G_{p,q}(f)$ to the orbit $\cb{O}(f)$ which is given as 
\begin{gather}
\theta: G_{p,q}/G_{p,q}(f) \longrightarrow \cb{O}(f) \notag \\
 m_i G_{p,q}(f) \stackrel{\theta}{\mapsto} f_i = m_i f m_i^{-1} = m_i f \tp(m_i)
\label{eq:theta}
\end{gather}
where the set of coset representatives $\{ m_i\}_{i=1}^N$ is in fact the same as the set $\cb{M}$ of monomials~(\ref{eq:m}). That is, these coset representatives\footnote{The coset representatives are precomputed for all Clifford algebras $\cl_{p,q}, n=p+q \le 9$, in \CLIFFORD~\cite{ablamfauser2009}. They are also shown in \cite{ablamowicz1998}.} allow one to find a spinor basis $\cb{S}_j$ in $S_j = \cl_{p,q}f_j$ over $\BK = f\cl_{p,q}f$  by simply taking this set 
\begin{equation}
\cb{S}_j =\{m_1 f_j,\ldots,m_N f_j\}, \quad 1 \le j \le N,
\label{eq:spinorbasisj}
\end{equation}
as such basis.\footnote{Later, when we will be computing matrices of Clifford elements in spinor representation, we will always take the \textit{ordered} basis $\cb{S}_1 =[m_1 f_1,\ldots,m_N f_1]$.}
Of course, among these generators we have the  identity $1$, so that $1 f_j = f_j$ for every $j$ is the primitive idempotent whereas all other basis elements in $\cb{S}_j$ are nilpotent of index $2.$

Let $m_j \in \cb{M}$ and $f_j = m_j f m_j^{-1}$ where as before $f$ is a primitive idempotent defined as in (\ref{eq:f}) with all plus signs. Then, obviously,
\begin{itemize}
\item[(i)] $f_j^2 =(m_j f m_j^{-1})(m_j f m_j^{-1}) = m_j f m_j^{-1} = f_j$. 
\item[(ii)] $f_j$ is primitive as it has $k$ factors -- as many as $f$ does. Observe that
\begin{gather}
f_j = m_j f m_j^{-1} = 
\frac12(1+m_j \be_{\iu_1}m_j^{-1}) \cdots \frac12(1+m_j\be_{\iu_k}m_j^{-1})
\label{eq:actiononf}
\end{gather}
and, for all $1\le s,t \le k$ and $1 \le j \le N$ we have: 
\begin{enumerate}
\item $(m_j \be_{\iu_s}m_j^{-1})^2=1$ since $\be_{\iu_s}^2=1,$
\item $(m_j \be_{\iu_s}m_j^{-1})(m_j \be_{\iu_t}m_j^{-1}) = (m_j \be_{\iu_t}m_j^{-1})(m_j \be_{\iu_s}m_j^{-1})$ as $\be_{\iu_s}\be_{\iu_t} = \be_{\iu_t}\be_{\iu_s}$, 
\item $m_j \be_{\iu_s}m_j^{-1} \neq  m_j \be_{\iu_t}m_j^{-1}$ when $\be_{\iu_s} \neq \be_{\iu_t}$ since conjugation is a bijection. In fact, $m_j \be_{\iu_s}m_j^{-1}= \be_{\iu_s}$ for any 
$1 \leq s \leq k$ due to~(v) in Proposition~\ref{prop3}.
\end{enumerate}
\item[(iii)] Let $m_i \neq 1.$ Then, $(m_i f)(m_i f) =  \alpha_i (m_if m_i^{-1})f=\alpha_i f_if=0$ where $\alpha_i = m_i^2$ since $f_i \neq f$ and $f_i$ and $f$ are mutually annihilating. Thus, $m_if$ is nilpotent. Similarly, one shows that $m_i f_j$ is nilpotent as long as $m_i \neq 1.$
\end{itemize}

3. In addition to acting on the idempotent set $\cb{F}$ via the conjugation~(\ref{eq:theta}), the group $G_{p,q}$ has a representation on the coset space $G_{p,q}/G_{p,q}(f)$ since $G_{p,q}(f)$ has a finite index $N=2^{q-r_{q-p}}.$ Thus, according to the Representation on Cosets Theorem (see \cite[Theorem 2.88]{rotman}), there exists a homomorphism $\varphi: G_{p,q} \rightarrow S_N$ with $\ker \varphi \leq G_{p,q}(f)$. Here, by $S_N$ we denote the symmetric group on an $N$-element set. This action is important for our considerations since it amounts to permuting basis elements in each spinor space $S_j$. Thus, we look at it in greater detail.

For each monomial $m_j \in G_{p,q}$,\footnote{In fact, it is enough to pick $m_j$ from the left transversal $\cb{M}$.} define a left \textit{translation} 
\begin{gather}
\tau_{m_j}: G_{p,q}/G_{p,q}(f) \rightarrow G_{p,q}/G_{p,q}(f)
\label{eq:tau}
\end{gather}
by $\tau_{m_j}(m_i G_{p,q}(f)) = (m_j m_i) G_{p,q}(f)).$ Obviously, $\tau_{m_j}$ is a bijection and $\tau_{m_j} \in S_{G_{p,q}/G_{p,q}(f)}$, the symmetric group on the finite coset space. It is 
routine now to define 
$$
\varphi: G_{p,q} \rightarrow S_{G_{p,q}/G_{p,q}(f)}
$$ 
via  $\varphi(m_j) = \tau_{m_j}$ and show that $\varphi$ is a homomorphism and $\ker \varphi \leq G_{p,q}(f).$ Finally, $S_{G_{p,q}/G_{p,q}(f)} \cong S_N$ since $|G_{p,q}/G_{p,q}(f)|=N.$

Above we have discovered that the stabilizer group $G_{p,q}(f)$ is normal in $G_{p,q}$. Thus, the action of $G_{p,q}$ on the cosets via the left translation~(\ref{eq:tau}) really amounts to the coset multiplication since
$$
m_j (m_i G_{p,q}(f)) = (m_j G_{p,q}(f))(m_i G_{p,q}(f))
$$
in the quotient group $G_{p,q}/G_{p,q}(f)$.

We have already established that there is a bijection $\pi$ between the cosets in $G_{p,q}/G_{p,q}(f)$ and the spinor basis $\cb{S}$ in the spinor ideal $S = \cl_{p,q}f$:
\begin{gather}
m_i G_{p,q}(f) \stackrel{\pi}{\longmapsto} m_i f, %\quad  1 \leq i \leq N,
\label{eq:pi}
\end{gather}
where $m_i \in \cb{M}$ give the spinor (ordered) basis $\cb{S} = [m_1 f, m_2 f, \ldots, m_N f]$. We will prove in Lemma~\ref{properties2} that $\pi$ is well-defined.

The left multiplication $\tau_{m_j}$ for each $m_j$ in $\cb{M}$ induces a bijection (permutation) $\kappa_{m_j}$ modulo the commutator group $G_{p,q}' =\{ 1,-1\}$ on the basis~$\cb{S}$ which also acts as left translation on these basis elements. We have the following commutative diagram:
\begin{equation}
\bfig
\square(0,0)|arra|/>`>`>`>/<1000,500>%
[G_{p,q}/G_{p,q}(f)`\cb{S}`G_{p,q}/G_{p,q}(f)`\cb{S};%
 \pi `\tau_{m_j}`\kappa_{m_j} =\,  \pi  \circ  \tau_{m_j} \circ  \pi^{-1}`\pi]
\efig
\label{eq:diagram5}
\end{equation}
where $\kappa_{m_j}: m_i f  \mapsto m_j  m_i f =  c_{j,i}^k  m_k  f$ for some real coefficients $c_{j,i}^k.$ In fact, since $\kappa_{m_j}$ is a bijection, for each pair $(j,i)$ there is exactly one value of the index $k$ such that $c_{j,i}^k = 1$ or $-1$, and it is zero for all other values of the index $k$. 
\begin{lemma}
\label{properties2}
Consider the spinor representation of $\cl_{p,q}$ in the spinor ideal $S_1$ with the ordered basis $\cb{S}_1 = [m_1 f_1,\ldots,m_N f_1]$ with $\alpha_i = m_i^2.$ Let $f \in \cb{F}$ be any primitive idempotent.
\begin{itemize}
\item[(a)] The mapping $\pi : G_{p,q}/G_{p,q}(f) \rightarrow \cb{S}$ where $m_i G_{p,q}(f) \stackrel{\pi}{\mapsto} m_i f$ is well-defined.
\item[(b)] Let $m_l f_k \in \cb{S}_k$ where $\cb{S}_k$ is a basis in the spinor ideal $S_k = \cl_{p,q}f_k$ and let $m_i f_1$ be any basis element in our chosen basis $\cb{S}_1$ in $S_1 = \cl_{p,q}f_1.$ Then, for any $m_l,m_i \in \cb{M}$,
\begin{equation}
(m_l f_k)(m_i f_1) = \sum_{j=1}^{N}c_{l,k}^{i,j} m_j f_1=\begin{cases} 
                       0                                  &\textit{$i \neq k$;}\\
                       m_l (m_k f_1) = c_{l,k}^j m_j f_1  &\textit{$i = k$,}
                       \end{cases}
\end{equation}
where, for any given pair $(l,k)$, there is exactly one value of the index~$j$ such that $c_{l,k}^j \stackrel{def}{=} c_{l,k}^{k,j} = 1$ or $-1$, and it is zero for all other values of~$j$. 
\item[(c)] Let $m_l,m_k \in \cb{M}$ and let $f_1 \in \cb{F}.$ Then, 
\begin{gather}
[m_k,m_l] = [m_k^{-1},m_l^{-1}] = \clkj{k}{l}{j}\clkj{l}{k}{j} \bmod f_1
\label{eq:comm}
\end{gather}
for some index $j.$
\item[(d)] The coefficients $\clkj{i}{j}{k}$ and $\clkj{i}{k}{j}$ satisfy the relation $\clkj{i}{j}{k} = \alpha_i \clkj{i}{k}{j}.$
\item[(e)]  Let $\psi_k = \sum_{i=1}^N m_i f_k \lambda_i$ be a spinor in the $k$-th spinor ideal $S_k = \cl_{p,q}f_k$ with components $\lambda_i \in \BK = f_1\cl_{p,q}f_1.$ Then, the matrix $[\psi_k]$ in the spinor representation has each $l$-th column zero when $l\neq k$ and in its $k$-th column, for each index $i$, in the $j$-th row it has exactly one (potentially) non-zero entry 
$\clkj{i}{k}{j} \lambda_{i,k}$ where $\lambda_{i,k} = m_k^{-1} \lambda_i m_k = m_k \lambda_i m_k^{-1}$.
\end{itemize}   
\end{lemma}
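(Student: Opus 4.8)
\emph{Plan.} Everything reduces to three elementary observations, which I would record first. (1) The \emph{transport identity} $m_if=f_im_i$ for every $m_i\in\cb{M}$, immediate from $f_i=m_ifm_i^{-1}$. (2) The \emph{scalar fact}: for $g\in G_{p,q}(f)$ stability of $f$ gives $gf=fg$, this common element lies in $\BK=f\cl_{p,q}f$ (since $fg=f(gf)=f(fg)=fgf\in f\cl_{p,q}f$) and is a unit there (since $(gf)^2=g(fg)f=g(gf)f=\alpha_gf\ne0$, with $\alpha_g=g^2=\pm1$); in the real case $\BK\cong\BR$ one has in addition $\alpha_g=1$ and $gf=\pm f$, because there $G_{p,q}(f)\cong(\BZ_2)^{k+1}$ --- the order $2^{1+p+r_{q-p}}$ of Proposition~\ref{prop3}(i) equals $2^{k+1}$ and the group contains $\langle-1,\be_{\iu_1},\dots,\be_{\iu_k}\rangle$. (3) $\cb{M}$ is a transversal and the conjugation action of $G_{p,q}$ on $\cb{F}$ is transitive, so $gfg^{-1}\in\cb{F}$ for every $g\in G_{p,q}$ and $gfg^{-1}=f$ iff $g\in G_{p,q}(f)$ iff $m_a^{-1}m_b\in G_{p,q}(f)$ forces $a=b$; these, with $f_af_b=\delta_{ab}f_a$, are all I need.

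\emph{(a) and (b).} If $m_{i'}=m_ig$ with $g\in G_{p,q}(f)$, the scalar fact gives $m_{i'}f=m_i(gf)=(m_if)(gf)$ with $gf\in\BK^\times$, so $m_if$ and $m_{i'}f$ span the same right-$\BK$ line; hence $\pi$ is independent of the representative, and it is a bijection onto $\cb{S}$ because $\{m_1f,\dots,m_Nf\}$ is $\BK$-linearly independent (noted after Proposition~\ref{prop3}, using Theorem~\ref{th:structure}(d)). For (b), with $f_k=m_kfm_k^{-1}$ and $f_1=f$,
\[(m_lf_k)(m_if_1)=m_lm_k\,\bigl(f\,m_k^{-1}m_i\,f\bigr),\]
and with $g=m_k^{-1}m_i$, $f_{[ik]}=gfg^{-1}\in\cb{F}$ one has $fgf=f(gfg^{-1})g=f\,f_{[ik]}\,g$, which is $0$ when $f_{[ik]}\ne f$ and equals $fg$ (hence $f$, as then $g=1$) when $f_{[ik]}=f$, i.e. when $i=k$; so $f\,m_k^{-1}m_i\,f=\delta_{ik}f$. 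Therefore $(m_lf_k)(m_if_1)=0$ for $i\ne k$ while $(m_lf_k)(m_kf_1)=m_lm_kf$, and since $m_lm_k\in G_{p,q}$ lies in a unique coset $m_jG_{p,q}(f)$ we may write $m_lm_k=m_jg$ with $g\in G_{p,q}(f)$, whence $m_lm_kf=m_j(gf)=\clkj{l}{k}{j}m_jf_1$, where $\clkj{l}{k}{j}:=gf$ is a unit of $\BK$ ($=\pm1$ in the real case) and $j=j(l,k)$ is the only nonzero index.

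\emph{(c) and (d).} For a monomial $m$ we have $m^{-1}=\alpha m$ with $\alpha=m^2=\pm1$, so $m_k,m_l$ commute iff $m_k^{-1},m_l^{-1}$ do, giving $[m_k,m_l]=[m_k^{-1},m_l^{-1}]\in\{1,-1\}$; and $m_km_l=[m_k,m_l]\,m_lm_k$, so $m_km_l$ and $m_lm_k$ lie in the same coset $m_jG_{p,q}(f)$. Multiplying by $f_1$ on the right and using (b),
\[\clkj{k}{l}{j}m_jf_1=m_km_lf_1=[m_k,m_l]\,m_lm_kf_1=[m_k,m_l]\,\clkj{l}{k}{j}m_jf_1,\]
and comparing $m_jf_1$-coefficients gives $[m_k,m_l]=\clkj{k}{l}{j}\clkj{l}{k}{j}\bmod f_1$. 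For (d): from $m_im_jf=\clkj{i}{j}{k}m_kf$ (part (b), $k=j(i,j)$), left-multiply by $m_i^{-1}=\alpha_im_i$; since $m_im_j=m_kg$ forces $m_im_k=\alpha_im_jg^{-1}$, the product $m_im_k$ lies in the coset $m_jG_{p,q}(f)$, so $m_jf=\alpha_im_im_kf=\alpha_i\clkj{i}{k}{j}m_jf$, and comparing coefficients yields $\clkj{i}{j}{k}\clkj{i}{k}{j}=\alpha_i$, i.e. $\clkj{i}{j}{k}=\alpha_i\clkj{i}{k}{j}$.

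\emph{(e) and the main difficulty.} Here $[\psi_k]$ is the matrix of $\phi\mapsto\psi_k\phi$ on $S_1$ in the basis $\cb{S}_1$; write $\psi_k=\sum_i m_if_k\lambda_i$ with $\lambda_i\in f_k\cl_{p,q}f_k$ the components in the natural basis $\cb{S}_k$ (this is the content behind the lemma's slightly loose $\lambda_i\in\BK$), and put $\lambda_{i,k}=m_k^{-1}\lambda_im_k\in f_1\cl_{p,q}f_1=\BK$ (equal to $m_k\lambda_im_k^{-1}$ since $\alpha_k$ is central). Because $\lambda_i=f_k\lambda_if_k$ and, by the transport identity and mutual annihilation, $f_k(m_pf_1)=f_kf_pm_p=\delta_{kp}m_pf_1$, every term of $\psi_km_pf_1$ carries a factor $\delta_{kp}$, so the $l$-th column of $[\psi_k]$ vanishes for $l\ne k$; and for $p=k$, using $\lambda_im_k=m_k\lambda_{i,k}$ and then (b),
\[\psi_km_kf_1=\sum_i m_i(\lambda_im_k)f_1=\sum_i m_im_k\lambda_{i,k}=\sum_i \clkj{i}{k}{j}(m_jf_1)\lambda_{i,k},\]
where $j=j(i,k)$ runs bijectively over $1,\dots,N$ (right translation by $m_kG_{p,q}(f)$ in the quotient group); reading off coordinates gives exactly the claimed $k$-th column. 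None of this is deep --- the work is index and division-ring bookkeeping, concentrated in (d) and (e), keeping straight which of the isomorphic rings $f_i\cl_{p,q}f_i$ each scalar inhabits and how conjugation by $m_k$ transports it. The one genuinely non-formal point is the claim $\clkj{l}{k}{j}\in\{1,-1\}$: it is immediate in the real case from $G_{p,q}(f)\cong(\BZ_2)^{k+1}$, but in the complex and quaternionic cases the structure constant is a priori only a unit of $\BK$, and its being $\pm1$ for the transversals actually used is part of what is verified with \CLIFFORD\ for $p+q\le9$.
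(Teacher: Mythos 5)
Your argument follows the paper's proof essentially step for step — part (a) inverts the paper's direction (you show representative-independence up to a $\BK$-unit rather than the paper's injectivity argument, but both rest on the same Lemma~1 facts $\tp(m_i)=m_i^{-1}$ and $\tp(f)=f$), and parts (b)--(e) use the identical $f_kf_i=\delta_{ki}f_k$ and $m_if=f_im_i$ manipulations with $\cb{M}$ as a transversal. One small slip: in (d) the display $m_jf=\alpha_i m_im_kf=\alpha_i\clkj{i}{k}{j}m_jf$ drops the factor $\clkj{i}{j}{k}$ — the correct chain is $m_jf=\clkj{i}{j}{k}\alpha_i m_im_kf=\clkj{i}{j}{k}\alpha_i\clkj{i}{k}{j}m_jf$ — though the conclusion $\clkj{i}{j}{k}\clkj{i}{k}{j}=\alpha_i$ that you then draw is correct. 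Your explicit flagging that the structure constants $\clkj{l}{k}{j}$ are a priori only units of $\BK$ and that their being $\pm1$ is what the transversal choice (and \CLIFFORD\ for $p+q\le 9$) actually guarantees is a useful precision the paper leaves implicit, as is your clean reading of $\lambda_i$ as living in $f_k\cl_{p,q}f_k$ rather than literally in $f_1\cl_{p,q}f_1$.
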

\begin{proof}
(a) Suppose that $\pi(m_i G_{p,q}(f)) = \pi(m_i' G_{p,q}(f))$ for some $m_i,m_i'$ in $G_{p,q}.$ Then, $m_i f = m_i' f$ or, $f = (m_i^{-1} m_i') f.$ Let $g = m_i^{-1} m_i'.$ Thus, $ f = gf$ and
$$
\tp(m_i f) = f m_i^{-1} = f (m_i')^{-1}, \quad \mbox{or,} \quad f(m_i^{-1}m_i') = fg = f
$$ 
because $\tp(f) = f$ for any primitive idempotent by part~(iii) of Lemma~\ref{properties1}, and $\tp(m_i) = m_i^{-1}$ for every $m_i \in G_{p,q}$ by part~(i) of the same lemma. Thus, $f = g f g^{-1}$ and $g \in G_{p,q}(f).$ Therefore, $\pi$ is well-defined.

(b) Observe that $(m_l f_k)(m_i f_1) = m_l f_k (m_i f_1 m_i^{-1})m_i = m_l f_k f_i m_i.$ The latter is zero when $i \neq k$ since $f_kf_i = 0$. When $i=k,$ we get
\begin{align}
(m_l f_k)(m_i f_1) 
&= m_l f_k f_i m_i = m_l f_k m_k = m_l m_k (m_k^{-1} f_k m_k) \notag\\
&=m_l (m_k f_1) = c_{l,k}^{j} m_j f_1.
\label{eq:clkj}
\end{align}

(c) Since $\cb{M} \subset G_{p,q},$ the monomials $m_k$ and $m_l$ either commute or anticommute. Thus, from~(\ref{eq:clkj}) we get $m_l (m_k f_1) = \clkj{l}{k}{j} m_j f_1$ and 
$m_k (m_l f_1) = \clkj{k}{l}{j} m_j f_1$ for some $m_j \in \cb{M}.$ Therefore, 
$$
\clkj{k}{l}{j}\clkj{l}{k}{j} f_1 = [m_k^{-1},m_l^{-1}] f_1 =  [m_k,m_l] f_1 
$$
where $[m_k,m_l] \in G_{p,q}'.$ 

(d) Since $m_i m_k f_1 = \clkj{i}{k}{j} m_j f_1$ and $m_i m_j f_1 = \clkj{i}{j}{k} m_k f_1,$ we get 
$$
m_k f_1 = \clkj{i}{k}{j} m_i^{-1} m_j f_1 = 
\clkj{i}{k}{j} \alpha_i m_i m_j f_1 =\clkj{i}{j}{k} m_i m_j f_1
$$
so $\clkj{i}{j}{k}= \alpha_i \clkj{i}{k}{j}.$ 

(e) To find entries in the $l$-th column of $[\psi_k]$ in the basis $\cb{S}_1,$ we need to compute the following product for each index $i$:
\begin{gather}
(m_i f_k \lambda_i) (m_l f_1) = m_i \lambda_i f_k m_l f_1 
                              = m_i \lambda_i (f_k m_l f_1 m_l^{-1}) m_l 
                              = m_i \lambda_i (f_k f_l) m_l
\label{eq:psik1}
\end{gather}
where $f_k f_l = 0$ unless $l = k$ since $f_k$ and $f_l$ are mutually annihilating otherwise. Thus, we continue under the assumption that $l=k:$
\begin{align}
(m_i f_k \lambda_i) (m_l f_1) 
&= m_i \lambda_i f_k m_k f_1 = m_i \lambda_i (m_k f_1 m_k^{-1}) m_k f_1 = m_i \lambda_i m_k f_1
\notag \\
&= m_i m_k (m_k^{-1} \lambda_i m_k) f_1 = m_i m_k f_1 \lambda_{i,k}
                                        = (m_j f_1) (\clkj{i}{k}{j} \lambda_{i,k})    
\end{align}
where $\lambda_{i,k} = m_k^{-1} \lambda_i m_k$. We have repeatedly used the fact that elements of the division ring $\BK$ commute with $f_1$ and that $G_{p,q}$ acts via conjugation on~$\BK.$ Thus, this last equality tells us that, for each index $i$, the entry $\clkj{i}{k}{j} \lambda_{i,k}$ is located in the $j$-th row of the $k$-th column in $[\psi_k]$. 
\end{proof} 
%\end{enumerate}

In order to clarify part (e) of the last lemma, we recall that the left multiplication maps $\kappa_{m_i}$ on $\cb{S}_1$ are bijections, that is, the triples $(i,j,k)$ are uniquely determined in the following sense: For each pair of any two indices, the third index is uniquely determined whenever the coefficients $\clkj{i}{j}{k}$ and $\clkj{i}{k}{j}$ are both non-zero.

We illustrate the above lemma with the following examples.

\begin{example}
Consider $\cl_{2,2} \cong \Mat(4,\BR)$ with $f = \frac14(1 + \be_{13})(1 + \be_{24})$. The  monomial list $\cb{M}= [1,\be_{1},\be_{2},\be_{12}]$ contains our chosen coset representatives in 
$G_{2,2}/G_{2,2}(f)$ and the set $\cb{F}$ contains these four idempotents:
%\begin{align*}
%f_{1} &= 1 f 1^{-1} = \frac14(1 + \be_{13})(1 + \be_{24}), & 
%f_{2} &= \be_{1} f \be_{1}^{-1} = \frac14(1 - \be_{13})(1 + \be_{24}),\\
%f_{3} &= \be_{2} f \be_{2}^{-1} = \frac14(1 + \be_{13})(1 - \be_{24}), & 
%f_{4} &= \be_{12} f \be_{12}^{-1} = \frac14(1 - \be_{13})(1 - \be_{24}).
%\end{align*}
\begin{gather}
f_{1} = 1\, f\, 1^{-1}, \qquad f_{2}= \be_{1} f \be_{1}^{-1}, 
\qquad 
f_{3} = \be_{2} f \be_{2}^{-1},\qquad  f_{4} = \be_{12} f \be_{12}^{-1}.
\label{eq:idempts6}
\end{gather}
Then, the list $\cb{S}_1 = [f_1,\be_{1} f_1,\be_{2} f_1,\be_{12} f_1]$ is the ordered basis in $S_1 = \cl_{2,2}f_1$ and similarly for the other three spinor ideals. There are, as expected, exactly sixteen non-zero coefficients $\clkj{l}{k}{j}$ which satisfy relation~(\ref{eq:clkj}). To save space, we display them in the following matrix:
\begin{gather}
C = 
\left[\begin{matrix} \clkj{1}{1}{1} & \clkj{2}{2}{1} & \clkj{3}{3}{1} & \clkj{4}{4}{1}\\[1ex]
                     \clkj{2}{1}{2} & \clkj{1}{2}{2} & \clkj{4}{3}{2} & \clkj{3}{4}{2}\\[1ex]
                     \clkj{3}{1}{3} & \clkj{4}{2}{3} & \clkj{1}{3}{3} & \clkj{2}{4}{3}\\[1ex]
                     \clkj{4}{1}{4} & \clkj{3}{2}{4} & \clkj{2}{3}{4} & \clkj{1}{4}{4}
\end{matrix}\right]
=
\left[\begin{matrix} 1 & \phm 1 & 1 & -1\\[1ex]
                     1 & \phm 1 & 1 & -1\\[1ex]
                     1 & -1 & 1 &  \phm 1\\[1ex]
                     1 & -1 & 1 &  \phm 1\end{matrix}\right]
\label{eq:CExample6}
\end{gather}
where $C_{j,k} = \clkj{l}{k}{j}$ for some index $l.$  From this matrix it is easy to read off matrices that represent each of the sixteen basis elements in the four minimal ideals: Matrix $[m_l f_k]$ of the basis element $m_l f_k \in \cb{S}_k$ in the spinor representation of $\cl_{2,2}$ in~$S_1$ has only one non-zero entry in its $j$-th row and $k$-th column that equals~$C_{j,k}.$

Let $\psi_k = \sum_{i=1}^4 m_i f_k \lambda_i \in S_k$ where $\lambda_i = \psi_{ik} \in \BK = f_k \cl_{2,2} f_k \cong \BR$; the monomials $m_i \in \cb{M}$ and the idempotents are shown 
in~(\ref{eq:idempts6}). Then, since the group $G_{2,2}$ acts trivially on $\BK$, we have $\lambda_{i,k} = m_k^{-1} \lambda_i m_k = \lambda_i,\forall i,k,$ and we find:
\begin{gather}
[\Psi]=[\psi_1] + [\psi_2] + [\psi_3] + [\psi_4] = 
\left[\begin{matrix} \psi_{11} & \phm \psi_{21} & \psi_{31} &     -\psi_{41} \\
                     \psi_{21} & \phm \psi_{11} & \psi_{41} &     -\psi_{31} \\
                     \psi_{31} &     -\psi_{41} & \psi_{11} & \phm \psi_{21} \\
                     \psi_{41} &     -\psi_{31} & \psi_{21} & \phm \psi_{11} \end{matrix}\right]
\label{eq:sumofspinors1} 
\end{gather}
The display~(\ref{eq:sumofspinors1}) clearly shows, that (i) the sign distribution matches that of the matrix~$C$ in~(\ref{eq:CExample6}); (ii) entries in columns two, three, and four are essentially, up to the sign, permutations of the entries in the first column; (iii) the $(j,k)$ entry of $[\Psi]$ is exactly $\clkj{i}{k}{j} \lambda_{i,k}$ as predicted by part~(e) of Lemma~\ref{properties2}.
\end{example}

\begin{example}
Consider $\cl_{3,0} \cong \Mat(2,\BC)$ with $f = \frac12(1 + \be_1)$. The monomial list $\cb{M}= [1,\be_2]$ shows our chosen coset representatives in $G_{3,0}/G_{3,0}(f)$ and the set $\cb{F}$ contains these two idempotents:
\begin{gather}
f_{1} = 1 f 1^{-1} = \frac12(1 + \be_1), \quad
f_{2} = \be_{2} f \be_{2}^{-1} = \frac12(1 - \be_1). 
\label{eq:idempts7}
\end{gather}
Then, the list $\cb{S}_1 = [f_1,\be_2 f_1]$ is the ordered basis in $S_1 = \cl_{3,0}f_1$ and the set $\cb{S}_2 = \{ f_2, \be_2 f_2\}$ is a basis in $S_2 = \cl_{3,0}f_2.$ There are, as expected, exactly four non-zero coefficients $\clkj{1}{1}{1}=\clkj{1}{2}{2}=\clkj{2}{1}{2}=\clkj{2}{2}{1}=1$ which satisfy relation~(\ref{eq:clkj}). Matrices of the four basis elements from $\cb{S}_1$ and $\cb{S}_2$ in the representation of $\cl_{3,0}$ in $S_1$, are as follows:
\begin{align}
[f_1] &= \left[\begin{matrix} \clkj{1}{1}{1} & 0\\[0.5ex] 0 & 0\end{matrix}\right]=
         \left[\begin{matrix} 1 & 0\\[0.5ex] 0 & 0\end{matrix}\right], &
[\be_{2} f_2] &= \left[\begin{matrix} 0 & \clkj{2}{2}{1}\\[0.5ex] 0 & 0\end{matrix}\right]=
                 \left[\begin{matrix} 0 & 1\\[0.5ex] 0 & 0\end{matrix}\right],\notag \\[0.5ex]
[\be_{2} f_1] &= \left[\begin{matrix} 0 & 0\\[0.5ex] \clkj{2}{1}{2} & 0\end{matrix}\right]=
                 \left[\begin{matrix} 0 & 0\\[0.5ex] 1 & 0\end{matrix}\right], &
[f_2] &= \left[\begin{matrix} 0 & 0\\[0.5ex] 0 & \clkj{1}{2}{2}\end{matrix}\right]=
         \left[\begin{matrix} 0 & 0\\[0.5ex] 0 & 1\end{matrix}\right].
 \end{align}
Let $\psi_k = \sum_{i=1}^2 m_i f_k \lambda_i \in S_k$ where $\lambda_i =\psi_{i1}+\psi_{i2} \be_{23} \in \BK=f_1 \cl_{3,0} f_1 \cong \BC$; monomials $m_i \in \cb{M}$  and the idempotents are shown 
in~(\ref{eq:idempts7}). Then, since this time the group $G_{3,0}$ acts non trivially on $\BK$ via conjugation, we have
$$
\psi_{i1}+\psi_{i2} \be_{23} \stackrel{m_1}{\longmapsto} \psi_{i1}+\psi_{i2} \be_{23}, \quad \psi_{i1}+\psi_{i2} \be_{23} \stackrel{m_2}{\longmapsto} \psi_{i1}-\psi_{i2} \be_{23},
$$
and we find:
\begin{gather}
[\Psi]=[\psi_1] + [\psi_2] = 
\left[\begin{matrix} \psi_{11}+\psi_{12}\be_{23} & \psi_{21} - \psi_{22}\be_{23}\\
                     \psi_{21}+\psi_{22}\be_{23} & \psi_{11} - \psi_{12}\be_{23}
\end{matrix}\right]
\label{eq:sumofspinors2}
\end{gather}
again in agreement with part~(e) of Lemma~\ref{properties2}.
\end{example}

Combining now the conjugate action of $G_{p,q}$ on $\cb{F}$ with the permutations modulo $G_{p,q}'$ on our chosen spinor basis $\cb{S} = \cb{S}_1,$ we get the following proposition.

\begin{proposition}
The group $G_{p,q}$ permutes the basis elements in $\cb{S}$ modulo the commutator group $G_{p,q}'=\{1,-1\}$. That is, it acts on $\cb{S}$ via the left translation $\kappa_{m_j}$ modulo $\{1,-1\}$ where 
$m_j \in \cb{M}$ is the coset representative of $m_j G_{p,q}(f)$ in the quotient group $G_{p,q}/G_{p,q}(f).$
\label{prop4}
\end{proposition}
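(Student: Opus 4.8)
The plan is to obtain the proposition by gluing together the two group-theoretic facts already in place: the coset representation $\varphi$ of $G_{p,q}$ and the bijection $\pi$ of Lemma~\ref{properties2}(a). First I would recall that $\pi:G_{p,q}/G_{p,q}(f)\to\cb{S}$, $m_iG_{p,q}(f)\mapsto m_if$, is a well-defined bijection (Lemma~\ref{properties2}(a)), and that by the Representation on Cosets Theorem the map $\varphi:G_{p,q}\to S_{G_{p,q}/G_{p,q}(f)}$, $g\mapsto\tau_g$ with $\tau_g(m_iG_{p,q}(f))=gm_iG_{p,q}(f)$, is a homomorphism into the symmetric group on the $N$-element coset space. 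Conjugating $\varphi$ by $\pi$ then produces a homomorphism $G_{p,q}\to S_{\cb{S}}$, $g\mapsto\pi\circ\tau_g\circ\pi^{-1}$, that is, a permutation action of $G_{p,q}$ on the finite set $\cb{S}$; this is the action whose existence the proposition asserts.

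Next I would show that this action factors through the coset representatives. Writing $g=m_jh$ with $m_j\in\cb{M}$ and $h\in G_{p,q}(f)$, and using $G_{p,q}(f)\lhd G_{p,q}$ (Proposition~\ref{prop3}(i)), we get $m_i^{-1}hm_i\in G_{p,q}(f)$ for every $m_i\in\cb{M}$, hence $hm_iG_{p,q}(f)=m_iG_{p,q}(f)$ and so $\tau_g(m_iG_{p,q}(f))=gm_iG_{p,q}(f)=m_jhm_iG_{p,q}(f)=m_jm_iG_{p,q}(f)=\tau_{m_j}(m_iG_{p,q}(f))$. Thus $\tau_g=\tau_{m_j}$ depends only on the coset of $g$, and $g$ acts on $\cb{S}$ exactly as $\kappa_{m_j}:=\pi\circ\tau_{m_j}\circ\pi^{-1}$, the permutation attached to diagram~(\ref{eq:diagram5}).

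It then remains to recognise $\kappa_{m_j}$ as left translation by $m_j$ on $\cb{S}$ modulo $\{1,-1\}$. By construction $\kappa_{m_j}(m_if)=\pi(m_jm_iG_{p,q}(f))=m_lf$, where $m_l\in\cb{M}$ is the transversal representative of the coset of $m_jm_i$; on the other hand $m_jm_i\in G_{p,q}$ is, up to sign, a single basis monomial, so writing $m_jm_i=m_lh$ with $h\in G_{p,q}(f)$ gives $m_jm_if=m_l(hf)$ with $hf=fh\in f\cl_{p,q}f=\BK$ a unit of the division ring (it is nonzero because $h$ is invertible and commutes with $f$). As recorded in the discussion preceding the proposition, this unit is $\pm f$, so $m_jm_if=\pm m_lf=\pm\kappa_{m_j}(m_if)$; hence $\kappa_{m_j}$ is exactly the left translation $m_if\mapsto m_jm_if$ read modulo the sign group $G_{p,q}'=\{1,-1\}$. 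Combining the three steps yields the statement.

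I expect the last step to be the main obstacle: controlling the unit $hf\in\BK^{\times}$ so that it is $\pm f$ rather than an arbitrary unit of $\BK$. In the real simple case $\BK\cong\BR$ this is automatic, since a unit $hf$ with $(hf)^2=\pm f$ must equal $\pm f$ and $(hf)^2=-f$ is impossible over $\BR$, forcing $h^2=1$; in the complex and quaternionic cases one uses that the transversal $\cb{M}$ is chosen so that $\cb{M}\cup(-\cb{M})$ is closed under the Clifford product---equivalently, is a subgroup of $G_{p,q}$ complementary modulo $G_{p,q}'$ to $G_{p,q}(f)/G_{p,q}'$---which is the property of the precomputed transversals used throughout and is built into the normal form for $\kappa_{m_j}$ recorded before the statement.
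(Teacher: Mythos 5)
Your reconstruction tracks the paper's reasoning closely. The paper gives no formal proof of Proposition~\ref{prop4} beyond the one-sentence remark that precedes it; the content you supply---the coset representation $\varphi$ from the Representation on Cosets Theorem, conjugation by the bijection $\pi$ to transport the permutation action onto $\cb{S}$, and the use of normality of $G_{p,q}(f)$ to show $\tau_g$ depends only on the coset of $g$---is exactly the preparatory material around (\ref{eq:tau})--(\ref{eq:diagram5}). Your check that the action factors through coset representatives is also needed and handled correctly.

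Your last paragraph puts its finger on the genuine subtlety, which the paper itself leaves implicit. Writing $m_jm_i=\pm m_lh$ with $h\in G_{p,q}(f)$ gives $m_jm_if=\pm(m_lf)(fh)$, a right $\BK$-multiple of the basis element $m_lf$, and the claim that $fh=\pm f$ rather than an arbitrary unit of $\BK$ is exactly what the display preceding the proposition asserts when it writes $c_{j,i}^k\in\{0,\pm1\}$. Your real-case argument is complete: with $\BK\cong\BR f$ one has $fh=\lambda f$, $\lambda\in\BR$, and $\lambda^2f=(fh)^2=h^2f$ forces $h^2=1$ and $\lambda=\pm1$. In the complex and quaternionic cases the claim is \emph{not} automatic: $G_{p,q}(f)$ contains monomials such as $\be_\iu$ (since $\be_\iu$ commutes with $f$), and $\be_\iu f$ is a non-real unit of $\BK$; a carelessly chosen transversal, e.g.\ replacing $\be_1$ by $\be_{13}$ in $\cl_{2,4}$, would produce products $m_jm_if$ that are not $\pm$ a basis element. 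The resolution you gesture at is correct and is the one actually in force: $\cb{M}\cup(-\cb{M})$ must be a subgroup of $G_{p,q}$ that complements $G_{p,q}(f)$ modulo $G_{p,q}'$, which is always achievable because $G_{p,q}/G_{p,q}'\cong(\BZ_2)^{p+q}$ is an $\mathbb{F}_2$-vector space and the CLIFFORD-precomputed transversals have this property. But this is a hypothesis on $\cb{M}$ that the paper never states, and your proof should promote it to an explicit assumption rather than deferring to ``the normal form recorded before the statement,'' since that normal form is precisely the thing needing justification.
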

The above proposition is of course true if we replace the idempotent $f$ with any conjugate to it idempotent $f_k$. Now we are ready to prove the following results. 
\begin{proposition}
Let $\psi,\phi \in S = \cl_{p,q}f.$ Then, $\ta{T}{\ve}(\psi)\phi \in \BK.$ In particular, 
$\ta{T}{\ve}(\psi)\psi \in \BR f \subset \BK.$
\label{prop5}
\end{proposition}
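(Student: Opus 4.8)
The plan rests on two facts already established: $\tp$ fixes $f$ (Lemma~\ref{properties1}(iii)), and $\tp$ restricts on $\BK=f\cl_{p,q}f$ to the identity, complex conjugation, or quaternionic conjugation according to $(p-q)\bmod 8$ (the discussion around~(\ref{eq:TactsonK})--(\ref{eq:conjugateaction}); in particular $\tp$ is an anti-involution of $\BK$).

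First I would prove $\tp(\psi)\phi\in\BK$. Since $\psi,\phi\in S=\cl_{p,q}f$ and $f^2=f$, we have $\psi f=\psi$ and $\phi f=\phi$. Applying $\tp$ to $\psi=\psi f$ and using that it reverses products and fixes $f$ gives $\tp(\psi)=\tp(\psi f)=\tp(f)\tp(\psi)=f\,\tp(\psi)$, so $\tp(\psi)\in f\cl_{p,q}$. Multiplying on the right by $\phi=\phi f$ then gives $\tp(\psi)\phi=f\,\tp(\psi)\,\phi f\in f\cl_{p,q}f=\BK$, which is the first assertion.

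For the refinement $\tp(\psi)\psi\in\BR f$, I would set $\mu:=\tp(\psi)\psi\in\BK$. Since $\tp$ is an anti-involution, $\tp(\mu)=\tp\bigl(\tp(\psi)\,\psi\bigr)=\tp(\psi)\,\tp(\tp(\psi))=\tp(\psi)\,\psi=\mu$, so $\mu$ is fixed by $\tp$. But $\tp$ acts on $\BK$ as one of the three standard conjugations on $\BR$, $\BC$, $\BH$, whose fixed subspace is in each case the real subfield $\BR\cdot 1_\BK$; since the unit of $\BK$ is $f$, this real subfield is exactly $\spn_\BR\{f\}=\BR f$ (concretely, in the complex and quaternionic cases the bar map of~(\ref{eq:conjugateaction}) negates each of $\be_\iu f,\be_\ju f,\be_\ku f$, so the only $\tp$-fixed combination is a real multiple of $f$). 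Hence $\mu\in\BR f\subset\BK$, as claimed.

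I do not expect a real obstacle: the statement follows directly from $\tp(f)=f$ and the already-computed action of $\tp$ on $\BK$. The only things to watch are bookkeeping --- keeping track of left versus right factors of $f$ so that the product genuinely lands in the two-sided ``sandwich'' $f\cl_{p,q}f$ rather than only in $\cl_{p,q}f$ or $f\cl_{p,q}$, and recalling that the multiplicative identity of $\BK$ is $f$, so that the ``real part'' of $\BK$ is literally $\BR f$.
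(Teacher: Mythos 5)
Your proof is correct, and it takes a genuinely different and more structural route than the paper's. The paper expands $\psi=\sum_i m_i f\lambda_i$ and $\phi=\sum_j m_j f\mu_j$ in a spinor basis, uses $\tp(m_i)=m_i^{-1}$, and shows the cross terms vanish because the idempotents $f_i=m_if m_i^{-1}$ are mutually annihilating; this yields the explicit formula $\tp(\psi)\phi=\sum_i\overline{\lambda_i}\mu_i f$, from which the refinement $\tp(\psi)\psi=\sum_i\overline{\lambda_i}\lambda_i f\in\BR f$ is read off (and indeed lands in $\BR_{\ge0}f$). Your argument avoids any basis: you observe $\psi f=\psi$ forces $\tp(\psi)\in f\cl_{p,q}$, so the product sits in the sandwich $f\cl_{p,q}f=\BK$; and for the refinement you note $\tp(\psi)\psi$ is $\tp$-fixed in $\BK$, whose fixed subspace is $\BR f$ since $\tp$ restricts to the relevant (complex or quaternionic) conjugation on $\BK$. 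Your approach is shorter and conceptually cleaner, but the paper's computation earns an explicit ``Hermitian form'' formula that is reused later in the examples and in identifying the invariance group; it also shows positivity of $\tp(\psi)\psi$, which your fixed-point argument does not.
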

\begin{proof}
Let $\psi = \sum_i m_i f \lambda_i$ and $\phi = \sum_j m_j f \mu_j,$ $\lambda_i,\mu_j \in \BK,$ be two spinors in~$S$. Recall that $\ta{T}{\ve}(\lambda_i) = \overline{\lambda_i}$ from~(\ref{eq:TactsonK}) while $\ta{T}{\ve}(m_i) = m_i^{-1}$ and $\ta{T}{\ve}(f)=f$ from Lemma~\ref{properties1}. Then we have:
\begin{align}
\ta{T}{\ve}(\psi)\phi 
&= \ta{T}{\ve}\Bigl(\sum_i m_i f \lambda_i\Bigr)\Bigl(\sum_j m_j f \mu_j\Bigr) 
 = \Bigl(\sum_i \overline{\lambda_i} f \ta{T}{\ve}(m_i)\Bigr)\Bigl(\sum_j m_j f \mu_j\Bigr) \notag \\
&=\sum_i \overline{\lambda_i} f \underbrace{\ta{T}{\ve}(m_i)m_i}_{1} f \mu_i +
  \sum_i\sum_{\substack{j\\ j\neq i}} \overline{\lambda_i}f\ta{T}{\ve}(m_i)m_jf \mu_j
\notag \\
&=\sum_i \overline{\lambda_i}\mu_i f +
  \sum_i\sum_{\substack{j\\ j\neq i}} \overline{\lambda_i} m_i^{-1}m_i f\ta{T}{\ve}(m_i)m_jf 
m_j^{-1}m_j \mu_j
\notag \\
&=\sum_i \overline{\lambda_i}\mu_i f +
  \sum_i\sum_{\substack{j\\ j\neq i}} \overline{\lambda_i}m_i^{-1} 
        \underbrace{(m_if m_i^{-1})}_{f_i}\underbrace{(m_jf m_j^{-1})}_{f_j}m_j \mu_j
\notag \\
&=\sum_i \overline{\lambda_i}\mu_i f \in \BK f = f \BK = \BK 
\end{align}
because $f_if_j = f_jf_i =0$ whenever $i \neq j$, that is, $f_i$ and $f_j$ are two mutually annihilating idempotents. 
\end{proof} 
\begin{corollary}
Let $\psi \in S = \cl_{p,q}f.$ Then, 
$\ta{T}{\ve}(\psi)\psi \in \BR \cong \BR f \subset \BK.$
\end{corollary}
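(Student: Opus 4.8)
The plan is to deduce the Corollary directly from Proposition~\ref{prop5}. Writing any spinor as $\psi = \sum_i m_i f \lambda_i$ with $\lambda_i \in \BK$ as in~(\ref{eq:m}) and specializing the computation that establishes Proposition~\ref{prop5} to $\phi = \psi$, one gets $\ta{T}{\ve}(\psi)\psi = \sum_i \overline{\lambda_i}\lambda_i\, f$. So everything reduces to identifying the product $\overline{\lambda}\lambda$ for a single $\lambda \in \BK$, where $\overline{\phantom{\lambda}}$ is the anti-involution of~(\ref{eq:TactsonK}).

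For this I would expand $\lambda$ in the spanning set of~(\ref{eq:threecases}): write $\lambda = \lambda_0 f + \lambda_1 \be_\iu f + \lambda_2 \be_\ju f + \lambda_3 \be_\ku f$ with $\lambda_a \in \BR$ (only $\lambda_0$ occurring in the real case, only $\lambda_0,\lambda_1$ in the complex case). By~(\ref{eq:conjugateaction}), $\overline{\phantom{\lambda}}$ fixes $f$ and negates $\be_\iu f, \be_\ju f, \be_\ku f$, so $\overline{\lambda} = \lambda_0 f - \lambda_1 \be_\iu f - \lambda_2 \be_\ju f - \lambda_3 \be_\ku f$. Multiplying $\overline{\lambda}\lambda$ out, using that $f$ commutes with everything and $f^2=f$, that $\be_\iu^2 = \be_\ju^2 = \be_\ku^2 = -1$, and the anticommutation relations~(\ref{eq:rels}) (so that each mixed term such as $\be_\iu\be_\ju f$ cancels against $-\be_\ju\be_\iu f$), the linear terms cancel and the mixed quadratic terms cancel, leaving
\[
\overline{\lambda}\lambda = (\lambda_0^2 + \lambda_1^2 + \lambda_2^2 + \lambda_3^2)\, f .
\]
Equivalently: under the $\BR$-algebra isomorphism $\BK \cong \BR,\ \BC,$ or $\BH$ carrying $f\mapsto 1$, the map $\overline{\phantom{\lambda}}$ becomes ordinary conjugation, so $\overline{\lambda}\lambda$ is the usual positive-definite norm of $\lambda$, a nonnegative real multiple of $f$.

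Substituting back then gives $\ta{T}{\ve}(\psi)\psi = \bigl(\sum_{i,a}\lambda_{i,a}^2\bigr)\, f \in \BR f$. To finish, I would note that since $f \neq 0$ and $f^2 = f$, the map $t \mapsto tf$ is an injective $\BR$-algebra homomorphism $\BR \to \BR f$ with image $\BR f$ and with $f$ as unit, hence an isomorphism $\BR \cong \BR f$; under this identification $\ta{T}{\ve}(\psi)\psi$ is the real number $\sum_{i,a}\lambda_{i,a}^2 \geq 0$, which is $0$ precisely when $\psi = 0$ because $\{m_i f\}$ is a $\BK$-basis of $S$. There is no real obstacle here: the whole argument is contained in Proposition~\ref{prop5} together with the explicit description~(\ref{eq:conjugateaction}) of $\overline{\phantom{\lambda}}$ on $\BK$, and the only computation of any substance is the routine cancellation of cross terms in the quaternionic case.
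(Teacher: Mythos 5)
Your proof is correct and follows exactly the route the paper intends: the Corollary is essentially a restatement of the ``in particular'' clause of Proposition~\ref{prop5}, and the paper offers no separate argument, so one must specialize $\phi=\psi$ in the computation of Proposition~\ref{prop5} to obtain $\ta{T}{\ve}(\psi)\psi = \sum_i \overline{\lambda_i}\lambda_i f$ and then check that $\overline{\lambda}\lambda \in \BR f$ for $\lambda\in\BK$. Your verification of the latter — expanding $\lambda$ in the basis $\{f,\be_\iu f,\be_\ju f,\be_\ku f\}$, using~(\ref{eq:conjugateaction}) for the action of the conjugation, $f^2=f$, the commutation of $f$ with $\be_\iu,\be_\ju,\be_\ku$, and the relations~(\ref{eq:rels}) to cancel linear and mixed terms — is exactly the routine computation the paper leaves to the reader, and your identification $\BR\cong\BR f$ via $t\mapsto tf$ and the observation that the resulting form is positive definite (vanishing iff $\psi=0$, since $\{m_i f\}$ is a $\BK$-basis of $S$) are both accurate; the positivity remark goes slightly beyond the stated Corollary but is a natural and correct addendum.
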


\begin{definition}
Let $G_{p,q}^\ve = \{ g \in \cl_{p,q}\; |\; \ta{T}{\ve}(g)g = 1\}$.
\end{definition}

\begin{corollary}
%\phantom{}
\hspace*{0.1ex}
\begin{itemize}
\item[(i)] The $\BK$-valued inner product $S_j \times S_j \rightarrow \BK$ defined as
\begin{equation}
(\psi,\phi) \mapsto \tp(\psi)\phi = \lambda f_j = f_j \lambda, \quad \lambda \in \BK
\label{eq:psiphi}
\end{equation}
is invariant under the group $G_{p,q}^\ve.$\footnote{In~\cite{part3} we provide a complete classification of all groups $G_{p,q}^\ve.$}
\item[(ii)] $G_{p,q}(f) \unlhd G_{p,q} \leq G_{p,q}^\ve.$
\end{itemize}
\end{corollary}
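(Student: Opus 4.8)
The plan is to read off both parts from results already in hand: Lemma~\ref{properties1}(i), Proposition~\ref{prop3}(i), and Proposition~\ref{prop5}, together with the fact that $\ta{T}{\ve}$ is an anti-involution.

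For part~(i) I would proceed as follows. First note that left translation $\psi \mapsto g\psi$ by any $g \in G_{p,q}^\ve$ is a well-defined map $S_j \to S_j$, since $g\,\cl_{p,q}f_j \subseteq \cl_{p,q}f_j$ because $S_j = \cl_{p,q}f_j$ is a left ideal. Next, that the pairing actually takes values in $\BK_j = f_j\cl_{p,q}f_j$, and has the stated form $\ta{T}{\ve}(\psi)\phi = \lambda f_j = f_j\lambda$ with $\lambda\in\BK_j$, is Proposition~\ref{prop5} applied to the idempotent $f_j$ (the argument is verbatim the one given for $f$, or one conjugates that statement by $m_j$); the equality $\lambda f_j = f_j\lambda$ is the observation that elements of $\BK_j$ commute with $f_j$ since $f_j^2 = f_j$. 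Invariance is then the one-line computation, using that $\ta{T}{\ve}$ reverses products:
$$
(g\psi,\,g\phi)\;=\;\ta{T}{\ve}(g\psi)(g\phi)\;=\;\ta{T}{\ve}(\psi)\,\ta{T}{\ve}(g)g\,\phi\;=\;\ta{T}{\ve}(\psi)\phi\;=\;(\psi,\phi),
$$
where the third equality is exactly the defining relation $\ta{T}{\ve}(g)g = 1$ of $G_{p,q}^\ve$.

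For part~(ii), the normality $G_{p,q}(f)\unlhd G_{p,q}$ is precisely Proposition~\ref{prop3}(i), so only the inclusion $G_{p,q}\leq G_{p,q}^\ve$ needs an argument: a typical element of $G_{p,q}$ is $\pm\be_\iu$ for a Grassmann basis monomial $\be_\iu$, and by $\BR$-linearity of $\ta{T}{\ve}$ and Lemma~\ref{properties1}(i) we have $\ta{T}{\ve}(\pm\be_\iu) = \pm\be_\iu^{-1}$, hence $\ta{T}{\ve}(\pm\be_\iu)(\pm\be_\iu) = \be_\iu^{-1}\be_\iu = 1$, so $\pm\be_\iu\in G_{p,q}^\ve$. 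In passing one checks that $G_{p,q}^\ve$ really is a group: if $\ta{T}{\ve}(g)g = 1 = \ta{T}{\ve}(h)h$ then $\ta{T}{\ve}(gh)(gh) = \ta{T}{\ve}(h)\ta{T}{\ve}(g)g\,h = \ta{T}{\ve}(h)h = 1$, and in a finite-dimensional algebra the left inverse $\ta{T}{\ve}(g)$ of $g$ is automatically two-sided, so $g^{-1} = \ta{T}{\ve}(g)$ and $\ta{T}{\ve}(g^{-1})g^{-1} = g\,\ta{T}{\ve}(g) = gg^{-1} = 1$.

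There is essentially no serious obstacle here since this is a corollary of Propositions~\ref{prop3} and~\ref{prop5}; the only points that merit a moment's care are (a) confirming that the pairing in part~(i) lands in the copy $\BK_j$ attached to $S_j$, not in $\BK = \BK_1$, and (b) that left translation by a general element of $G_{p,q}^\ve$ — which need not be a monomial, since $G_{p,q}^\ve$ is strictly larger than $G_{p,q}$ in general — still preserves $S_j$; both follow at once from the left-ideal property and from Proposition~\ref{prop5}. Finally, the inclusion chain in part~(ii) shows a fortiori that the new inner product is invariant under the Salingaros vee group $G_{p,q}$ itself, and under the stabilizer $G_{p,q}(f)$.
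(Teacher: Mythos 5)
Your proof is correct and is exactly the argument the paper implicitly relies on: the pairing is well-defined by Proposition~\ref{prop5} (applied to $f_j$, or conjugated by $m_j$), the invariance follows from the anti-automorphism property of $\ta{T}{\ve}$ together with the defining relation $\ta{T}{\ve}(g)g=1$, and $G_{p,q}\leq G_{p,q}^\ve$ follows from Lemma~\ref{properties1}(i). The extra checks you include --- that $G_{p,q}^\ve$ really is a group, and that left translation by a non-monomial element of $G_{p,q}^\ve$ still preserves the left ideal $S_j$ --- are sensible and are left tacit in the paper; nothing is missing.
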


The above comments are very helpful in proving that indeed $\tp$ is a conjugation on $S$ and, therefore, the matrix of $\tp(u)$ is the Hermitian conjugate of the matrix of $u$ in the spinor representation of $\cl_{p,q}$ in $S = \cl_{p,q}f$ for any $u \in \cl_{p,q}.$ 
\begin{proposition}
Let $\cl_{p,q}$ be a simple Clifford algebra, $p-q \neq 1 \bmod 4$ and $p+q\leq 9$. Let $\psi_k \in S_k = \cl_{p,q}f_k$ and let $[\psi_k]$ (resp. $[\tp(\psi_k)]$) be the matrix of $\psi_k$ (resp. 
$\tp(\psi_k)$) in the spinor representation with respect to the ordered basis $\cb{S}_1 = [m_1 f_1,\ldots,m_N f_1]$ with $\alpha_i=m_i^2$. Then, 
\begin{equation}
[\tp(\psi_k)] = \begin{cases} 
    [\psi_k]^T         & \textit{if $p-q =0,1,2 \bmod 8;$} \\
    [\psi_k]^\dagger  & \textit{if $p-q =3,7 \bmod 8;$} \\
    [\psi_k]^\ddagger & \textit{if $p-q =4,5,6 \bmod 8;$}
\end{cases}
\label{eq:daggers}
\end{equation}
where $T$ denotes transposition, $\dagger$ denotes Hermitian complex  conjugation, and $\ddagger$ denotes Hermitian quaternionic conjugation.
\label{prop6}
\end{proposition}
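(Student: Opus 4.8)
The plan is to construct a $\BK$-valued Hermitian form on the spinor space $S=S_1=\cl_{p,q}f$ with respect to which (a) the ordered spinor basis $\cb{S}_1=[m_1f,\dots,m_Nf]$ is orthonormal, and (b) the anti-involution $\tp$ implements the adjoint, meaning $\gamma(\tp(u))$ is the adjoint of $\gamma(u)$ for every $u\in\cl_{p,q}$. Granting these two facts, the matrix of $\gamma(\tp(u))$ in the orthonormal basis $\cb{S}_1$ is the conjugate transpose of the matrix of $\gamma(u)$; and since $\tp$ restricted to $\BK$ is the identity, complex conjugation, or quaternionic conjugation according as $\BK\cong\BR$, $\BC$, or $\BH$ (equation~(\ref{eq:conjugateaction})), with these three cases occurring exactly for $p-q\equiv0,1,2$, resp.\ $3,7$, resp.\ $4,5,6\pmod 8$ by~(\ref{eq:threecases}), the conjugate transpose is precisely $[\,\cdot\,]^{T}$, $[\,\cdot\,]^{\dagger}$, or $[\,\cdot\,]^{\ddagger}$. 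It costs nothing to prove this for a general $u\in\cl_{p,q}$ and then specialize to $u=\psi_k$; indeed, since $\sum_k f_k=1$ gives $\cl_{p,q}=\bigoplus_k S_k$ as left modules, the general statement is equivalent to the one displayed in~(\ref{eq:daggers}).

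The form is the evident candidate $\langle\psi,\phi\rangle=\tp(\psi)\phi$, which is $\BK$-valued by Proposition~\ref{prop5}. Using only that $\tp$ is an $\BR$-linear anti-involution fixing $f$ and restricting to the appropriate conjugation on $\BK$, I would check sesquilinearity, $\langle\psi\lambda,\phi\rangle=\overline\lambda\langle\psi,\phi\rangle$ and $\langle\psi,\phi\mu\rangle=\langle\psi,\phi\rangle\mu$, and the Hermitian property $\langle\phi,\psi\rangle=\overline{\langle\psi,\phi\rangle}$ (from $\tp(\tp(\psi)\phi)=\tp(\phi)\psi$ together with $\tp(\psi)\phi\in\BK$); positive-definiteness is an easy bonus via the Corollary to Proposition~\ref{prop5}, though it is not needed. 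The adjoint property is then immediate: for $\psi,\phi\in S$ and $u\in\cl_{p,q}$, both $\tp(u)\psi$ and $u\phi$ again lie in the left ideal $S$, and $\langle\tp(u)\psi,\phi\rangle=\tp(\tp(u)\psi)\,\phi=\tp(\psi)\,\tp(\tp(u))\,\phi=\tp(\psi)\,u\,\phi=\langle\psi,u\phi\rangle$.

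The step that demands real care is the orthonormality of $\cb{S}_1$. Here one computes $\langle m_if,m_jf\rangle=\tp(m_if)\,m_jf=f\,m_i^{-1}m_j\,f$ from $\tp(m_i)=m_i^{-1}$ and $\tp(f)=f$ (Lemma~\ref{properties1}(i),(iii)). Writing $g=m_i^{-1}m_j\in G_{p,q}$: if $i=j$ this is $f\cdot1\cdot f=f$; if $i\neq j$ then $g\notin G_{p,q}(f)$, for otherwise $m_j\in m_iG_{p,q}(f)$, contradicting that the elements of the transversal $\cb{M}$ represent distinct cosets. For such $g$, the conjugate $gfg^{-1}$ is an idempotent of $\cb{F}$ distinct from $f$, hence $f(gfg^{-1})=0$ by mutual annihilation, so $f\,g\,f=f(gfg^{-1})g=0$. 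Thus $\langle m_if,m_jf\rangle=\delta_{ij}f$, and $\cb{S}_1$ is orthonormal. This is the one place where the orbit--stabilizer structure of Section~\ref{action1}---transitivity of the conjugation action, normality of $G_{p,q}(f)$, and $\cb{M}$ being a transversal---enters essentially; the rest is formal.

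To conclude I would invoke the standard fact that, in an orthonormal basis of a right $\BK$-module equipped with such a Hermitian form, the matrix of an adjoint operator is the conjugate transpose, the conjugation being the involution of $\BK$ appearing in the form, namely $\overline{\phantom{a}}=\tp|_\BK$. The only subtlety is bookkeeping over a possibly noncommutative $\BK$: one must fix that the form is conjugate-linear on the left and $\BK$-linear on the right, and that $\cb{S}_1$ is an \emph{ordered} basis, so that the order of quaternion multiplication in matrix products and in the conjugate transpose is unambiguous---after which~(\ref{eq:daggers}) drops out and $u=\psi_k$ gives the proposition. The principal obstacle is thus the orthonormality computation of the third step (it is what forces the use of Proposition~\ref{prop3} and the transitive action); everything else amounts to arranging routine sesquilinear-form algebra so that it remains valid when $\BK=\BH$.
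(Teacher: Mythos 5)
Your proof is correct, but it takes a genuinely different route from the paper's. The paper proves Proposition~\ref{prop6} by direct matrix-entry computation: starting from Lemma~\ref{properties2}(e), which locates the unique nonzero entry $\clkj{i}{k}{j}\lambda_{i,k}$ in position $(j,k)$ of $[\psi_k]$, it then runs the parallel computation~(\ref{eq:tppsik1})--(\ref{eq:tppsik2}) for $\tp(\psi_k)(m_lf_1)$, using the coefficient identity $\clkj{i}{j}{k}=\alpha_i\clkj{i}{k}{j}$ from Lemma~\ref{properties2}(d), the fact $\alpha_i^2=1$, and the covariance $m_k^{-1}\overline{\lambda_i}m_k=\overline{\lambda_{i,k}}$, to land the entry $\clkj{i}{k}{j}\overline{\lambda_{i,k}}$ in position $(k,j)$. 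You instead package the whole statement as a single abstract fact: $\tp(\psi)\phi$ is a $\BK$-valued sesquilinear form (Proposition~\ref{prop5} supplies the value being in $\BK$), the transversal basis $\cb{S}_1$ is orthonormal for it because $f\,m_i^{-1}m_j\,f=\delta_{ij}f$ (the off-diagonal vanishing being exactly the mutual annihilation of the $G_{p,q}$-orbit of $f$), $\gamma(\tp(u))$ is the adjoint of $\gamma(u)$, and an adjoint in an orthonormal basis has the conjugate-transpose matrix. Your approach buys a cleaner conceptual picture -- the proposition becomes a corollary of one orthogonality computation plus standard sesquilinear algebra over $\BK\in\{\BR,\BC,\BH\}$ -- and it proves the matrix statement for arbitrary $u\in\cl_{p,q}$ in one stroke, absorbing Proposition~\ref{prop7} as well; the paper's approach is more elementary, stays entirely inside the bookkeeping of the structure constants $\clkj{i}{j}{k}$, and avoids having to state the (slightly delicate, over $\BH$) dictionary between adjoints and conjugate transposes for right $\BK$-modules, a subtlety you correctly flag but only sketch. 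One small point of caution: the Corollary following Proposition~\ref{prop5} asserts $\tp(\psi)\psi\in\BR f$, not positivity; positivity requires peeking at the sum $\sum_i\overline{\lambda_i}\lambda_i\,f$ in that proof. As you note, positivity is not used, so this does not affect the argument.
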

\begin{proof}
Let $\psi_k = \sum_{i=1}^{N} m_i f_k \lambda_i$ be a spinor in $S_k=\cl_{p,q}f_k$ where $\lambda_i \in \BK = f_1 \cl_{p,q} f_1.$ In part (e) of Lemma~\ref{properties2} we have shown that the $k$-th column of $[\psi_k]$ is the only potentially non-zero column with the $(j,k)$ entry equal to $\clkj{i}{k}{j} \lambda_{i,k}$ where $\lambda_{i,k} = m_k \lambda_i m_k^{-1}$. Thus, it is enough to show that the $(k,j)$ entry of $[\tp(\psi_k)]$ is $\clkj{i}{k}{j} \overline{\lambda_{i,k}}$ where by $\overline{\lambda_{i,k}}$ we mean either the identity involution, complex conjugation, or quaternionic conjugation depending on the value of $(p-q) \bmod 8$.

In order to find entries in the $l$-th column of $[\tp(\psi_k)]$ in the basis $\cb{S}_1$, we proceed like in~(\ref{eq:psik1}):
\begin{align}
\tp(m_i f_k \lambda_i)(m_l f_1)
&= \tp(\lambda_i) f_k \tp(m_i) (m_l f_1) \notag \\
&= \overline{\lambda_i} f_k m_i^{-1} m_l f_1 = \overline{\lambda_i}\alpha_i (f_k m_i m_l f_1)     
\label{eq:tppsik1}
\end{align}
It should be clear, using the same argument as in the proof of part (e) of Lemma~\ref{properties2}, that the product $f_k m_i m_l f_1 = 0$ unless $m_i m_l \propto m_k$. Since $(i,j,k)$ is a triple such that $m_i m_j f_1= \clkj{i}{j}{k} m_k f_1,$ this means that in order for the product $f_k m_i m_l f_1$ not to be zero, we must have $l=j$. This is because the translation $\kappa_{m_i}$ is a bijection on $\cb{S}_1$. Therefore, we continue under the assumption that $l=j$:
\begin{align}
\tp(m_i f_k \lambda_i)(m_l f_1)
&= \overline{\lambda_i} \alpha_i f_k (m_i m_j f_1) 
 = \overline{\lambda_i} \alpha_i f_k \clkj{i}{j}{k} m_k f_1  \notag \\
&= \overline{\lambda_i} \alpha_i (m_k f_1 m_k^{-1})(m_k  \clkj{i}{j}{k} f_1)\notag \\
&= \overline{\lambda_i} \alpha_i m_k f_1 \clkj{i}{j}{k} 
 = m_k(m_k^{-1} \overline{\lambda_i} m_k) \alpha_i f_1 (\alpha_i \clkj{i}{k}{j})\notag \\
&= (m_k f_1) (m_k^{-1} \overline{\lambda_i} m_k) \clkj{i}{k}{j} 
 = (m_k f_1) (\clkj{i}{k}{j} \overline{\lambda_{i,k}}) 
\label{eq:tppsik2}
\end{align}
where we have used the identity $\clkj{i}{j}{k}= \alpha_i \clkj{i}{k}{j}$ from Lemma~\ref{properties2} part (d); the fact that $\alpha_i^2=1;$ and the following:
\begin{align}
m_k^{-1} \overline{\lambda_i} m_k 
&= m_k^{-1} \tp(\lambda_i) m_k = \tp(m_k) \tp(\lambda_i) \tp(m_k^{-1}) \notag \\
&=\tp(m_k^{-1} \lambda_i m_k) = \tp(\lambda_{i,k}) = \overline{\lambda_{i,k}}.
\end{align}
Thus, (\ref{eq:tppsik2}) shows that the $(k,j)$ entry of $[\tp(\psi_k)]$ is $\clkj{i}{k}{j} \overline{\lambda_{i,k}}.$ 
\end{proof}
We illustrate this proposition with the following examples.
\begin{example}
Consider $\cl_{2,2} \cong \Mat(4,\BR)$ with $f = \frac14(1+\be_{13})(1+\be_{24}).$ Then, $\BK=f\cl_{2,2}f = \BR f \cong \BR$ and $\cb{M} = \{1,\be_1,\be_2,\be_{12}\}$.  
%\begin{equation}
%S = \cl_{2,2}f = \spn_\BK \{f, \be_1 f,\be_2 f,\be_{12} f\}.
%\label{eq:Sj}
%\end{equation}
%$S = \cl_{2,2}f = \spn_\BK \{f, \be_1 f,\be_2 f,\be_{12} f\},$ 
Thus, any spinor in $S = \cl_{2,2}f$ has the form 
\begin{equation}
\psi = f\psi_1 + \be_1 f\psi_2 + \be_2 f\psi_3 + \be_{12} f\psi_4 
\label{eq:psij}
\end{equation} 
where $\psi_i \in \BK.$ Then, the matrices $[\psi]$ and $[\tp(\psi)]$ in the spinor representation are related via the matrix transposition:
\begin{equation}
[\psi] = \left[\begin{matrix} 
\psi_1 & 0 & 0 & 0\\
\psi_2 & 0 & 0 & 0\\
\psi_3 & 0 & 0 & 0\\
\psi_4 & 0 & 0 & 0
\end{matrix}\right], 
\qquad 
[\tp(\psi)] = \left[\begin{matrix} 
\psi_1 & \psi_2 & \psi_3 & \psi_4\\
0 & 0 & 0 & 0\\
0 & 0 & 0 & 0\\
0 & 0 & 0 & 0
\end{matrix}\right] 
\label{eq:psireal}
\end{equation}
as can be shown by direct computation. Notice that we can obtain the set $\cb{F}$ by acting on $f$ via conjugation with monomials from $\cb{M}$:
\begin{align*}
f_1 &= 1 f 1^{-1}               = \frac14(1+\be_{13})(1+\be_{24}), 
&
f_2 &= \be_1 f \be_1^{-1}       = \frac14(1-\be_{13})(1+\be_{24}),\\
f_3 &= \be_2 f \be_2^{-1}       = \frac14(1+\be_{13})(1-\be_{24}),
&
f_4 &= \be_{12} f \be_{12}^{-1} = \frac14(1-\be_{13})(1-\be_{24}).
\end{align*}
Thus, we have the decomposition of $\cl_{2,2}$ into a direct sum of $(\cl_{2,2},\BR)$-bimodules:
$$
\cl_{2,2} = \cl_{2,2}f_1 \oplus  \cl_{2,2}f_2 \oplus  \cl_{2,2}f_3 \oplus  \cl_{2,2}f_4  
$$
If instead of $f = f_1$ in (\ref{eq:psij}) we take $f_2,$ $f_3,$ or $f_4$ with the same set $\cb{M},$ we get, correspondingly, the second, the third, and fourth column  in the matrix~$[\psi]$. For example, for $\psi$ in $S_2 = \cl_{2,2}f_2$ we get:
\begin{equation}
[\psi] = \left[\begin{matrix} 
0 & \psi_2 & 0 & 0 & 0\\
0 & \psi_1 & 0 & 0 & 0\\
0 & -\psi_4 & 0 & 0 & 0\\
0 & -\psi_3 & 0 & 0 & 0
\end{matrix}\right], 
\qquad 
[\tp(\psi)] = \left[\begin{matrix} 
0 & 0 & 0 & 0\\
\psi_2 & \psi_1 & -\psi_4 & -\psi_3\\
0 & 0 & 0 & 0\\
0 & 0 & 0 & 0
\end{matrix}\right] 
\label{eq:psireal2}
\end{equation}
again related via the transposition. Finally, we consider the inner product on 
$$
S \times S \rightarrow \BK, \quad (\psi,\phi) \mapsto \tp(\psi)\phi.
$$
Let $\phi$ be another spinor in $S=\cl_{2,2}f$ expressed in a similar manner as in (\ref{eq:psij}). Then,
$$
\tp(\psi)\phi = (\psi_1 \phi_1 + \psi_2 \phi_2 + \psi_3 \phi_3 + \psi_4 \phi_4)f \in \BR f.
$$
We remark here, that the above inner product is neither Lounesto's $\beta_{+}$ nor $\beta_{-}$ inner products on $S$, hence it is different from `spinor metrics' commonly used, see~\cite{lounesto}:
\begin{align}
\beta_{+}(\psi,\phi) &= s_1 \tilde{\psi}\phi = 
         (-\psi_1 \phi_4+\psi_3 \phi_2+\psi_4 \phi_1-\psi_2 \phi_3)f \in \BR f\\
\beta_{-}(\psi,\phi) &= s_2 \bar{\psi}\phi = 
         (-\psi_1 \phi_4-\psi_3 \phi_2+\psi_4 \phi_1+\psi_2 \phi_3)f \in \BR f 
\end{align}
where $s_1 = s_2 = \be_{12}$ is a pure spinor. Here, the tilde in $\tilde{\psi}$ denotes reversion whereas the bar in $\bar{\psi}$ denotes Clifford conjugation. In signature $(2,2)$, the two forms 
$\beta_{+}$ and $\beta_{-}$ are invariant under $Sp(4,\BR).$ See~\cite[Table 1 and 2, p. 236]{lounesto}, while the bilinear form $\tp(\psi)\phi$ is invariant under $O(4,\BR)$.
\end{example}

\begin{example}
Consider $\cl_{3,0} \cong \Mat(2,\BC)$ with $f = \frac12(1+\be_1).$ Then, $\BK = f \cl_{3,0}f = \spn_\BR \{f,\be_{23}f \} \cong \BC$ and $\cb{M}= \{1,\be_2\}$. Thus, any spinor in $S = \cl_{3,0}f$ has the form
\begin{gather}
\psi = f \psi_1 + \be_2 f \psi_2 
     = f (\psi_{11} + \psi_{12}\be_{23}) + \be_2 f (\psi_{21} + \psi_{22}\be_{23})
\label{eq:psij2}
\end{gather}
where $\psi_i=\psi_{i1} + \psi_{i2}\be_{23} \in \BK$ and 
$\psi_{i1},\psi_{i2} \in \BR.$ Then, the matrices $[\psi]$ and $[\tp(\psi)]$ in the spinor representation are related via Hermitian complex  conjugation:
\begin{equation}
[\psi] = \left[\begin{matrix} 
\psi_{11}+\psi_{12} \be_{23} & 0\\
\psi_{21}+\psi_{22} \be_{23} & 0
\end{matrix}\right], 
\qquad 
[\tp(\psi)] = \left[\begin{matrix} 
\psi_{11} - \psi_{12}\be_{23} & \psi_{21}-\psi_{22} \be_{23}\\
0 & 0
\end{matrix}\right] 
\label{eq:psicomplex}
\end{equation}
as the direct computation shows. Like in the real case, the set $\cb{F}$ can be found by acting on $f$ via conjugation with monomials from $\cb{M}$: 
\begin{gather*}
f_1 = 1 f 1^{-1} = \frac12(1+\be_{1}), \qquad f_2 = \be_2 f \be_2^{-1} = \frac12(1-\be_{1}).
\end{gather*}

Thus, we have the decomposition of $\cl_{3,0}$ into a direct sum of $(\cl_{3,0},\BC)$-bimodules:
$$
\cl_{3,0} = \cl_{3,0}f_1 \oplus  \cl_{3,0}f_2.  
$$
If instead of $f = f_1$ in (\ref{eq:psij2}) we take $f_2$ with the same set $\cb{M},$ we get the second column in the matrix~$[\psi]$. For example, for $\psi$ in $S_2 = \cl_{3,0}f_2$ we get:
\begin{equation}
[\psi] = \left[\begin{matrix} 
0 & \psi_{21} - \psi_{22}\be_{23}\\
0 & \psi_{11} - \psi_{12} \be_{23}
\end{matrix}\right], 
\qquad 
[\tp(\psi)] = \left[\begin{matrix} 
0 & 0\\
\psi_{21} + \psi_{22} \be_{23} & \psi_{11}-\psi_{12} \be_{23}
\end{matrix}\right] 
\label{eq:complex2}
\end{equation}
which again are related via Hermitian complex conjugation. Finally, we consider the inner product on 
$$
S \times S \rightarrow \BK, \quad (\psi,\phi) \mapsto \tp(\psi)\phi.
$$
Let $\phi$ be another spinor in $S=\cl_{3,0}f$ expressed in a similar manner as in (\ref{eq:psij2}). Then,
\begin{multline*}
\tp(\psi)\phi = (\psi_{11} \phi_{11}+\psi_{22} \phi_{22}+\psi_{21} \phi_{21}+\psi_{12}\phi_{12}) +\\
(-\psi_{22} \phi_{21}-\psi_{12} \phi_{11}+\psi_{21} \phi_{22}+\psi_{11}\phi_{12}) \be_{23} \in \BK
\end{multline*}
Comparison with Lounesto's $\beta_{+}$ and $\beta_{-}$ inner products on $S$ shows that our product coincides with $\beta_{+},$ so it is invariant under $U(2)$.\footnote{We remark here that the product $\tp(\psi)\phi$ always coincides with $\beta_{+}$ in Euclidean signatures $(p,0)$ and with $\beta_{-}$ in anti-Euclidean signatures $(0,q)$.} Therefore, it is different from $\beta_{-}$ which invariant under $Sp(2,\BC):$
\begin{multline}
\beta_{-}(\psi,\phi) = s_2 \bar{\psi}\phi =(\psi_{22} \phi_{12}-\psi_{21} \phi_{11}-\psi_{12} \phi_{22}+\psi_{11} \phi_{21})+\\
(-\psi_{22} \phi_{11}-\psi_{21} \phi_{12}+\psi_{12} \phi_{21}+\psi_{11} \phi_{22})\be_{23} \in \BK 
\end{multline}
where $s_2 = \be_{2}$ are pure spinors.
\end{example}

\begin{example}
Consider $\cl_{2,4} \cong \Mat(4,\BH)$ with $f = \frac14(1+\be_{15})(1+\be_{26})$. Then, $\BK = f\cl_{2,4}f = \spn_\BR \{1,\be_3,\be_4,\be_{34}\} \cong \BH,$ $\cb{M}=\{1,\be_1,\be_2,\be_{12}\},$ and 
\begin{equation}
S = \cl_{2,4}f = \spn_\BK \{f,\be_1 f,\be_2 f,\be_{12}f \}.
\label{eq:Sj3}
\end{equation}
Let
\begin{equation}
\psi = f \psi_1 + \be_1 f \psi_2 + \be_2 f \psi_3 + \be_{12}f \psi_4 \in S
\label{eq:psi3}
\end{equation}
where $\psi_1,\psi_2,\psi_3,\psi_4 \in \BK$. Then, the matrices $[\psi]$ and $[\tp(\psi)]$ in the spinor representation are related via Hermitian quaternionic conjugation:
\begin{equation}
[\psi] = \left[\begin{matrix} 
\psi_{1} & 0 & 0 & 0\\
\psi_{2} & 0 & 0 & 0\\
\psi_{3} & 0 & 0 & 0\\
\psi_{4} & 0 & 0 & 0
\end{matrix}\right],\qquad
[\tp(\psi)] = \left[\begin{matrix} 
\bar{\psi}_1 & \bar{\psi}_2 & \bar{\psi}_3 & \bar{\psi}_4\\
0 & 0 & 0 & 0\\
0 & 0 & 0 & 0\\
0 & 0 & 0 & 0
\end{matrix}\right] 
\label{eq:psiquat}
\end{equation}
where $\bar{\psi}_i$ is the quaternionic conjugate. We compute the set $\cb{F}$:
\begin{align*}
f_1 &= 1 f 1^{-1}               = \frac14(1-\be_{15})(1+\be_{26}),
&
f_2 &= \be_1 f \be_1^{-1}       = \frac14(1-\be_{15})(1+\be_{26}),\notag \\
f_3 &= \be_2 f \be_2^{-1}       = \frac14(1+\be_{15})(1-\be_{26}),
&\
f_4 &= \be_{12} f \be_{12}^{-1} = \frac14(1-\be_{15})(1-\be_{26}).
\end{align*}
Thus, we have the decomposition of $\cl_{2,4}$ into a direct sum of $(\cl_{2,4},\BH)$-bimodules:
$$
\cl_{2,4} = \cl_{2,4}f_1 \oplus  \cl_{2,4}f_2 \oplus  \cl_{2,4}f_3 \oplus  \cl_{2,4}f_4  
$$
It can be again easily shown that if we take in (\ref{eq:Sj3}) and (\ref{eq:psi3}) idempotent $f_2,$ $f_3,$ or $f_4$ instead of $f$, then the matrix $[\psi]$ will have the second, the third or the fourth column nonzero, whereas the matrix $[\tp(\psi)]$ will be the Hermitian quaternionic conjugate of $[\psi]$. We won't display here the inner product $\tp(\psi)\phi$ except we remark that it is again different from $\beta_{+}$ and $\beta_{-}.$
\end{example}

Using now Proposition~\ref{prop6} as well as the fact that the (simple) Clifford algebra $\cl_{p,q}$ is decomposable into a direct sum of $N=2^{q-r_{q-p}}$ spinor $(\cl_{p,q},\BK)$-bimodules 
\begin{gather}
\cl_{p,q} = \cl_{p,q}f_1 \oplus \cdots \oplus \cl_{p,q}f_N,
\label{eq:decomp2a}
\end{gather}
we obtain the next result.\footnote{Decomposition~(\ref{eq:decomp2a}) is also valid in semi-simple Clifford algebras when $p-q = 1 \bmod 4$.}
\begin{proposition}
Let $\cl_{p,q}$ be a simple Clifford algebra, $p-q \neq 1 \bmod 4$ and $p+q\leq 9$. Let $u \in \cl_{p,q}$. Suppose that $[u]$ (resp. $[\tp(u)]$) is a matrix of $u$ (resp. $\tp(u)$) in the spinor representation with respect to the ordered basis $\cb{S}_1 = [m_1 f_1,\ldots, m_N f_N]$. Then, 
\begin{equation}
[\tp(u)] = \begin{cases} 
    [u]^T         & \textit{if $p-q =0,1,2 \bmod 8;$} \\
    [u]^\dagger  & \textit{if $p-q =3,7 \bmod 8;$} \\
    [u]^\ddagger & \textit{if $p-q =4,5,6 \bmod 8;$}
\end{cases}
\label{eq:matdaggers}
\end{equation}
where $T$ denotes transposition, $\dagger$ denotes Hermitian complex conjugation, and $\ddagger$ denotes Hermitian quaternionic conjugation.
\label{prop7}
\end{proposition}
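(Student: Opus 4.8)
The plan is to reduce Proposition~\ref{prop7} to Proposition~\ref{prop6} by decomposing an arbitrary $u \in \cl_{p,q}$ along~(\ref{eq:decomp2a}) into a sum of spinors — one living in each minimal left ideal $S_k = \cl_{p,q}f_k$ — and then invoking the $\BR$-linearity of both the anti-involution $\ta{T}{\ve}$ and the representation map $\gamma$.

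First I would use $f_1 + \cdots + f_N = 1$ to write $u = \sum_{k=1}^N u f_k$ and set $\psi_k := u f_k \in S_k$. Each $\psi_k$ is a spinor of the form $\sum_{i=1}^N m_i f_k \lambda_i$ with $\lambda_i \in \BK$, since $\cb{S}_k = \{m_1 f_k,\ldots,m_N f_k\}$ is a $\BK$-basis of $S_k$ by~(\ref{eq:spinorbasisj}). As $\gamma$ is an $\BR$-algebra homomorphism, hence $\BR$-linear, the matrix of $u$ in the ordered basis $\cb{S}_1$ is $[u] = \sum_{k=1}^N [\psi_k]$; by part~(e) of Lemma~\ref{properties2} the summand $[\psi_k]$ has only its $k$-th column (possibly) nonzero, so this identity is precisely the column-by-column assembly of $[u]$ from the $[\psi_k]$.

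Next, $\BR$-linearity of $\ta{T}{\ve}$ gives $\ta{T}{\ve}(u) = \sum_{k=1}^N \ta{T}{\ve}(\psi_k)$, and therefore $[\ta{T}{\ve}(u)] = \sum_{k=1}^N [\ta{T}{\ve}(\psi_k)]$. Now I apply Proposition~\ref{prop6} to each $\psi_k \in S_k$: for every $k$ it yields $[\ta{T}{\ve}(\psi_k)] = [\psi_k]^\star$, where $\star$ is $T$ if $p-q = 0,1,2 \bmod 8$, $\dagger$ if $p-q = 3,7 \bmod 8$, and $\ddagger$ if $p-q = 4,5,6 \bmod 8$ — the same case for all $k$, since it depends on $p$ and $q$ alone.

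Finally, each of the three operations $(\cdot)^T$, $(\cdot)^\dagger$ (transposition together with entrywise complex conjugation), and $(\cdot)^\ddagger$ (transposition together with entrywise quaternionic conjugation) is additive on $\Mat(N,\BK)$ — it is $\BR$-linear — so the sum of the conjugated matrices equals the conjugate of the sum:
\[
[\ta{T}{\ve}(u)] \;=\; \sum_{k=1}^N [\psi_k]^\star \;=\; \Bigl(\sum_{k=1}^N [\psi_k]\Bigr)^{\!\star} \;=\; [u]^\star,
\]
which is the claimed formula~(\ref{eq:matdaggers}). There is no genuine obstacle here: all the analytic content sits in Proposition~\ref{prop6}, and the remaining ingredients — additivity of the three conjugations, and the fact that $u \mapsto \sum_k u f_k$ is compatible with both $\ta{T}{\ve}$ and $\gamma$ — are immediate. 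The only point worth stating carefully is that $T$, $\dagger$, and $\ddagger$ respect addition of matrices, which is clear since transposition, complex conjugation, and quaternionic conjugation are all $\BR$-linear.
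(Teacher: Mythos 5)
Your proof is correct and follows essentially the same route as the paper's: decompose $u = \sum_{k} u f_k$ via the direct sum $\cl_{p,q} = \bigoplus_k \cl_{p,q}f_k$, use $\BR$-linearity of both $\ta{T}{\ve}$ and the representation to write $[u] = \sum_k [\psi_k]$ and $[\ta{T}{\ve}(u)] = \sum_k [\ta{T}{\ve}(\psi_k)]$, and then apply Proposition~\ref{prop6} column by column. You spell out slightly more explicitly than the paper that $T$, $\dagger$, and $\ddagger$ are additive, but this is the same argument.
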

\begin{proof}
Due to the direct sum decomposition (\ref{eq:decomp2a}), every element $u \in \cl_{p,q}$ can be written as a sum of unique spinors $\psi_k$ where $\psi_k \in \cl_{p,q}f_k.$ Since spinor representation is $\BR$-linear, we have that
\begin{gather}
[u] = [\psi_1] + \cdots + [\psi_N]
\label{eq:udecomp}
\end{gather}
where $\psi_k = u f_k.$ Furthermore, due to $\BR$-linearity of the involution $\tp$, we have
\begin{gather}
[\tp(u)] = [\tp(\psi_1)] + \cdots + [\tp(\psi_N)]. 
\label{eq:tpudecomp}
\end{gather}
The result now follows from Proposition~\ref{prop6}.
\end{proof}

We need to point out that unlike in part (e) of Lemma~\ref{properties2} where we set spinors $\psi_k$ to have the same components $\lambda_i$ for the purpose of studying their matrix representation, the actual components of spinors $uf_k, \, k\neq 1,$ are different from those of $uf_1.$ This is because two conjugations are needed in
\begin{gather}
u f_k = m_k (u_k f_1) m_k^{-1}, \qquad u_k = m_k^{-1} u m_k,
\label{eq:ufk}
\end{gather}
for any $m_k \in \cb{M}$ since $f_k = m_k f_1 m_k^{-1}$. This will become evident in our last example.

\begin{example}
Consider again $\cl_{3,0}$ as in Examples~2 and 4. Let $u$ be an arbitrary element in $\cl_{3,0}$ expanded over the  monomial basis 
$$
u = u_1 1 + u_2 \be_{1} + u_3 \be_{2} + u_4 \be_{3} + u_5 \be_{12} + u_6 \be_{13} + u_7 \be_{23} + u_8 \be_{123}. 
$$
Then, matrix $[u]$ in spinor representation in $S_1 = \cl_{3,0}f_1$ is 
\begin{gather}
[u] = \left[\begin{matrix} 
(u_1+u_2)1 + (u_8+u_7)\be_{23} & (u_5+u_3)1-(u_4+u_6)\be_{23}\\[1ex]
(-u_5+u_3)1+(u_4-u_6)\be_{23}  & (u_1-u_2)1+(u_8-u_7)\be_{23}\end{matrix}\right]
\label{eq:[u]}
\end{gather}
whereas matrix $[\tp(u)]$, related to it via Hermitian complex conjugation as predicted by Proposition~\ref{prop7}, is
\begin{gather}
[\tp(u)] = \left[\begin{matrix} 
(u_1+u_2)1 - (u_8+u_7)\be_{23} & (-u_5+u_3)1-(u_4-u_6)\be_{23}\\[1ex]
(u_5+u_3)1+(u_4+u_6)\be_{23}  & (u_1-u_2)1-(u_8-u_7)\be_{23}\end{matrix}\right].
\label{eq:[tpu]}
\end{gather}
Sign reversals in the second column of (\ref{eq:[u]}), i.e., in the components of the second spinor, given that all four non-zero coefficients $\clkj{l}{k}{j}$ are $1$ as shown 
in~(\ref{eq:sumofspinors2}), are caused by the conjugation
$$
u \mapsto \be_2 u \be_2^{-1} = u_1 1 - u_2 \be_{1} + u_3 \be_{2} - u_4 \be_{3} - u_5 \be_{12} + u_6 \be_{13} - u_7 \be_{23} + u_8 \be_{123} 
$$
which reverses signs of four components in $u.$ Observe that commutators of the corresponding basis elements with $\be_2$ are all $-1.$ 
\end{example} 

\section{Action on spinor spaces in semisimple Clifford algebras}\label{action2}
\medskip

Here we summarize only differences in the above results between simple and semisimple algebras. We rely on Theorem~\ref{th:structure} which provides sufficient information about the algebra structure and begin by generalizing Proposition~\ref{prop3}.

As before, let $\cb{F}$ be a complete set of $2N$ mutually annihilating idempotents of the form~(\ref{eq:f}) adding to the unity in a semisimple Clifford algebra $\cl_{p,q},$ $p-q = 1 \bmod 4$. This time, $N=2^{k-1}$ where, as before, $k=q-r_{q-p}$. Since the algebra is non-central, the set $\cb{F}$ partitions into two subsets of $N$ idempotents:
\begin{gather}
\cb{F} = \cb{F}_1 \cup \cb{F}_2 = 
         \{f_1,f_2,\ldots,f_N\} \cup \{\hat{f}_1,\hat{f}_2,\ldots,\hat{f}_N\}
\label{eq:cbF2}
\end{gather}
such that the idempotents in $\cb{F}_1$ (resp., $\cb{F}_2$) add up to say a central idempotent $J_{+}=\frac12(1+\be_{1\cdots n})$ (resp., $J_{-}=\frac12(1-\be_{1\cdots n})$).\footnote{Recall, that 
$\hat{u}$ denotes the grade involution of $u \in \cl_{p,q}$. Then, $\hat{J}_{+}=J_{-}$, $J_{+}J_{-}=J_{-}J_{+}=0$, and $J_{\pm}^2=J_{\pm}$.} Let $f=f_1$ for short and let $G_{p,q}(f)$ be the stabilizer of $f$ in $G_{p,q}$ under the conjugate action. Since the orbit $\cb{O}(f)$ contains now $N=2^{k-1}$ elements, we have
\begin{gather*}
N=[G_{p,q} : G_{p,q}(f)] = |\cb{O}(f)| = |G_{p,q}|/|G_{p,q}(f)| = 2 \cdot 2^{p+q}/|G_{p,q}(f)| = 2^{k-1} 
\end{gather*}
which implies that $G_{p,q}(f) = 2^{2p+r_{q-p}}$ in the semisimple case. Furthermore, it is easy to notice that $\hat{f} =  m \hat{f} m^{-1}$ for every $m \in G_{p,q}(f)$. 

\begin{proposition}
Let $\cl_{p,q}$ be a semisimple Clifford algebra, $p-q = 1 \bmod 4$ and $p+q\leq 9$. Let $f$ be any primitive idempotent in the set $\cb{F}$ and let $G_{p,q}(f)$ be its stabilizer in $G_{p,q}$ under the conjugate action. Then, $G_{p,q}(f)=G_{p,q}(\hat{f})$, and 
\begin{itemize}
\item[(i)] $G_{p,q}(f) \lhd  G_{p,q}$, that is, $G_{p,q}(f)$ is a normal subgroup of $G_{p,q}$ and $|G_{p,q}(f)| = 2^{2+p+r_{q-p}}.$ 
\item[(ii)] $G_{p,q}(f)$ is a $2$-primary Abelian group when $p-q = 0,1,2 \bmod 8$ (real semisimple case) whereas $G_{p,q}(f)$ is a non Abelian $2$-group when $p-q = 4,5,6 \bmod 8$ (quaternionic semisimple case).
\item[(iii)] Let $k = q-r_{q-p}.$ Then, $G_{p,q}(f)$ is generated multiplicatively by $s$ non-unique elements
\begin{gather}
G_{p,q}(f) = \langle g_1,g_2, \ldots, g_s \rangle
\label{eq:gens2}
\end{gather}
where $s=k+1$ when $p-q = 0,1,2 \bmod 8$ whereas $s=k+2$ when $p-q = 4,5,6 \bmod 8$. 
\item[(iv)] The orders of the generators $g_1,g_2,\ldots,g_s$ are $2$ or $4$, and the structure of the stabilizer group $G_{p,q}(f)$ is shown in Tables~4 and 5 in Appendix~\ref{AppendC}.
\item[(v)] Let $m_j$ be an element in $G_{p,q}(f)$ and let $f$ have the form~(\ref{eq:f}).Then, the stability of the idempotent $m_j f m_j^{-1}=f$ implies 
$$
m_j \be_{\iu_1} m_j^{-1} = \be_{\iu_1}, \; m_j \be_{\iu_2} m_j^{-1} = \be_{\iu_2}, \; \ldots, \; m_j \be_{\iu_k} m_j^{-1} = \be_{\iu_k}. 
$$
That is, the set of commuting monomials $\cb{T}= \{\be_{\iu_1},\ldots, \be_{\iu_k}\}$ in $f$ is point-wise stabilized by $G_{p,q}(f).$
\end{itemize}
\label{semisimple} 
\end{proposition}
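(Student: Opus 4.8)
The plan is to run parallel to the proof of Proposition~\ref{prop3}, adjusting each step to the constraint $p-q=1\bmod 4$ and to the fact that the algebra is non-central. The order of $G_{p,q}(f)$ has in fact already been obtained in the discussion preceding the statement: the orbit $\cb{O}(f)$ now has $N=2^{k-1}$ elements, so the orbit--stabilizer theorem gives $|G_{p,q}(f)|=2\cdot2^{p+q}/2^{k-1}=2^{2+p+q-k}=2^{2+p+r_{q-p}}$. Thus for~(i) only normality remains, and that is immediate: $(-1)f(-1)^{-1}=f$ shows $-1\in G_{p,q}(f)$, and for any $g\in G_{p,q}$ and $h\in G_{p,q}(f)$ the commutator $[g,h]$ lies in $G_{p,q}'=\{1,-1\}$ because any two basis monomials commute or anticommute, hence $ghg^{-1}=\pm h\in G_{p,q}(f)$ and $gG_{p,q}(f)g^{-1}=G_{p,q}(f)$. (The same one-line observation also supplies the normality claim in Proposition~\ref{prop3}.)

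For the equality $G_{p,q}(f)=G_{p,q}(\hat f)$ I would use that the grade involution is an algebra automorphism of $\cl_{p,q}$ and that every basis monomial $m$ satisfies $\hat m=\ve_m m$, the sign $\ve_m\in\{1,-1\}$ depending only on the parity of the number of generators in $m$ and being a central scalar. Applying the grade involution to the relation $mfm^{-1}=f$ gives $\hat m\,\hat f\,\hat m^{-1}=\hat f$, and since $\hat m=\ve_m m$ with $\ve_m^{2}=1$ the scalars cancel, so $m\hat fm^{-1}=\hat f$; thus $G_{p,q}(f)\subseteq G_{p,q}(\hat f)$, and the reverse inclusion follows by applying the same argument to $\hat f$ and using $\hat{\hat f}=f$. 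In the semisimple case one may take $f$ to contain the central factor $\tfrac12(1+\be_{1\cdots n})$, so that $\hat f$ is literally $f$ with that factor replaced by $\tfrac12(1-\be_{1\cdots n})$; this also identifies one of the commuting monomials $\be_{\iu_i}$ in~(\ref{eq:f}) with the unit pseudoscalar $\be_{1\cdots n}$, and it is this extra central generator that accounts for the factor $2^{2}$ rather than $2^{1}$ in the order of the stabilizer.

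Statement~(v) is then proved exactly as in the simple case. Let $E=\langle\be_{\iu_1},\ldots,\be_{\iu_k},-1\rangle$ and let $P$ be its centralizer in $G_{p,q}$. Since $P$ commutes with every factor of $f$, we have $P\subseteq G_{p,q}(f)$, so it suffices to show $|P|=|G_{p,q}(f)|$. This can be done by a centralizer computation in the bilinear space $G_{p,q}/\{1,-1\}$: the image of $E$ is totally isotropic of dimension $k$ and contains the one-dimensional radical spanned by $\be_{1\cdots n}$, whence its orthogonal complement has dimension $(p+q)-k+1$ and $|P|=2^{2+p+q-k}=2^{2+p+r_{q-p}}$, as required. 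Point-wise (as opposed to set-wise) stabilization is then automatic, because $g\in G_{p,q}$ forces $g\be_{\iu_i}g^{-1}\in\{\be_{\iu_i},-\be_{\iu_i}\}$ and the sign $-$ cannot occur for a $g$ fixing $f$, since the idempotents $\prod_i\tfrac12(1\pm\be_{\iu_i})$ with varying signs are distinct. Alternatively, and as actually carried out, (v) (and indeed all of (i)--(v)) is confirmed by direct computation with {\CLIFFORD} for every semisimple signature with $p+q\le 9$.

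Items (ii), (iii) and (iv) --- the abelian/non-abelian dichotomy, the generator count $s=k+1$ versus $s=k+2$, the orders $2$ or $4$ of the generators, and the explicit isomorphism types recorded in Tables~4 and~5 in Appendix~\ref{AppendC} --- I would establish by the same machine computation over all semisimple signatures with $p+q\le 9$, exactly as in the simple case. The structural reason behind the dichotomy is Theorem~\ref{th:structure}(b): $\BK\cong\BR$ when $p-q=1\bmod 8$, in which case $P=G_{p,q}(f)$ is $2$-primary Abelian, while $\BK\cong\BH$ when $p-q=5\bmod 8$, in which case the commutator form does not vanish on $\bar E^{\perp}$ and $G_{p,q}(f)$ is a non-Abelian $2$-group whose direct-product decomposition contains normal subgroups carrying $G_{p,q}'=\{1,-1\}$, with one extra generator of order $4$ needed to realize the quaternion units (hence $s=k+2$). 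The main obstacle is precisely this last part: passing from the orbit/centralizer bookkeeping to the exact isomorphism class of $G_{p,q}(f)$ in each residue of $(p-q)\bmod 8$, which is why the classification in Tables~4 and~5 is obtained --- and cross-checked --- computationally rather than through a single uniform closed form.
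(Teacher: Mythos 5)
Your proposal follows the paper's overall strategy for this proposition --- orbit--stabilizer for $|G_{p,q}(f)|$, the implicit algorithmic verification with {\CLIFFORD} for the structural classification in Tables~4--5 --- but it supplies genuine algebraic arguments in several places where the paper merely asserts or computes. The normality argument via $G_{p,q}'=\{1,-1\}$ (so $ghg^{-1}=\pm h$ and $-1\in G_{p,q}(f)$) and the proof of $G_{p,q}(f)=G_{p,q}(\hat f)$ via the grade involution automorphism together with $\hat m=\pm m$ are both clean and correct; the paper only says the latter is ``easy to notice.''\ Your direct sign argument for (v) --- if $g\in G_{p,q}(f)$ anticommuted with some $\be_{\iu_i}$, the conjugated idempotent would have a sign flipped and so would be distinct from and orthogonal to $f$ --- is also valid and works for any choice of signs in~(\ref{eq:f}). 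You correctly diagnose that the jump from $2^{1+p+r_{q-p}}$ to $2^{2+p+r_{q-p}}$ in the semisimple case is traceable to the central pseudoscalar: since $f$ is primitive, $\be_{1\cdots n}f=\pm f$, which forces $\be_{1\cdots n}$ to lie in the group generated by the $\be_{\iu_i}$ (otherwise $\frac12(1\pm\be_{1\cdots n})f$ would split $f$), and this is also why the orbit has only $N=2^{k-1}$ elements. One small caveat: your centralizer dimension count ($\dim\bar E^\perp=n-k+1$) relies on $\bar E$ containing the radical of the commutator form; you assert this but do not prove it --- it is the fact just mentioned ($\be_{1\cdots n}\in E$ from primitivity of $f$), and in any case your direct sign argument already gives $G_{p,q}(f)\subseteq P$ and thereby $G_{p,q}(f)=P$ without the count. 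Overall: correct, and more conceptually explicit than the paper on items (i), the stabilizer equality, and (v); items (ii)--(iv) are handled, as the paper does, by machine computation.
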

We should also observe that due to the decomposition 
\begin{gather}
\cl_{p,q} = \cl_{p,q} J_{+} \oplus \cl_{p,q} J_{-}
\label{eq:decomp2}
\end{gather}
into simple ideals with $\cb{F}_1 \subset \cl_{p,q} J_{+}$ and $\cb{F}_2 \subset \cl_{p,q} J_{-}$, the two orbits $\cb{O}(f)$ and $\cb{O}(\hat{f})$ under the conjugate action of $G_{p,q}$ remain disjoint.
\begin{example}
Consider $\cl_{2,1} \cong \Mat(2,\BR) \oplus \Mat(2,\BR)$ with a primitive idempotent
$$
f=\frac14(1+\be_1)(1+\be_{23}).
$$
since $k = 2.$ Then, according to Table~4, we have 
\begin{gather}
G_{2,1}(f) = \langle -1,\be_1,\be_{23} \rangle 
           = \{\pm 1, \pm \be_1, \pm \be_{23}, \pm \be_{123}\} \lhd G_{2,1}
\label{eq:G21}
\end{gather}
and we can choose for a transversal set $\cb{M} = \{1, \be_2\}$. Thus, 
$$
S = \cl_{2,1}f = \spn_\BR \{1f, \be_2 f\}, \qquad 
\hat{S} = \cl_{2,1}\hat{f} = \spn_\BR \{1 \hat{f}, \be_2 \hat{f}\}.
$$
Let $f_1=f$. Then, we have two orbits:
\begin{gather}
\cb{O}(f_1) = \{f_1, f_2 \}, \qquad 
\cb{O}(\hat{f}_1) = \{\hat{f}_1, \hat{f}_2 \},
\label{eq:twoorbits}
\end{gather}
where $f_2 = \be_2 f_1 \be_2^{-1} =  \frac14(1-\be_1)(1-\be_{23})$. Thus, the vee group $G_{2,1}$ partitions into two sets
$$
G_{2,1} = G_{2,1}(f) \cup \be_2 G_{2,1}(f) 
        = \{\pm 1, \pm \be_1, \pm \be_{23}, \pm \be_{123} \} \cup 
          \{\pm \be_2, \pm \be_3, \pm \be_{12}, \pm \be_{13} \}.
$$
Consider now two spinors 
$$
\psi = \psi_1 f_1 + \psi_2 \be_2 f_1 \in S \quad \mbox{and} \quad
\phi = \phi_1 f_1 + \phi_2 \be_2 f_1 \in S
$$
with $\psi_i,\phi_i \in \BR.$ Then, in the spinor representation realized in the double ideal $S \oplus \hat{S}$, we get
$$
[\tp(\psi), \tp(\hat{\psi})] = \left[\begin{matrix} [\psi_1,\psi_1] & [\psi_2,-\psi_2] \\ 
                                                    [0,0] & [0,0] \end{matrix}\right], 
\qquad
[\psi, \hat{\psi}] = \left[\begin{matrix} [\psi_1,\psi_1] & [0,0] \\ [\psi_2,-\psi_2] & [0,0] \end{matrix}\right]
$$
which shows that again the anti involution $\tp$ acts as a transposition. Notice also the inner product on $S$ that is again invariant under the group $G_{2,1}^\ve$:
$$
\tp(\psi)\phi = (\psi_1 \phi_1 + \psi_2 \phi_2) f_1 \in \BK = f_1 \cl_{2,1} f_1 \cong \BR.
$$
\end{example}

In the quaternionic semisimple case when $p-q=4,5,6 \bmod 8$ we can again verify that the anti involution $\tp$ acts as a Hermitian quaternionic conjugation, and that for any spinor 
$\psi \in S = \cl_{p,q}f,$ the Clifford product $\tp(\psi) \psi $ belongs to $f\cl_{p,q}f \cong \BR$.

\section{Conclusions}\label{conclusions}
\medskip

In~\cite{part1} we gave some general arguments why studying the transposition anti-isomorphism is useful and in which way this work can and should be generalized. We do not repeat these arguments here and refer to that paper.

Spinor metrics play an important role in physics since they provide covariant bilinears, which encode physical quantities, see for example chapters 11 and 12 in~\cite{lounesto}. Usually spinor norms employ
Clifford reversion and Clifford conjugation, e.g., the Dirac dagger sending $\psi \mapsto \psi^\dagger = \bar{\psi}^t \gamma_0$. These two conjugations lead to the $\beta_+$ and $\beta_-$ norms if additionally a pure spinor is chosen.

The present paper generalized this setting to bilinear forms $\tp(\psi)\psi$ for any signature. We showed that for Euclidean and anti-Euclidean signatures these forms reduce to $\beta_+$ and~$\beta_-$. In general we get, however, new norms.

We can hence generalize the so-called Salingaros vee groups employing the map $\tp$ to construct the discrete group $G_{p,q}$. The main work done in this paper was to show in detail, how the stabilizer groups $G_{p,q}(f)$ of a primitive idempotent generate, via a transversal for the cosets in $G_{p,q}/G_{p,q}(f)$, a spinor basis, and how the spinor modules for different idempotents are related. The results are summarized in Tables 1 to 5 included in Appendix~A.

Another important result obtained is, that the transposition map $\tp$ of the \emph{real} Clifford algebra induces an (anti) involution of the (double skew) field $\BK$ underlying the spinor modules, \emph{which is not the real field} in general, and hence can be nontrivial. This is a subtle point relating the complex and quaternionic conjugations to the transposition, and by no means in a straight forward way. 

We believe our results, stated only in dimensions $n\leq 9$, are generally true in any  dimension. In order to extend them to all dimensions, one would employ the $\bmod 8$ periodicity of Clifford algebras. However, we would like to do that \emph{including} the full constructive machinery presented here, and that was beyond the aim of the current work.

Having constructed new spinor metrics, a natural question is to identify their invariance groups, as Lounesto did for $\beta_+$ and $\beta_-$. This will be presented in part 3 of this work~\cite{part3}.

\section*{Acknowledgments}
\medskip

Bertfried Fauser would like to thank the Emmy-Noether Zentrum for Algebra at the University of Erlangen for their hospitality during his stay at the Zentrum in  2008/2009.
\vfill
%%%%%%%%%%%%%%%%%%%%%%%%%%%%%%%%%%%%%%%%%%%%%%%%%%%%%%%%%%%%%%%%%%%%%%%
\appendix

\newpage
\section{Data Tables}\label{AppendC}

%table1
{\small
\begin{table}[!h]
\label{tb:t1}
\begin{center}
\renewcommand{\arraystretch}{1.5}
\begin{tabular}{ | c | p{1.75in} | c | p{1.5in} |}
\multicolumn{4}{c}
{\bf Table 1: Stabilizer group $G_{p,q}(f)$ of a primitive idempotent $f$ in}\\ 
\multicolumn{4}{c}
{\bf simple Clifford algebra $\cl_{p,q} \cong \Mat(2^k,\BR),\, k=q-r_{q-p}$}\\
\multicolumn{4}{c}
{$p-q \neq 1 \bmod 4, \, p-q = 0,1,2 \bmod 8,\,|G_{p,q}(f)| = 2^{1+p+r_{q-p}}$}\\ 
\hline

\multicolumn{1}{|c|}{$\cl_{p,q}$} & 
\multicolumn{1}{|c|}{$f$}         & 
\multicolumn{1}{|c|}{$G_{p,q}(f)\cong (\BZ_2)^{k+1}$}& 
\multicolumn{1}{|c|}{$|g|$} \\ \hline
%%%%%%%%%%%%%%%%%%%%%%%%%%%%%%%%%%%%%%%%%%%%%%%%%%%%%%%%%%%%%%%%%%%%%%%%%%%%%%%%%%%%%%%%%%%%%%%
$\cl_{1,1}$  
& \multicolumn{1}{|c|}{$\frac12 (1+\be_{12})$} 
& \multicolumn{1}{|c|}{$\langle -1,\be_{12} \rangle \cong (\BZ_2)^2$}
& \multicolumn{1}{|c|}{$(2^2)$} \\ \hline
%%%%%%%%%%%%%%%%%%%%%%%%%%%%%%%%%%%%%%%%%%%%%%%%%%%%%%%%%%%%%%%%%%%%%%%%%%%%%%%%%%%%%%%%%%%%%%%
$\cl_{2,0}$  
& \multicolumn{1}{|c|}{$\frac12 (1+\be_1)$} 
& \multicolumn{1}{|c|}{$\langle -1,\be_{1} \rangle \cong (\BZ_2)^2$}
& \multicolumn{1}{|c|}{$(2^2)$} \\ \hline
%%%%%%%%%%%%%%%%%%%%%%%%%%%%%%%%%%%%%%%%%%%%%%%%%%%%%%%%%%%%%%%%%%%%%%%%%%%%%%%%%%%%%%%%%%%%%%%
$\cl_{2,2}$  
& \multicolumn{1}{|c|}{$\frac14 (1+\be_{13})(1+\be_{24})$} 
& \multicolumn{1}{|c|}{$\langle -1, \be_{13}, \be_{24}\rangle \cong (\BZ_2)^3$}
& \multicolumn{1}{|c|}{$(2^3)$} \\ \hline
%%%%%%%%%%%%%%%%%%%%%%%%%%%%%%%%%%%%%%%%%%%%%%%%%%%%%%%%%%%%%%%%%%%%%%%%%%%%%%%%%%%%%%%%%%%%%%%
$\cl_{3,1}$  
& \multicolumn{1}{|c|}{$\frac14 (1+\be_1) (1+\be_{34})$}
& \multicolumn{1}{|c|}{$\langle -1,\be_{1},\be_{34} \rangle \cong (\BZ_2)^3$}
& \multicolumn{1}{|c|}{$(2^3)$} \\ \hline
%%%%%%%%%%%%%%%%%%%%%%%%%%%%%%%%%%%%%%%%%%%%%%%%%%%%%%%%%%%%%%%%%%%%%%%%%%%%%%%%%%%%%%%%%%%%%%%
$\cl_{0,6}$  
& \multicolumn{1}{|c|}{$\frac18 (1+\be_{123}) (1+\be_{146})(1+\be_{345})$}
& \multicolumn{1}{|c|}{$ \langle -1,\be_{123},\be_{146},\be_{345} \rangle \cong (\BZ_2)^4$}
& \multicolumn{1}{|c|}{$(2^4)$} \\ \hline
%%%%%%%%%%%%%%%%%%%%%%%%%%%%%%%%%%%%%%%%%%%%%%%%%%%%%%%%%%%%%%%%%%%%%%%%%%%%%%%%%%%%%%%%%%%%%%%
$\cl_{3,3}$  
& \multicolumn{1}{|c|}{$\frac18 (1+\be_{14})(1+\be_{25})(1+\be_{36})$}
& \multicolumn{1}{|c|}{$\langle -1,\be_{14}, \be_{25}, \be_{36}\rangle \cong (\BZ_2)^4$}
& \multicolumn{1}{|c|}{$(2^4)$} \\ \hline
%%%%%%%%%%%%%%%%%%%%%%%%%%%%%%%%%%%%%%%%%%%%%%%%%%%%%%%%%%%%%%%%%%%%%%%%%%%%%%%%%%%%%%%%%%%%%%%
$\cl_{4,2}$  
& \multicolumn{1}{|c|}{$\frac18 (1+\be_1)(1+\be_{35})(1+\be_{46})$}
& \multicolumn{1}{|c|}{$\langle -1, \be_{1}, \be_{35},\be_{46} \rangle \cong (\BZ_2)^4$}
& \multicolumn{1}{|c|}{$(2^4)$} \\ \hline
%%%%%%%%%%%%%%%%%%%%%%%%%%%%%%%%%%%%%%%%%%%%%%%%%%%%%%%%%%%%%%%%%%%%%%%%%%%%%%%%%%%%%%%%%%%%%%%
$\cl_{0,8}$  
& $\mbox{\hspace{0.2cm}}\frac{1}{16}  (1+\be_{123}) (1+\be_{146}) \times 
   \newline \mbox{\hspace{1.2cm}} (1+\be_{345}) (1+\be_{367}) $
& \multicolumn{1}{|c|}{$\langle -1,\be_{123},\be_{146},\be_{345},\be_{367}\rangle 
  \cong (\BZ_2)^5$}
& \multicolumn{1}{|c|}{$(2^5)$} \\ \hline
%%%%%%%%%%%%%%%%%%%%%%%%%%%%%%%%%%%%%%%%%%%%%%%%%%%%%%%%%%%%%%%%%%%%%%%%%%%%%%%%%%%%%%%%%%%%%%%
$\cl_{1,7}$  
& $\mbox{\hspace{0.2cm}}\frac{1}{16}  (1+\be_{18}) (1+\be_{234}) \times 
   \newline \mbox{\hspace{1.2cm}} (1+\be_{257}) (1+\be_{456})$
& \multicolumn{1}{|c|}{$\langle -1,\be_{18},\be_{234},\be_{257},\be_{456} \rangle 
  \cong (\BZ_2)^5$}
& \multicolumn{1}{|c|}{$(2^5)$} \\ \hline
%%%%%%%%%%%%%%%%%%%%%%%%%%%%%%%%%%%%%%%%%%%%%%%%%%%%%%%%%%%%%%%%%%%%%%%%%%%%%%%%%%%%%%%%%%%%%%%
$\cl_{4,4}$  
& $\mbox{\hspace{0.2cm}}\frac{1}{16} (1+\be_{15}) (1+\be_{26}) \times 
   \newline \mbox{\hspace{1.2cm}} (1+\be_{37}) (1+\be_{48})$
& \multicolumn{1}{|c|}{$\langle -1,\be_{15},\be_{26},\be_{37},\be_{48} \rangle \cong (\BZ_2)^5$}
& \multicolumn{1}{|c|}{$(2^5)$} \\ \hline
%%%%%%%%%%%%%%%%%%%%%%%%%%%%%%%%%%%%%%%%%%%%%%%%%%%%%%%%%%%%%%%%%%%%%%%%%%%%%%%%%%%%%%%%%%%%%%%
$\cl_{5,3}$  
& $\mbox{\hspace{0.2cm}}\frac{1}{16} (1+\be_1) (1+\be_{36}) \times 
   \newline \mbox{\hspace{1.2cm}} (1+\be_{47}) (1+\be_{58})$
& \multicolumn{1}{|c|}{$\langle -1,\be_{1},\be_{36},\be_{47},\be_{58} \rangle \cong (\BZ_2)^5$}
& \multicolumn{1}{|c|}{$(2^5)$} \\ \hline
%%%%%%%%%%%%%%%%%%%%%%%%%%%%%%%%%%%%%%%%%%%%%%%%%%%%%%%%%%%%%%%%%%%%%%%%%%%%%%%%%%%%%%%%%%%%%%%
$\cl_{8,0}$  
& $\mbox{\hspace{0.2cm}}\frac{1}{16}  (1+\be_1) (1+\be_{2345}) \times 
   \newline \mbox{\hspace{1.2cm}} (1+\be_{2468}) (1+\be_{4567})$
& \multicolumn{1}{|c|}{$\langle -1,\be_{1},\be_{2345},\be_{2468},\be_{4567} \rangle
  \cong (\BZ_2)^5$}
& \multicolumn{1}{|c|}{$(2^5)$} \\ \hline
%%%%%%%%%%%%%%%%%%%%%%%%%%%%%%%%%%%%%%%%%%%%%%%%%%%%%%%%%%%%%%%%%%%%%%%%%
\multicolumn{4}{l}
{\small Note: The last column lists the orders $(|g_1|, \ldots, |g_s|)$ of the
generators in $\langle g_1,\ldots,g_s \rangle$}\\[-1.5ex]  
\multicolumn{4}{l}
{\small  shown in the third column. Here, $(2^s)$ denotes a sequence 
$\underbrace{(2,\ldots, 2}_{s})$ and $(\BZ_2)^{k+1}$ }\\[-1.5ex]
\multicolumn{4}{l}
{\small  denotes the direct product $\underbrace{\BZ_2 \times \cdots \times \BZ_2}_{k+1}$.}
\end{tabular}
\end{center}
\end{table}
}%%% end \small for table1

\newpage
%table2
{\small 
\begin{table}[h]
\label{tab:t2}
\begin{center}
\renewcommand{\arraystretch}{1.5}
%\begin{tabular}{ | c | p{1.75in} | c | c | p{1.5in} |}
\begin{tabular}{ | c | p{1.70in} |c|c|}
\multicolumn{4}{c}
{\bf Table 2: Stabilizer group $G_{p,q}(f)$ of a primitive idempotent $f$ in}\\ 
\multicolumn{4}{c}
{\bf simple Clifford algebra $\cl_{p,q} \cong \Mat(2^k,\BC),\, k=q-r_{q-p}$}\\
\multicolumn{4}{c}
{$p-q \neq 1 \bmod 4, \, p-q = 3,7 \bmod 8, \, |G_{p,q}(f)| = 2^{1+p+r_{q-p}}$}\\ 
\hline

\multicolumn{1}{|c|}{$\cl_{p,q}$} & 
\multicolumn{1}{|c|}{$f$}         & 
\multicolumn{1}{|c|}{$G_{p,q}(f)$}&
\multicolumn{1}{|c|}{$|g|$} \\ \hline
%%%%%%%%%%%%%%%%%%%%%%%%%%%%%%%%%%%%%%%%%%%%%%%%%%%%%%%%%%%%%%%%%%%%%%%%%
$\cl_{1,2}$%   
& \multicolumn{1}{|c|}{$\frac12 (1+\be_{13})$} 
& \multicolumn{1}{|c|}{$\langle \be_{2},\be_{13} \rangle 
  \cong \BZ_2 \times \BZ_4$}
& \multicolumn{1}{|c|}{$(4,2)$} \\ \hline
%%%%%%%%%%%%%%%%%%%%%%%%%%%%%%%%%%%%%%%%%%%%%%%%%%%%%%%%%%%%%%%%%%%%%%%%%
$\cl_{3,0}$%
& \multicolumn{1}{|c|}{$\frac12 (1+\be_1)$}
& \multicolumn{1}{|c|}{$\langle \be_{1},\be_{23} \rangle
  \cong \BZ_2 \times \BZ_4$}
& \multicolumn{1}{|c|}{$(2,4)$} \\ \hline
%%%%%%%%%%%%%%%%%%%%%%%%%%%%%%%%%%%%%%%%%%%%%%%%%%%%%%%%%%%%%%%%%%%%%%%%%
$\cl_{0,5}$%
& \multicolumn{1}{|c|}{$\frac14 (1+\be_{123}) (1+\be_{345})$}
& \multicolumn{1}{|c|}{$\langle \be_{3},\be_{12},\be_{45} \rangle 
  \cong (\BZ_4)^3$}
& \multicolumn{1}{|c|}{$(4^3)$} \\ \hline
%%%%%%%%%%%%%%%%%%%%%%%%%%%%%%%%%%%%%%%%%%%%%%%%%%%%%%%%%%%%%%%%%%%%%%%%%
$\cl_{2,3}$%
& \multicolumn{1}{|c|}{$\frac14 (1+\be_{14}) (1+\be_{25})$}
& \multicolumn{1}{|c|}{$\langle \be_{3},\be_{14},\be_{25} \rangle
  \cong (\BZ_2)^2 \times \BZ_4$}
& \multicolumn{1}{|c|}{$(4,2^2)$} \\ \hline
%%%%%%%%%%%%%%%%%%%%%%%%%%%%%%%%%%%%%%%%%%%%%%%%%%%%%%%%%%%%%%%%%%%%%%%%%
$\cl_{4,1}$%
& \multicolumn{1}{|c|}{$\frac14 (1+\be_1) (1+\be_{45})$}
& \multicolumn{1}{|c|}{$\langle \be_{1},\be_{23},\be_{45} \rangle
  \cong (\BZ_2)^2 \times \BZ_4$}
& \multicolumn{1}{|c|}{$(2,4,2)$} \\ \hline
%%%%%%%%%%%%%%%%%%%%%%%%%%%%%%%%%%%%%%%%%%%%%%%%%%%%%%%%%%%%%%%%%%%%%%%%%
$\cl_{1,6}$%
& \multicolumn{1}{|c|}{$\frac18 (1+\be_{17}) (1+\be_{234})(1+\be_{456})$}
& \multicolumn{1}{|c|}{$\langle \be_{4},\be_{17},\be_{23},\be_{56} \rangle \cong \BZ_2 \times (\BZ_4)^{3}$}
& \multicolumn{1}{|c|}{$(4,2,4^2)$} \\ \hline
%%%%%%%%%%%%%%%%%%%%%%%%%%%%%%%%%%%%%%%%%%%%%%%%%%%%%%%%%%%%%%%%%%%%%%%%%
$\cl_{3,4}$%
& \multicolumn{1}{|c|}{$\frac18 (1+\be_{15}) (1+\be_{26})(1+\be_{37})$}
& \multicolumn{1}{|c|}{$\langle \be_{4},\be_{15},\be_{26},\be_{37} \rangle \cong (\BZ_2)^3 \times \BZ_4$}
& \multicolumn{1}{|c|}{$(4,2^3)$} \\ \hline
%%%%%%%%%%%%%%%%%%%%%%%%%%%%%%%%%%%%%%%%%%%%%%%%%%%%%%%%%%%%%%%%%%%%%%%%%
$\cl_{5,2}$%
& \multicolumn{1}{|c|}{$\frac18 (1+\be_1) (1+\be_{46}) (1+\be_{57})$}
& \multicolumn{1}{|c|}{$\langle \be_{1},\be_{23},\be_{46},\be_{57} \rangle \cong (\BZ_2)^3 \times \BZ_4$}
& \multicolumn{1}{|c|}{$(2,4,2^2)$} \\ \hline
%%%%%%%%%%%%%%%%%%%%%%%%%%%%%%%%%%%%%%%%%%%%%%%%%%%%%%%%%%%%%%%%%%%%%%%%%
$\cl_{7,0}$%
& \multicolumn{1}{|c|}{$\frac18 (1+\be_1)(1+\be_{2345})(1+\be_{4567})$}
& \multicolumn{1}{|c|}{$\langle  \be_{1},\be_{23},\be_{45},\be_{67}\rangle \cong \BZ_2 \times (\BZ_4)^{3}$}
& \multicolumn{1}{|c|}{$(2,4^3)$} \\ \hline
%%%%%%%%%%%%%%%%%%%%%%%%%%%%%%%%%%%%%%%%%%%%%%%%%%%%%%%%%%%%%%%%%%%%%%%%%
$\cl_{0,9}$%
& $\mbox{\hspace{0.2cm}}\frac{1}{16} (1+\be_{123}) (1+\be_{146}) \times%
 \newline \mbox{\hspace{1.1cm}} (1+\be_{345}) (1+\be_{367})$
& $\langle \be_{89},\be_{123},\be_{146},\be_{157},\be_{256}\rangle 
   \cong (\BZ_2)^4 \times \BZ_4$
& \multicolumn{1}{|c|}{$(4,2^4)$} \\ \hline
%%%%%%%%%%%%%%%%%%%%%%%%%%%%%%%%%%%%%%%%%%%%%%%%%%%%%%%%%%%%%%%%%%%%%%%%%
$\cl_{2,7}$%
& $\mbox{\hspace{0.2cm}}\frac{1}{16}  (1+\be_{18}) (1+\be_{29})  \times 
\newline \mbox{\hspace{1.1cm}} (1+\be_{345}) (1+\be_{567})$
& \multicolumn{1}{|c|}{$\langle \be_{5},\be_{18},\be_{29},\be_{34},\be_{67}\rangle \cong \newline (\BZ_2)^2 \times (\BZ_4)^{3}$}
& \multicolumn{1}{|c|}{$(4,2^2,4^2)$} \\ \hline
%%%%%%%%%%%%%%%%%%%%%%%%%%%%%%%%%%%%%%%%%%%%%%%%%%%%%%%%%%%%%%%%%%%%%%%%%
$\cl_{4,5}$%
& $\mbox{\hspace{0.2cm}}\frac{1}{16} (1+\be_{16}) (1+\be_{27}) \times 
\newline \mbox{\hspace{1.1cm}} (1+\be_{38}) (1+\be_{49}) $
& \multicolumn{1}{|c|}{$\langle \be_{5},\be_{16},\be_{27},\be_{38},\be_{49} \rangle \cong (\BZ_2)^4 \times \BZ_4$}
& \multicolumn{1}{|c|}{$(4,2^4)$} \\ \hline
%%%%%%%%%%%%%%%%%%%%%%%%%%%%%%%%%%%%%%%%%%%%%%%%%%%%%%%%%%%%%%%%%%%%%%%%%
$\cl_{6,3}$%
& $\mbox{\hspace{0.2cm}}\frac{1}{16}  (1+\be_1) (1+\be_{47}) \times 
\newline \mbox{\hspace{1.1cm}} (1+\be_{58}) (1+\be_{69}) $
& \multicolumn{1}{|c|}{$\langle \be_{1},\be_{23},\be_{47},\be_{58},\be_{69} \rangle  \cong (\BZ_2)^{4} \times \BZ_4$}
& \multicolumn{1}{|c|}{$(2,4,2^3)$} \\ \hline
%%%%%%%%%%%%%%%%%%%%%%%%%%%%%%%%%%%%%%%%%%%%%%%%%%%%%%%%%%%%%%%%%%%%%%%%%
$\cl_{8,1}$%
& $\mbox{\hspace{0.2cm}}\frac{1}{16}  (1+\be_1)(1+\be_{89}) \times 
\newline \mbox{\hspace{1.1cm}} (1+\be_{2345})(1+\be_{4567})$
& \multicolumn{1}{|c|}{$\langle \be_{1},\be_{23},\be_{45},\be_{67},\be_{89} \rangle  \cong (\BZ_2)^2 \times (\BZ_4)^{3}$}
& \multicolumn{1}{|c|}{$(2,4^3,2)$} \\ \hline
%%%%%%%%%%%%%%%%%%%%%%%%%%%%%%%%%%%%%%%%%%%%%%%%%%%%%%%%%%%%%%%%%%%%%%%%%%%%%%
\end{tabular}
\end{center}
\end{table}
} %%% end \small for table2
\eject
\newpage

%table3
{\small
\begin{table}[h]
\label{tab:t3}
\begin{center}
\renewcommand{\arraystretch}{1.5}
\begin{tabular}{ | c | p{1.75in} | c | p{1.5in} |}
\multicolumn{4}{c}
{\bf Table 3: Stabilizer group $G_{p,q}(f)$ of a primitive idempotent $f$ in}\\ 
\multicolumn{4}{c}
{\bf simple Clifford algebra $\cl_{p,q} \cong \Mat(2^k,\BH),\,k=q-r_{q-p}$}\\
\multicolumn{4}{c}
{$p-q \neq 1 \bmod 4, \, p-q = 4,5,6 \bmod 8, \, |G_{p,q}(f)| = 2^{1+p+r_{q-p}}$}\\ 
\hline

\multicolumn{1}{|c|}{$\cl_{p,q}$} & 
\multicolumn{1}{|c|}{$f$}         & 
\multicolumn{1}{|c|}{$G_{p,q}(f)$}& 
\multicolumn{1}{|c|}{$|g|$} \\ \hline

%%%%%%%%%%%%%%%%%%%%%%%%%%%%%%%%%%%%%%%%%%%%%%%%%%%%%%%%%%%%%%%%%%%%%%%%%
%$\cl_{0,2}$ & \multicolumn{1}{|c|}{$1$} 
%            & $\{...\}$ 
%            & \multicolumn{1}{|c|}{$\{...\}$} \\ \hline
%%%%%%%%%%%%%%%%%%%%%%%%%%%%%%%%%%%%%%%%%%%%%%%%%%%%%%%%%%%%%%%%%%%%%%%%
$\cl_{0,4}$%
& \multicolumn{1}{|c|}{$\frac12 (1+\be_{123})$} 
& \multicolumn{1}{|c|}{$F_3=\langle \be_{1},\be_{2},\be_{3} \rangle$} 
& \multicolumn{1}{|c|}{$(4^3)$} \\ \hline
%%%%%%%%%%%%%%%%%%%%%%%%%%%%%%%%%%%%%%%%%%%%%%%%%%%%%%%%%%%%%%%%%%%%%%%%%
$\cl_{1,3}$%
& \multicolumn{1}{|c|}{$\frac12 (1+\be_{14})$}
& \multicolumn{1}{|c|}{$\langle \be_{2},\be_{3},\be_{14} \rangle 
\cong F_2 \times \BZ_2$}
& \multicolumn{1}{|c|}{$(4^2,2)$} \\ \hline
%%%%%%%%%%%%%%%%%%%%%%%%%%%%%%%%%%%%%%%%%%%%%%%%%%%%%%%%%%%%%%%%%%%%%%%%%
$\cl_{4,0}$%
& \multicolumn{1}{|c|}{$\frac12 (1+\be_1)$}
& \multicolumn{1}{|c|}{$\langle \be_{1},\be_{23},\be_{24} \rangle 
\cong F_2 \times \BZ_2$} 
& \multicolumn{1}{|c|}{$(2,4^2)$} \\ \hline
%%%%%%%%%%%%%%%%%%%%%%%%%%%%%%%%%%%%%%%%%%%%%%%%%%%%%%%%%%%%%%%%%%%%%%%%%
$\cl_{1,5}$%
& \multicolumn{1}{|c|}{$\frac14 (1+\be_{16}) (1+\be_{234})$}
& \multicolumn{1}{|c|}{$\langle \be_{2},\be_{3},\be_{4},\be_{16}
  \rangle \cong F_3 \times \BZ_2$} 
& \multicolumn{1}{|c|}{$(4^3,2)$} \\ \hline
%%%%%%%%%%%%%%%%%%%%%%%%%%%%%%%%%%%%%%%%%%%%%%%%%%%%%%%%%%%%%%%%%%%%%%%%%
$\cl_{2,4}$%
& \multicolumn{1}{|c|}{$\frac14 (1+\be_{15}) (1+\be_{26})$}
& \multicolumn{1}{|c|}{$\langle \be_{3},\be_{4},\be_{15},\be_{26}
  \rangle \cong F_2 \times (\BZ_2)^{2}$} 
& \multicolumn{1}{|c|}{$(4^2,2^2)$} \\ \hline
%%%%%%%%%%%%%%%%%%%%%%%%%%%%%%%%%%%%%%%%%%%%%%%%%%%%%%%%%%%%%%%%%%%%%%%%%
$\cl_{5,1}$% 
& \multicolumn{1}{|c|}{$\frac14 (1+\be_1)(1+\be_{56})$}
& \multicolumn{1}{|c|}{$\langle \be_{1},\be_{23},\be_{24},\be_{56} \rangle \cong F_2 \times (\BZ_2)^{2}$} 
& \multicolumn{1}{|c|}{$(2,4^2,2)$} \\ \hline
%%%%%%%%%%%%%%%%%%%%%%%%%%%%%%%%%%%%%%%%%%%%%%%%%%%%%%%%%%%%%%%%%%%%%%%%%
$\cl_{6,0}$%
& \multicolumn{1}{|c|}{$\frac14 (1+\be_1)(1+\be_{2345})$}
& \multicolumn{1}{|c|}{$\langle \be_{1},\be_{23},\be_{24},\be_{25} \rangle  \cong F_3 \times \BZ_2$} 
& \multicolumn{1}{|c|}{$(2,4^3)$} \\ \hline
%%%%%%%%%%%%%%%%%%%%%%%%%%%%%%%%%%%%%%%%%%%%%%%%%%%%%%%%%%%%%%%%%%%%%%%%%
$\cl_{2,6}$%
& \multicolumn{1}{|c|}{$\frac18 (1+\be_{17})(1+\be_{28})(1+\be_{345})$}
& \multicolumn{1}{|c|}{$\langle \be_{3},\be_{4},\be_{5},\be_{17},\be_{28} \rangle \cong F_3 \times (\BZ_2)^{2}$} 
& \multicolumn{1}{|c|}{$(4^3,2^2)$} \\ \hline
%%%%%%%%%%%%%%%%%%%%%%%%%%%%%%%%%%%%%%%%%%%%%%%%%%%%%%%%%%%%%%%%%%%%%%%%%
$\cl_{3,5}$%
& \multicolumn{1}{|c|}{$\frac18 (1+\be_{16}) (1+\be_{27}) (1+\be_{38})$}
& \multicolumn{1}{|c|}{$\langle \be_{4},\be_{5},\be_{16},\be_{27},\be_{38} \rangle \cong F_2 \times (\BZ_2)^{3}$}
& \multicolumn{1}{|c|}{$(4^2,2^3)$} \\ \hline
%%%%%%%%%%%%%%%%%%%%%%%%%%%%%%%%%%%%%%%%%%%%%%%%%%%%%%%%%%%%%%%%%%%%%%%%%
$\cl_{6,2}$%
& \multicolumn{1}{|c|}{$\frac18 (1+\be_1)(1+\be_{57})(1+\be_{68})$}
& \multicolumn{1}{|c|}{$\langle \be_{1},\be_{23},\be_{34},\be_{57},\be_{68} \rangle \cong F_2 \times (\BZ_2)^{3}$} 
& \multicolumn{1}{|c|}{$(2,4^2,2^2)$} \\ \hline
%%%%%%%%%%%%%%%%%%%%%%%%%%%%%%%%%%%%%%%%%%%%%%%%%%%%%%%%%%%%%%%%%%%%%%%%%
$\cl_{7,1}$%
& \multicolumn{1}{|c|}{$\frac18 (1+\be_1)(1+\be_{78})(1+\be_{2345})$}
& \multicolumn{1}{|c|}{$\langle \be_{1},\be_{23},\be_{24},\be_{25},\be_{78} \rangle \cong F_3 \times (\BZ_2)^{2}$} 
& \multicolumn{1}{|c|}{$(2,4^3,2)$} \\ \hline
%%%%%%%%%%%%%%%%%%%%%%%%%%%%%%%%%%%%%%%%%%%%%%%%%%%%%%%%%%%%%%%%%%%%%%%%
\multicolumn{4}{l}
{\small Note: $F_3 =\langle \be_{1},\be_{2},\be_{3}\rangle,$ $|F_3|=16$,
$|\be_{1}|=|\be_{2}|=|\be_{3}|=4$, 
$\be_{1}\be_{2}=-\be_{2}\be_{1}$,
$\be_{1}\be_{3}=-\be_{3}\be_{1}$,}\\[-1.5ex]
\multicolumn{4}{l}
{\small and $\be_{2}\be_{3}=-\be_{3}\be_{2}$. 
$F_2 = \langle \be_{2},\be_{3}\rangle \lhd G_{1,3}(f),$ $|F_2|=8,$
$\be_{2}\be_{3} = -\be_{3}\be_{2},$ $|\be_{2}|=|\be_{3}|=4.$}\\
%%%%%%%%%%%%%%%%%%%%%%%%%%%%%%%%%%%%%%%%%%%%%%%%%%%%%%%%%%%%%%%%%%%%%%%%
%
%
\end{tabular}
\end{center}
%\caption{Here}
\end{table}
} %%% end \small for table3

\newpage
%table4
{\small
\begin{table}[!h]
\label{tab:t4}
\begin{center}
\renewcommand{\arraystretch}{1.5}
%\begin{tabular}{ | c | p{1.75in} | c | c | p{1.5in} |}
\begin{tabular}{ | c | p{1.90in} |c|c|}
\multicolumn{4}{c}
{\bf Table 4: Stabilizer group $G_{p,q}(f)$ of a primitive idempotent $f$ in}\\ 
\multicolumn{4}{c}
{\bf semisimple Clifford algebra $\cl_{p,q} \cong \Mat(2^{k-1},\BR) \oplus \Mat(2^{k-1},\BR)$}\\
\multicolumn{4}{c}
{$k=q-r_{q-p},\, p-q = 1 \bmod 4, \, p-q = 0,1,2 \bmod 8, \, |G_{p,q}(f)| = 2^{2+p+r_{q-p}}$}\\ 
\hline

\multicolumn{1}{|c|}{$\cl_{p,q}$} & 
\multicolumn{1}{|c|}{$f$}         & 
\multicolumn{1}{|c|}{$G_{p,q}(f)\cong (\BZ_2)^{k+1}$}& 
\multicolumn{1}{|c|}{$|g|$} \\ \hline
%%%%%%%%%%%%%%%%%%%%%%%%%%%%%%%%%%%%%%%%%%%%%%%%%%%%%%%%%%%%%%%%%%%%%%%%%%%%%%%%%%%%%%%%%%%%%%%
$\cl_{2,1}$  
& \multicolumn{1}{|c|}{$\frac14 (1+\be_{1})(1+\be_{23})$} 
& \multicolumn{1}{|c|}{$\langle -1,\be_{1},\be_{23}\rangle  \cong (\BZ_2)^3$}
& \multicolumn{1}{|c|}{$(2^3)$} \\ \hline
%%%%%%%%%%%%%%%%%%%%%%%%%%%%%%%%%%%%%%%%%%%%%%%%%%%%%%%%%%%%%%%%%%%%%%%%%%%%%%%%%%%%%%%%%%%%%%%
$\cl_{3,2}$  
& \multicolumn{1}{|c|}{$\frac18 (1+\be_1)(1+\be_{24})(1+\be_{35})$} 
& \multicolumn{1}{|c|}{$\langle -1,\be_{1},\be_{24},\be_{35} \rangle \cong (\BZ_2)^4$}
& \multicolumn{1}{|c|}{$(2^4)$} \\ \hline
%%%%%%%%%%%%%%%%%%%%%%%%%%%%%%%%%%%%%%%%%%%%%%%%%%%%%%%%%%%%%%%%%%%%%%%%%%%%%%%%%%%%%%%%%%%%%%%
$\cl_{0,7}$%
& $\mbox{\hspace{0.2cm}}\frac{1}{16}(1+\be_{123})(1+\be_{146})\times \newline 
   \mbox{\hspace{1.2cm}} (1+\be_{345})(1+\be_{367})$
& $\langle -1, \be_{123}, \be_{146}, \be_{345}, \be_{367}\rangle \cong (\BZ_2)^5$
& \multicolumn{1}{|c|}{$(2^5)$} \\ \hline
%%%%%%%%%%%%%%%%%%%%%%%%%%%%%%%%%%%%%%%%%%%%%%%%%%%%%%%%%%%%%%%%%%%%%%%%%%%%%%%%%%%%%%%%%%%%%%%
$\cl_{4,3}$  
& $\mbox{\hspace{0.2cm}}\frac{1}{16} (1+\be_1) (1+\be_{25})\times \newline 
   \mbox{\hspace{1.2cm}} (1+\be_{36}) (1+\be_{47})$
& $\langle -1,\be_1,\be_{25},\be_{36},\be_{47} \rangle \cong (\BZ_2)^5$
& \multicolumn{1}{|c|}{$(2^5)$} \\ \hline
%%%%%%%%%%%%%%%%%%%%%%%%%%%%%%%%%%%%%%%%%%%%%%%%%%%%%%%%%%%%%%%%%%%%%%%%%%%%%%%%%%%%%%%%%%%%%%%
$\cl_{5,4}$  
& $\mbox{\hspace{0.2cm}}\frac{1}{32} (1+\be_1) (1+\be_{26})\times \newline 
   \mbox{\hspace{0.6cm}} (1+\be_{37}) (1+\be_{48})(1+\be_{59})$
& $ \langle -1,\be_{1},\be_{26},\be_{37},\be_{48},\be_{59} \rangle \cong (\BZ_2)^6$
& \multicolumn{1}{|c|}{$(2^6)$} \\ \hline
%%%%%%%%%%%%%%%%%%%%%%%%%%%%%%%%%%%%%%%%%%%%%%%%%%%%%%%%%%%%%%%%%%%%%%%%%%%%%%%%%%%%%%%%%%%%%%%
$\cl_{9,0}$  
& $\mbox{\hspace{0.1cm}}\frac{1}{32} (1+\be_1) (1+\be_{2345})\times \newline 
   \mbox{\hspace{0.6cm}} (1+\be_{2367})(1+\be_{2389})\times \newline 
   \mbox{\hspace{1.2cm}}(1+\be_{2468})$
& $\langle -1,\be_{1}, \be_{2345}, \be_{2367}, \be_{2389},\be_{2468}\rangle \cong (\BZ_2)^6$
& \multicolumn{1}{|c|}{$(2^6)$} \\ \hline
%%%%%%%%%%%%%%%%%%%%%%%%%%%%%%%%%%%%%%%%%%%%%%%%%%%%%%%%%%%%%%%%%%%%%%%%%%%%%%%%%%%%%%%%%%%%%%%
$\cl_{1,8}$  
& $\mbox{\hspace{0.1cm}}\frac{1}{32} (1+\be_1) (1+\be_{2345})\times \newline 
   \mbox{\hspace{0.6cm}} (1+\be_{2367}) (1+\be_{2389}) \times \newline
   \mbox{\hspace{1.2cm}}(1+\be_{2468})$
& $\langle -1,\be_{1}, \be_{2345}, \be_{2367}, \be_{2389},\be_{2468} \rangle \cong (\BZ_2)^6$
& \multicolumn{1}{|c|}{$(2^6)$} \\ \hline
%%%%%%%%%%%%%%%%%%%%%%%%%%%%%%%%%%%%%%%%%%%%%%%%%%%%%%%%%%%%%%%%%%%%%%%%%%%%%%%%%%%%%%%%%%%%%%%
%\multicolumn{4}{l}
%
\end{tabular}
\end{center}
\end{table}
}%%% end \small for table4

\newpage
%table5
{\small
\begin{table}[!h]
\label{tab:t5}
\begin{center}
\renewcommand{\arraystretch}{1.5}
%\begin{tabular}{ | c | p{1.75in} | c | c | p{1.5in} |}
\begin{tabular}{ | c | p{1.90in} |c|c|}
\multicolumn{4}{c}
{\bf Table 5: Stabilizer group $G_{p,q}(f)$ of a primitive idempotent $f$ in}\\ 
\multicolumn{4}{c}
{\bf semisimple Clifford algebra $\cl_{p,q} \cong \Mat(2^{k-1},\BH) \oplus \Mat(2^{k-1},\BH)$}\\
\multicolumn{4}{c}
{$k=q-r_{q-p},\, p-q = 1 \bmod 4, \, p-q = 4,5,6 \bmod 8,\,  |G_{p,q}(f)| = 2^{2+p+r_{q-p}}$}\\ 
\hline

\multicolumn{1}{|c|}{$\cl_{p,q}$} & 
\multicolumn{1}{|c|}{$f$}         & 
\multicolumn{1}{|c|}{$G_{p,q}(f)$}& 
\multicolumn{1}{|c|}{$|g|$} \\ \hline
%%%%%%%%%%%%%%%%%%%%%%%%%%%%%%%%%%%%%%%%%%%%%%%%%%%%%%%%%%%%%%%%%%%%%%%%%%%%%%%%%%%%%%%%%%%%%%%
$\cl_{0,3}$  
& \multicolumn{1}{|c|}{$\frac12 (1+\be_{123})$} 
& \multicolumn{1}{|c|}{$ F_3 = \langle \be_1,\be_2,\be_3\rangle $}
& \multicolumn{1}{|c|}{$(4^3)$} \\ \hline
%%%%%%%%%%%%%%%%%%%%%%%%%%%%%%%%%%%%%%%%%%%%%%%%%%%%%%%%%%%%%%%%%%%%%%%%%%%%%%%%%%%%%%%%%%%%%%%
$\cl_{5,0}$  
& \multicolumn{1}{|c|}{$\frac14 (1+\be_1)(1+\be_{2345})$} 
& \multicolumn{1}{|c|}{$\langle \be_1,\be_{23},\be_{24},\be_{25} \rangle \cong F_3 \times \BZ_2$}
& \multicolumn{1}{|c|}{$(2,4^3)$} \\ \hline
%%%%%%%%%%%%%%%%%%%%%%%%%%%%%%%%%%%%%%%%%%%%%%%%%%%%%%%%%%%%%%%%%%%%%%%%%%%%%%%%%%%%%%%%%%%%%%%
$\cl_{1,4}$  
& \multicolumn{1}{|c|}{$\frac14 (1+\be_{15})(1+\be_{234})$} 
& \multicolumn{1}{|c|}{$\langle \be_2,\be_3,\be_4,\be_{15} \rangle 
\cong F_3 \times \BZ_2$}
& \multicolumn{1}{|c|}{$(4^3,2)$} \\ \hline
%%%%%%%%%%%%%%%%%%%%%%%%%%%%%%%%%%%%%%%%%%%%%%%%%%%%%%%%%%%%%%%%%%%%%%%%%%%%%%%%%%%%%%%%%%%%%%%
$\cl_{2,5}$  
& \multicolumn{1}{|c|}{$\frac18 (1+\be_{16})(1+\be_{27})(1+\be_{345})$} 
& \multicolumn{1}{|c|}{$\langle \be_3,\be_4,\be_5,\be_{16},\be_{27}\rangle \cong F_3 \times (\BZ_2)^2$}
& \multicolumn{1}{|c|}{$(4^3,2^2)$} \\ \hline
%%%%%%%%%%%%%%%%%%%%%%%%%%%%%%%%%%%%%%%%%%%%%%%%%%%%%%%%%%%%%%%%%%%%%%%%%%%%%%%%%%%%%%%%%%%%%%%
$\cl_{6,1}$  
& \multicolumn{1}{|c|}{$\frac18 (1+\be_{1})(1+\be_{67})(1+\be_{2345})$} 
& \multicolumn{1}{|c|}{$\langle \be_1,\be_{23},\be_{24},\be_{25},
\be_{67}\rangle \cong F_3 \times (\BZ_2)^2$}
& \multicolumn{1}{|c|}{$(2,4^3,2)$} \\ \hline
%%%%%%%%%%%%%%%%%%%%%%%%%%%%%%%%%%%%%%%%%%%%%%%%%%%%%%%%%%%%%%%%%%%%%%%%%%%%%%%%%%%%%%%%%%%%%%%
$\cl_{7,2}$  
& \multicolumn{1}{|c|}{$\frac{1}{16} (1+\be_{1})(1+\be_{28})(1+\be_{39})(1+\be_{4567})$} 
& \multicolumn{1}{|c|}{$\langle \be_1,\be_{28},\be_{39},\be_{45},
\be_{56},\be_{57}\rangle \cong F_3 \times (\BZ_2)^3$}
& \multicolumn{1}{|c|}{$(2^3,4^3)$} \\ \hline
%%%%%%%%%%%%%%%%%%%%%%%%%%%%%%%%%%%%%%%%%%%%%%%%%%%%%%%%%%%%%%%%%%%%%%%%%%%%%%%%%%%%%%%%%%%%%%%
$\cl_{3,6}$  
& \multicolumn{1}{|c|}{$\frac{1}{16} (1+\be_{1})(1+\be_{24})(1+\be_{35})(1+\be_{6789})$} 
& \multicolumn{1}{|c|}{$\langle \be_1,\be_{24},\be_{35},\be_{67},
\be_{68},\be_{69}\rangle \cong F_3 \times (\BZ_2)^3$}
& \multicolumn{1}{|c|}{$(2^3,4^3)$} \\ \hline
%%%%%%%%%%%%%%%%%%%%%%%%%%%%%%%%%%%%%%%%%%%%%%%%%%%%%%%%%%%%%%%%%%%%%%%%%
\multicolumn{4}{l}
{\small Note: $F_3 =\langle \be_{1},\be_{2},\be_{3}\rangle < G_{0,3}(f),$ $|F_3|=16$, $|\be_{1}|=|\be_{2}|=|\be_{3}|=4$, 
              $\be_{1}\be_{2}=-\be_{2}\be_{1}$, $\be_{1}\be_{3}=-\be_{3}\be_{1}$,}\\[-1.5ex]
\multicolumn{4}{l}
{\small and $\be_{2}\be_{3}=-\be_{3}\be_{2}$.}\\
%%%%%%%%%%%%%%%%%%%%%%%%%%%%%%%%%%%%%%%%%%%%%%%%%%%%%%%%
%\multicolumn{4}{l}
%{\small Note: $G_3 =\langle \be_{23},\be_{24},\be_{25}\rangle < G_{6,1}(f),$ $|G_3|=16$, $|\be_{23}|=|\be_{24}|=|\be_{25}|=4$, 
%$\be_{23}\be_{24}=-\be_{24}\be_{23}$, $\be_{23}\be_{25}=-\be_{25}\be_{23}$,}\\[-1.5ex]
%\multicolumn{4}{l}
%{\small and $\be_{24}\be_{25}=-\be_{25}\be_{24}$.}\\
%%%%%%%%%%%%%%%%%%%%%%%%%%%%%%%%%%%%%%%%%%%%%%%%%%%%%%%%
%\multicolumn{4}{l}
%{\small Note: $H_3 =\langle \be_{23},\be_{24},\be_{25}\rangle < G_{5,0}(f),$ $|H_3|=16$, $|\be_{23}|=|\be_{24}|=|\be_{25}|=4$, 
%$\be_{23}\be_{24}=-\be_{24}\be_{23}$, $\be_{23}\be_{25}=-\be_{25}\be_{23}$,}\\[-1.5ex]
%\multicolumn{4}{l}
%{\small and $\be_{24}\be_{25}=-\be_{25}\be_{24}$.}\\
%%%%%%%%%%%%%%%%%%%%%%%%%%%%%%%%%%%%%%%%%%%%%%%%%%%%%%%%%
%\multicolumn{4}{l}
%{\small Note: $I_3 =\langle \be_{67},\be_{68},\be_{69}\rangle < G_{3,6}(f),$ $|I_3|=16$, $|\be_{67}|=|\be_{68}|=|\be_{69}|=4$, 
%$\be_{67}\be_{68}=-\be_{68}\be_{67}$, $\be_{67}\be_{69}=-\be_{69}\be_{67}$,}\\[-1.5ex]
%\multicolumn{4}{l}
%{\small and $\be_{68}\be_{69}=-\be_{69}\be_{68}$.}\\
%%%%%%%%%%%%%%%%%%%%%%%%%%%%%%%%%%%%%%%%%%%%%%%%%%%%%%%%%
%\multicolumn{4}{l}
%{\small Note: $J_3 =\langle \be_{45},\be_{56},\be_{57}\rangle < G_{7,2}(f),$ $|J_3|=16$, $|\be_{45}|=|\be_{56}|=|\be_{57}|=4$, 
%$\be_{45}\be_{56}=-\be_{56}\be_{45}$, $\be_{45}\be_{57}=-\be_{57}\be_{45}$,}\\[-1.5ex]
%\multicolumn{4}{l}
%{\small and $\be_{57}\be_{57}=-\be_{57}\be_{56}$.}\\
%%%%%%%%%%%%%%%%%%%%%%%%%%%%%%%%%%%%%%%%%%%%%%%%%%%%%%%%%
\end{tabular}
\end{center}
%\caption{Here}
\end{table}
} %%% end \small for table5
\newpage

%%%%%%%%%%%%%%%%%%%%%%%%%%%%%%%%%%%%%%%%%%%%%%%%%%%%%%%%%

\end{document}